	\pgfplotsset{plot coordinates/math parser=false}
	\newtheorem{lemma}{Lemma}
	\newtheorem{theorem}{Theorem}
	\newtheorem{definition}{Definition}
	\newtheorem{proposition}{Proposition}
	\newtheorem{remark}{Remark}
	\setlist[itemize]{leftmargin=*}
\begin{document}
		%
		\title{Age of Information in Random Access Channels}
		%
		%
		%
		
		\author{Xingran~Chen,~\IEEEmembership{Student Member,~IEEE,}
			Konstantinos~Gatsis,~\IEEEmembership{Member,~IEEE,}
			Hamed~Hassani,~\IEEEmembership{Member,~IEEE,}
			and~Shirin~Saeedi~Bidokhti,~\IEEEmembership{Member,~IEEE}
			\thanks{X. Chen, H. Hassani, and S. Saeedi~Bidokhti are with the Electrical and System Engineering Department, University of Pennsylvania, Philadelphia, PA, 19104.\,\,
				E-mail: \{xingranc, hassani, saeedi\}@seas.upenn.edu.}
			\thanks{Konstantinos Gatsis is with the Department of Engineering Science, University of Oxford, OX1~2JD, United Kingdom.\,\,
				E-mail: Konstantinos.gatsis@eng.ox.ac.uk.}
			\thanks{Parts of this work were presented in 2020 IEEE International Symposium on Information Theory. }
		\thanks{This version has been accepted by IEEE  Transactions on Information Theory.}}
		
		%
		%

	\markboth{IEEE Transactions on Information Theory,~Vol.~, No.~,  ~2022}%
	{Chen \MakeLowercase{\textit{et al.}}: IEEE Transactions on Information Theory}
	%



	\maketitle
	
	\begin{abstract}
		In applications of remote sensing, estimation, and control, timely communication is critical but not always ensured by high-rate communication. This work proposes decentralized age-efficient transmission policies for random access channels with $M$ transmitters. We propose the notion of \emph{age-gain} of a packet to quantify how much the packet will reduce the instantaneous age of information at the receiver side upon successful delivery. We then utilize this notion to propose a  transmission policy in which transmitters act in a decentralized manner based on the age-gain of their available packets. In particular, each transmitter sends its latest packet only if its corresponding age-gain is beyond a certain threshold which could be computed adaptively using the collision feedback or found as a fixed value analytically in advance. Both methods improve age of information significantly compared to the state of the art. In the limit of large $M$, we prove that when the arrival rate is  small (below $\frac{1}{eM}$),  slotted ALOHA-type algorithms are order optimal. As the arrival rate increases beyond $\frac{1}{eM}$, while age  increases under slotted ALOHA, it decreases significantly under the proposed age-based policies. For arrival rates $\theta$, $\theta=\frac{1}{o(M)}$, the  proposed algorithms provide a multiplicative gain of at least two compared to the minimum age under slotted ALOHA (minimum over all arrival rates). We conclude that it is beneficial to increase the sampling rate  (and hence the arrival rate) and transmit packets selectively based on their age-gain.  This is surprising and contrary to common practice where the arrival rate is optimized to attain the minimum AoI. We further extend our results to other random access technologies such as Carrier-sense multiple access (CSMA).
	\end{abstract}
	
	\begin{IEEEkeywords}
		Age of Information, Random Access, Collision Channel, Distributed Algorithms, Stochastic Arrival, Slotted ALOHA, Carrier Sensing Multiple Access.
	\end{IEEEkeywords}

	%
	\IEEEpeerreviewmaketitle

	\section{Introduction}
	\label{sec:intro}
	\IEEEPARstart{C}{ommunication} networks have witnessed a rapid growth in the past few decades and they have laid a path to the integration of intelligence into  cyber-physical systems, the Internet of Things, smart cities, as well as healthcare systems. Today,  state-of-the-art network communication strategies are  considered reliable and high speed; nevertheless, they often do not perform satisfactorily for time-sensitive applications. For example, in applications of remote sensing, estimation, and control, high-rate communication does not ensure timely communication of data. As a matter of fact, it is often observed that as the capacity of a system is approached, the delay increases significantly and hence so does the age of information.
	
	Age of information (AoI), introduced in \cite{S.Kaul2011-1, S.Kaul2011-2}, measures the freshness of information at the receiver side. AoI is a function of both how often packets are transmitted and how much delay packets experience in the communication network. 
	When the rate of communication is low,  the receiver's AoI will increase (implying that the receiver's information is stale) because the transmitter is not sending packets frequently enough. But even when the transmitter is sending packets frequently, if the system design imposes a large delay for the packets,  the information at the receiver will still be stale.
	The metric of AoI is of great importance in the Internet of Things applications where  timeliness of information is crucial (e.g. in monitoring the status of a system). Another interesting application domain of AoI is in communication for estimation and control \cite{TSSKY2018, YSun2018} where estimation error increases (exponentially) by time before new packets (samples) are received at the destination. It is believed that minimizing AoI is a good proxy for minimizing estimation error \cite{YSun2018, XZMMVWCUM2021, XCXLSSB2020}.

	Assuming  a first come first serve (FCFS) policy,  the work in \cite{KaulYatesGruteser12,YatesKaul19} show in queue theoretic setups that AoI is minimized at an optimal update rate.  Relaxing the restriction of FCFS policies, \cite{Costa16,YatesKaul19} propose packet management policies that discard old packets and improve AoI in wide regimes of operation. This already points to the fact that, under the metric of AoI, rate and reliability have little relevance in the design of timely communication schemes. This is because AoI implicitly assumes that the information content of the packets form a Markov process and hence fresh packets render older packets obsolete. 
	In the past few years, various extensions and  new dimensions have  also been studied in the paradigm of timely communication: source and channel coding were studied in \cite{YatesNajmSoljaninZhong17,Mayekar18,Devassy18,NajmTelatarNasser19}, multi-hop networks were studied in \cite{Buyukates18,8445981,Bedewy19-2}, and scheduling algorithms were studied in \cite{I.Kadota-2018,I.Kadota-2019-1,A.M.Bedewy-2019,Qinghe-2018,Y.-P.Hsu-2018,8807257,8476220}.

	This paper considers the problem of minimizing  age of information over a random access channel. This setup is particularly relevant in remote estimation and control of processes that are observed from decentralized sensors in wireless networks (see also our follow up work \cite{cxrinfocom}). 
	Prior work such as \cite{IKEM2021, I.Kadota-2018, 8807257, I.Kadota-2019-1} consider scheduling policies in multiple access channels that are controlled in a centralized manner.  However, in decentralized (random access) applications, employing such policies would require a huge amount of communication and coordination rendering  them inapplicable.
	Towards designing \emph{decentralized} algorithms for minimizing age of information, \cite{S.K.Kaul-2017,TalakKaramanModiano18} analyze stationary randomized policies under the assumption that  sources generate packets in every time slot (i.e., all sources are active at all times).  Considering the more realistic scenario where packets are generated at random times, \cite{A.Kosta-2019} analyzes round-robin scheduling techniques with and without packet management and also presents partial results for stationary randomized policies.
	Round-robin policies are proved to be age-optimal  in \cite{Z.Jiang-2018-3}  when the number of transmitters are large and the arrival rate is constant. 
	The followup work \cite{Z.Jiang-18-1} additionally assumes that nodes are provided with carrier sensing capabilities and proposes distributed schemes that have good performance in simulations; Nevertheless, \cite{Z.Jiang-18-1} does not address how the parameters of the proposed algorithms should be designed theoretically. The concurrent work \cite{sleepwakeconference, AMBYSRSNBS2021} (published after our work \cite{ISITage}) investigate variants of decentralized age-based schemes for CSMA under energy constraints. In an unslotted, uncoordinated, unreliable multiple access collision channel, \cite{RDYates2020} provides the exact system age and an accurate individual age approximation for a small number of sources. The work \cite{DCAEUOK2020} which was done independently and concurrently to this work considers a threshold-based lazy version of Slotted ALOHA where each transmitter attempts to access the channel with a certain probability when its corresponding age exceeds a certain threshold. Optimizing the threshold and the transmission probabilities are non-trivial and the authors provide analysis only for $M=2$ transmitters for the special case where the arrival rate is equal to $1$.

	In this work, we design decentralized age-based transmission policies  and provide upper and lower bounds on the achievable AoI in interesting regimes of operation. 
	The major part of this paper deals with  random access technologies such as slotted ALOHA that do not assume carrier sensing capabilities. The underlying reason  is threefold: (i) Status packets are generally very short  (as opposed to traditional settings such as streaming where packets are long) and so CSMA is not efficient; (ii) Transmitters have low power capabilities. As such, it is not very efficient (in terms of energy and cost) to perform carrier sensing when the arrival rate is large and CSMA is not particularly useful when the arrival rate is small. More importantly, since transmission power is low, the hidden node problem will be a major issue under CSMA-type protocols; (iii) Our analytical results  are clearer without the additional complexity of CSMA. In Section \ref{sec:extension}, we describe how our findings generalize and apply to CSMA.
	
	The contributions of this paper are as follows. In presenting our results below, we assume large symmetric networks in which we have $M$ transmitters and each transmitter has  arrival rate $\theta$.  The key ideas are summarized in Table~\ref{tabla: ideas}.
	
	\begin{itemize}
		\item We first derive two general lower bounds on AoI for any transmission policy by considering two ideal cases: (i) there is always a fresh packet to be transmitted and hence delivered packets are assumed to experience minimum delay; (ii) all packets are delivered instantaneously upon their arrivals with minimum delay, but without experiencing collisions. The former lower bound turns out to be active as the arrival rate ($\theta$) approaches $1$, and the latter lower bound becomes active when $\theta$ is small, i.e., when the inter-arrival time is the dominant term of the inter-delivery time.
		\item We analyze the well-known slotted ALOHA algorithm. It is known that slotted ALOHA is stable  when the sum arrival rate is below the infamous critical point $\frac{1}{e}$. But it becomes unstable when the sum arrival rate is larger than $\frac{1}{e}$. We prove that when the sum arrival rate is below $\frac{1}{e}$, the normalized age performance of a (stabilized) slotted ALOHA algorithm, properly defined later,  is approximate $\frac{1}{M\theta}$ in the limit of large $M$ and is optimal. 
		We further show numerically that the normalized age performance is close to that of centralized max-weight policies that schedule based on {\it age-gain} (which is formally defined in Section~\ref{sec: Centralized scheduling}) when the sum arrival rate is less than $\frac{1}{e}$. Simulation results show that as the sum arrival rate increases beyond this critical point, the normalized age of slotted ALOHA explodes.
		\item  We then ask if  we can reduce age as the sum arrival rate increases beyond the critical point $\frac{1}{e}$. This is an important question that sheds light on whether increasing the sampling rate is useful when communication is over a random access channel. We find an affirmative answer. We propose two age-based thinning algorithms, i.e., Algorithm~\ref{alg: Updated Threshold-slotted ALOHA} and Algorithm~\ref{alg: Limit Threshold-slotted ALOHA}. The core idea for both algorithms is that transmitters selectively disregard packets in order to mimic an effective (sum) arrival rate equal to $\frac{1}{e}$. In particular, we develop a threshold policy that can be implemented in a decentralized manner at the transmitters and in which packets that offer large age-gains are transmitted and those that offer small age-gains are disregarded. In Algorithm~\ref{alg: Updated Threshold-slotted ALOHA} we propose an adaptive threshold in which the threshold is updated and improved based on the channel feedback. Algorithm~\ref{alg: Limit Threshold-slotted ALOHA} proposes a stationary threshold, in which the threshold is predetermined and thus saves computation costs.  Using Algorithm~\ref{alg: Limit Threshold-slotted ALOHA}, i.e., the stationary thinning method, we prove asymptotically ($M\to\infty$) that for any $\theta$ that is not too small $\left(\theta=\frac{1}{o(M)}\right)$, the normalized age is approximate $\frac{e}{2}$ and twice better than that the minimum age that  (stabilized) slotted ALOHA can attain. Furthermore, numerical results show  that as $\theta$ approaches  $1$, the normalized age approaches $1$ using Algorithm~\ref{alg: Updated Threshold-slotted ALOHA} (the adaptive thinning method) that adaptively optimizes the threshold  in each time slot. Interestingly, we observe that the adaptive thinning algorithm attains a smaller age while increasing the throughput beyond what slotted ALOHA can achieve.
		\item Finally, we generalize our stationary thinning mechanism (Algorithm~\ref{alg: Limit Threshold-slotted ALOHA}), and demonstrate that the idea behind Algorithm~\ref{alg: Limit Threshold-slotted ALOHA} is useful for other random access technologies (e.g. CSMA), see Algorithm~\ref{alg: General Threshold-slotted ALOHA}. In particular, we prove that given a technology that can achieve the throughput $C$ (without coding),  Algorithm~\ref{alg: General Threshold-slotted ALOHA} can attain the normalized age  of $\frac{1}{2C}$.  Numerical results show that it approaches order-optimality in the limit of large $M$. 
	\end{itemize}
	
	\begin{table*}[!htbp]
		\centering
		\resizebox{\textwidth}{12mm}{
			\begin{tabular}{|c|c|c|}
				\hline  
				Algorithms or Bounds & Key ideas& Normalized age performance ($M\to\infty$)\\
				\hline
				Proposition~\ref{pro: lowerbound1} & There is always a fresh packet to be transmitted& Lower bound $\frac{1}{2C_{RA}}$;  $C_{RA}$ is the capacity of the RA channel\\
				\hline
				Proposition~\ref{pro: lowerbound2} & All packets are delivered upon  arrival & Lower bound $\frac{1}{\theta M}$; tight when $\theta < \frac{1}{eM}$\\
				\hline  
				Slotted ALOHA& See details in \cite[Chapter~4.2.3]{B-D.Bertsekas}& Normalized age $\frac{1}{\theta M}$ when $\theta <\frac{1}{eM}$\\
				\hline
				Algorithm~\ref{alg: Updated Threshold-slotted ALOHA}& Adaptive age-based thinning (ALOHA)&
				Decreases age and increases throughput simultaneously\\
				\hline
				Algorithm~\ref{alg: Limit Threshold-slotted ALOHA}& Stationary age-based thinning (ALOHA) & Normalized age $\frac{e}{2}$ for $\theta=\frac{1}{o(M)}$\\
				\hline
				Algorithm~\ref{alg: General Threshold-slotted ALOHA} & Stationary age-based thinning (RA  with maximum throughput $C$)& Normalized age $\frac{1}{2C}$ for $\theta=\frac{1}{o(M)}$\\
				\hline 
		\end{tabular}}
		\caption{Summary of the proposed algorithms and bounds}\label{tabla: ideas}
	\end{table*}

	The rest of the paper is organized as follows.  Section~\ref{sec: System Model and Notation} introduces the system model and notations.  Section~\ref{sec: Lower Bound}  provides lower bounds on NAAoI and Section~\ref{sec: Centralized scheduling} proposes  centralized Max-Weight  scheduling policies to avoid collisions and ensure small NAAoI. Section \ref{sec: Distributed Age-Based Policies} introduces novel decentralized age-based policies and provides asymptotic analysis of their corresponding NAAoI (as $M\to\infty$). In Section~\ref{sec: Numerical results and discussions}, we numerically compare the achievable age of the proposed distributed transmission policies with centralized policies as well as the derived lower bounds and demonstrate that our asymptotic results hold approximately for moderate values of $M$ as well. We finally conclude in Section~\ref{sec: future serach} and discuss future research directions.

	\section{System Model and Notation}
	\label{sec: System Model and Notation}
	We consider a wireless architecture where a controller monitors the status of $M$ identical source nodes over a shared wireless medium. To provide analytical frameworks and closed form solutions, we focus on the symmetric systems (instead of asymmetric ones), and use the profile of all sources as an estimate on an individual source and look at the limit behaviour. Let time be slotted. At the beginning of every slot $k$, $k=1,2,\ldots$, the source node $i$, $i=1,\ldots,M$, generates a new  packet encoding information about its current status with probability $\theta$ and this packet becomes available at the transmitter immediately. We denote this generation/arrival process at the transmitter by $A_i(k)$, where $A_i(k)=1$ indicates that  a new packet is generated at time slot $k$ and $A_i(k)=0$ corresponds to the event where there is no new update. New packets are assumed to replace undelivered older packets at the source (i.e., older packets are discarded), relying on the fact that the underlying processes that are monitored in physical systems are oftentimes Markovian\footnote{We show in Appendix~\ref{App: Proof of first lemma} that this assumption can be made without loss of generality when the performance measure is Age of Information.}.

	The communication media is modeled by a collision channel: If two or more source nodes transmit  at the beginning of the same slot, then the packets interfere with each other (collide) and do not get delivered at the receiver. We use the binary variable $d_i(k)$  to indicate whether a packet is transmitted from source $i$ and received at the destination in time slot $k$. Specifically, $d_i(k)=0$ if source $i$ does not transmit at the beginning of time slot $k$ or if collision occurs; $d_i(k)=1$ otherwise. 
	
	We assume a delay of one time unit in the delivery of packets, meaning that packets are transmitted at the beginning of time slots and, if there is no collision, they are delivered at the end of the same time slot. We  assume that all transmitters are provided with channel collision feedback at the end of each time slot. Specifically, at the end of time slot $k$, $c(k)=1$ if collision happened and $c(k)=0$ otherwise.  In the event that collision occurs,  the involved transmitters can keep the undelivered packets and retransmit them according to their transmission policy (until the packets are successfully delivered or replaced by new packets).

	Our objective is to design \emph{decentralized} transmission mechanisms  to minimize time-average age of information per source node. A decentralized transmission policy is one in which the decision of  transmitter $i$ at time $k$ is dependent only on its own history of actions, the packets arrived so far, $\{A_i(j)\}_{j=1}^k$, as well as the collision feedback received so far, $\{c(j)\}_{j=1}^{k-1}$.
	
	The measure of performance in this work is Age of Information (AoI). Originally defined in  \cite{S.Kaul2011-1, S.Kaul2011-2}, AoI captures the timeliness of information at the receiver side.  We extend the definition a bit further, formally defined below, to also account for the age of information at the source side. Aging at the source/transmitter is caused by the \emph{stochastic nature of arrivals}.
	\begin{definition}\label{def: AoI}
		Consider a source-destination pair. Let $\{k_\ell\}_{\ell\geq 1}$ be the sequence of generation times of packets  and $\{k_\ell'\}_{\ell\geq 1}$ be the sequence of times at which those packets are received at the destination. At any time $\tau$, denote  the index of the last generated packet by $n_s(\tau)=\max\{\ell|k_\ell\leq\tau\}$ and the index of the last received packet by $n_d(\tau)=\max\{\ell|k_\ell'\leq\tau\}$.  The \emph{source's age of information} is defined by $w(k)=k-k_{n_s(k)}$ and the \emph{destination's age of information} is defined by by $h(k)=k-k_{n_d(k)}$.
	\end{definition}
	
	It is clear from the above definition that once there is a new packet available at the transmitter, the older packet(s) cannot contribute to reducing the age of the system. We hence assume without loss of generality that buffers at transmitters are of size 1 and new packets replace old packets upon arrival. 
	We formalize and prove this claim in Appendix~\ref{App: Proof of first lemma}.

	Following Definition~\ref{def: AoI}, let $h_i(k)$ denote the destination's AoI at time slot~$k$ with respect to source~$i$. The age $h_i(k)$ increases linearly as a function of $k$ when there is no packet delivery from source~$i$ and it drops with every delivery to a value that represents how old the received packet is; within our framework, this would be the corresponding source's AoI (in previous time slot) plus 1.
	Without loss of generality, we assume $w_i(1)=0$ and $h_i(1)\geq 0$, and write the recursion of AoI as follows:
	\begin{align}\label{eq: h_i}
		h_i(k)=\left\{\begin{aligned}
			&w_i(k-1)+1&\quad& d_i(k-1)=1\\
			&h_i(k-1)+1&\quad& d_i(k-1)=0
		\end{aligned}\right.
	\end{align}
	and
	\begin{align}\label{eq: w_i}
		w_i(k)=\left\{\begin{aligned}
			&0&\quad& A_i(k)=1\\
			&w_i(k-1)+1&\quad& A_i(k)=0.
		\end{aligned}\right.
	\end{align}
	Note that at the beginning  of each time slot $k$, given the collision feedback $\{c(j)\}_{j\leq k-1}$ and local information about $\{A_i(j)\}_{j\leq k}$,  transmitter $i$ can compute its corresponding source's AoI $\{w_i(j)\}_{j\leq k}$ and destination's AoI $\{h_i(j)\}_{j\leq k}$.

	We define the Normalized Average AoI (NAAoI) as our performance metric of choice\footnote{For any distributed transmission scheme, it is clear that the average AoI increases with the number of source node $M$ for any fixed arrival rate $\theta$. Note that our problem setup allows $M$ to become very large, so to offset the effect introduced by the number of source nodes, we consider the proposed NAAoI.}:
	\begin{equation}\label{eq: EWSAoI-1}
		\small
		\begin{aligned}
			J^{\pi}(M)=\lim_{K\to\infty}\mathbb{E}[J_{K}^{\pi}],\,\, J_{K}^{\pi}=\frac{1}{MK}\sum_{i=1}^{M}\sum_{k=1}^{K}\frac{h_{i}^{\pi}(k)}{M}
		\end{aligned}
	\end{equation}
	where  $\pi$ refers to the underlying transmission policy.

	We consider \emph{centralized policies} and \emph{decentralized age-based policies} in this work. Centralized policies serve as benchmarks. They need a central scheduler who receives information about all arrival processes and previous transmission actions, and  coordinate all the transmitters. When the number of transmitters $M$ gets large, facilitating such scales of coordination is not feasible and we are hence interested in decentralized mechanisms. Randomized policies are easy to implement in a decentralized manner. Previous works \cite{S.K.Kaul-2017,TalakKaramanModiano18} fall into this class but they have the weakness of not utilizing local collision feedback at the transmitters. Utilizing the collision feedback, we aim to make age-based decisions at the transmitters in a decentralized manner.

	\subsection{Notation}
	We use the notations $\mathbb{E}[\cdot]$ and $\Pr(\cdot)$ for expectation and probability, respectively. We denote scalars with lower case letters, e.g. $s$; vectors with underlined lowercase letters, e.g. $\underline{s}$, and matrices with boldface capital letters, e.g. ${\bf S}$. Notation $[\underline{s}]_i$ represents the $i^{th}$ element of $\underline{s}$ and $[{\bf S}]_{ij}$ denotes the element in the $i^{th}$ row and $j^{th}$ column. Random variables are denoted by capital letters, e.g. $S$. We use $M$ to denote the number of transmitters, $K$ to denote the time horizon, and $C$ to denote the capacity of a channel (under a given technology). The operator $(s)^+$ returns $0$ if $s< 0$ and it returns $s$ if $s\geq 0$. $\lfloor s\rfloor$ represents the largest integer $j$ such that $j\leq s$. $O(\cdot)$ and $o(\cdot)$ represent the  Big O and  little o notations according to Bachmann-Landau notation, respectively.   We summarize  the notations in Table~\ref{tabla: notations}.
	\begin{table}[!htbp]
		\centering
		\begin{tabular}{|c|c|}
			\hline  
			$M$& The number of sources\\
			\hline
			$K$ & The time horizon\\
			\hline  
			$\theta$& The generation/arrival rate of new packets\\
			\hline
			$A_i(k)$ & The indicator of the generation/arrival process\\
			\hline
			$d_i(k)$ & The indicator of delivery at source $i$\\
			\hline
			$\lambda_i(k)$ & The indicator of transmission at source $i$\\
			\hline
			$c(k)$ & The indicator of collision in the channel\\
			\hline
			$h_i(k)$ & The destination's AoI at time $k$ w.r.t source $i$ \\
			\hline
			$w_i(k)$ & The source's AoI at time $k$ w.r.t source $i$\\
			\hline
			$\pi$ & A specific transmission and sampling policy\\
			\hline
			$J^\pi(M)$ & Normalized Average AoI with $M$ sources\\
			\hline
			$C_{RA}$ & The sum-capacity of the random access channel\\
			\hline
			$\delta_i(k)$ & The age-gain in time slot $k$ at source $i$\\
			\hline
			$\{\ell_m(k)\}_m$ & The distribution of age-gain in time slot $k$\\
			\hline
			${\tt T}(k)$ & The threshold under the AAT policy at time $k$\\
			\hline
			${\tt T}^*$ & The threshold under the SAT policy\\
			\hline 
		\end{tabular}
		\caption{Useful Notations}\label{tabla: notations}
	\end{table}

	\section{Lower Bound}
	\label{sec: Lower Bound}
	We start by deriving two lower bounds on the achievable age performance. The first lower bound is derived by assuming that there is always a fresh packet to be transmitted (and hence delivered packets are assumed to experience unit-time delays). The second lower bound is derived by assuming that all packets are delivered instantaneously  upon their arrivals (with unit-time delays, but without experiencing collisions). The former is active as $\theta$ approaches $1$ and the latter is active when $\theta$ is small (when the inter-arrival time is the dominant term of the inter-delivery time).

	\begin{figure}[t!]
		\centering
		\includegraphics[width=3in]{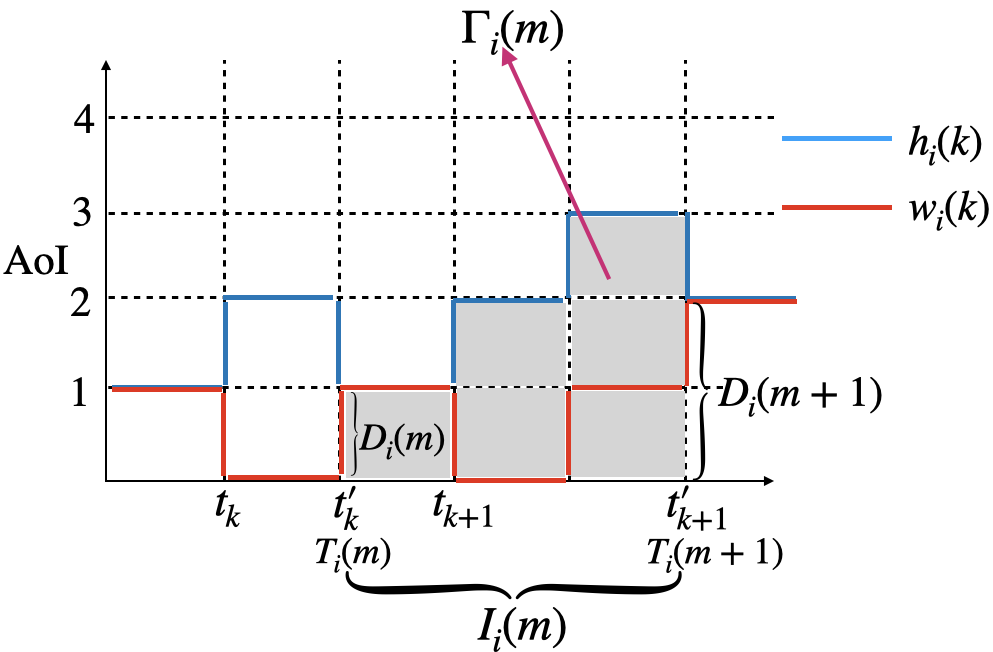}
		\caption{an example of $D_i(m)$, $I_i(m)$, and $\Gamma_i(m)$}
		\label{fig:DI}
	\end{figure}

	Fix a large time horizon $K$ and look at the packets of source $i$. Let $N_i(K)$ denote the number of delivered packets (from source $i$) up to and including time slot $K$. Now consider the $m^{th}$ and $(m+1)^{th}$ \emph{deliveries}  at the receiver and denote the delivery time of them at the receiver by $T_i(m)$ and $T_i(m+1)$, respectively.
	The inter-delivery time $$I_i(m)=T_i({m+1})-T_i(m)$$ is the time between these two consecutive deliveries. Upon arrival of the $m^{th}$ delivered packet at the receiver, the age of information at the receiver drops to the value $D_i(m)$ which represents how much delay the packet has experienced in the system.  Fig.~\ref{fig:DI} illustrates the introduced notation. 
	Now define $\Gamma_i(m)$ as the sum of age functions $h_i(k)$, where $k$ is in the interval $[T_i(m),T_i(m+1))$: 
	\begin{equation}\label{eq:gamma}
		\begin{aligned}
			\Gamma_i(m)=&\sum_{k=T_i(m)}^{T_i(m)+I_i(m)-1}h_i(k)\\
			=&\frac{1}{2}\big(D_i(m)+I_i(m)+D_i(m)\big)\cdot I_i(m)\\
			&-\frac{I_i(m)}{2}\\
			=&\frac{1}{2}I^2_i(m)-\frac{1}{2}I_i(m)+D_i(m)I_i(m).
		\end{aligned}
	\end{equation}
	
	It follows that in the limit of large $K$, we have
	\begin{equation*}
		\small
		\begin{aligned}
			J^{\pi}(M)=\lim_{K\to\infty}\mathbb{E}[J^{\pi}_K]=&\lim_{K\to\infty}\mathbb{E}\left[\frac{1}{M^2}\sum_{i=1}^{M}\frac{1}{K}\sum_{m=1}^{N_i(K)}\Gamma_i(m)\right].
		\end{aligned}
	\end{equation*}
	Using this formulation, we next  lower bound NAAoI. Let $C_{RA}$ denote the sum-capacity of the underlying random access channel. Note that in the limit of large $K$, $\frac{N_i(K)}{K}$ is the throughput of transmitter $i$ and
	\begin{align}\label{eq: sum-capacity}
		\lim_{K\to\infty}\sum_{i=1}^{M}\frac{N_i(K)}{K}\leq C_{RA}.
	\end{align}
	Then, we have the following propositions.
	\begin{proposition}\label{pro: lowerbound1}
		For any transmission policy $\pi$,
		\begin{align*}
			J^{\pi}(M)\geq \frac{1}{2C_{RA}}+\frac{1}{2M}.
		\end{align*}
	\end{proposition}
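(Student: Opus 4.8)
The plan is to bound each interval contribution $\Gamma_i(m)$ from below, turn the resulting sum into a function of the per-transmitter throughputs by Cauchy--Schwarz, and then close with a convexity argument that invokes the sum-capacity constraint \eqref{eq: sum-capacity}.

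First I would exploit that every delivered packet incurs at least the unit transmission delay, so $D_i(m-1)\geq 1$ for all $m$. Substituting this into \eqref{eq:gamma} gives
\begin{align*}
\Gamma_i(m)\ \geq\ \tfrac12 I_i^2(m)-\tfrac12 I_i(m)+I_i(m)\ =\ \tfrac12 I_i^2(m)+\tfrac12 I_i(m).
\end{align*}
Summing over $m=1,\dots,N_i(K)$ and applying Cauchy--Schwarz, $\sum_m I_i^2(m)\geq (\sum_m I_i(m))^2/N_i(K)$. Since the inter-delivery times partition the horizon up to boundary effects (the fragments before the first and after the last delivery), $\sum_{m=1}^{N_i(K)} I_i(m)=K\,(1-o(1))$ whenever transmitter $i$ has strictly positive throughput; if its throughput is $0$ then $J^\pi(M)=\infty$ and the claimed bound is trivial. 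Hence
\begin{align*}
\frac1K\sum_{m=1}^{N_i(K)}\Gamma_i(m)\ \geq\ \frac{K}{2N_i(K)}\,(1-o(1))+\frac12\,(1-o(1)).
\end{align*}

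Next I would pass to the limit $K\to\infty$. Writing $r_i=\lim_{K\to\infty} N_i(K)/K$ for the throughput of transmitter $i$ (using $\liminf$ if one prefers to avoid assuming the limit exists), the identity $J^{\pi}(M)=\lim_{K\to\infty}\mathbb{E}\big[\tfrac{1}{M^2}\sum_i \tfrac1K\sum_m \Gamma_i(m)\big]$ together with the previous bound yields
\begin{align*}
J^\pi(M)\ \geq\ \frac1{M^2}\sum_{i=1}^M\left(\frac1{2r_i}+\frac12\right)\ =\ \frac1{2M^2}\sum_{i=1}^M\frac1{r_i}+\frac1{2M}.
\end{align*}
Finally, convexity of $x\mapsto 1/x$ (Jensen) gives $\frac1M\sum_i \frac1{r_i}\geq (\frac1M\sum_i r_i)^{-1}\geq M/C_{RA}$ by \eqref{eq: sum-capacity}, so $\frac1{2M^2}\sum_i\frac1{r_i}\geq \frac1{2C_{RA}}$, which completes the argument.

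The main obstacle is purely the analytic bookkeeping around the passage to the limit: making rigorous that the boundary fragments of the age curve are negligible relative to $K$, justifying the interchange of expectation and limit, and ensuring the per-transmitter throughputs $r_i$ are well defined (or handling them via $\liminf/\limsup$). Once those technical points are settled, the core of the proof is just the one-line lower bound on $\Gamma_i(m)$ followed by Cauchy--Schwarz and Jensen.
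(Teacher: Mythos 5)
Your proposal is correct and follows essentially the same route as the paper: lower-bounding $\Gamma_i(m)$ via $D_i(m-1)\geq 1$, applying Cauchy--Schwarz to the inter-delivery times to reach $\frac{1}{2M^2}\sum_i K/N_i(K)+\frac{1}{2M}$, and then converting $\sum_i 1/r_i$ into $M^2/C_{RA}$ via the sum-capacity constraint. The only cosmetic difference is at the last step, where the paper uses Cauchy--Schwarz in the form $\mathbb{E}\left[\sum_i N_i(K)/K\right]\mathbb{E}\left[\sum_i K/N_i(K)\right]\geq M^2$, handling the randomness of $N_i(K)$ and the averaging over $i$ simultaneously, whereas you use Jensen/AM--HM on deterministic limits $r_i$; just make sure the convexity step is also applied over the probability space, since the throughputs are random.
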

	\begin{proof}
		The proof is given in Appendix~\ref{App: lowerbound1}.
	\end{proof}
	\begin{proposition}\label{pro: lowerbound2}
		For any transmission policy $\pi$, 
		\begin{align}\label{eq:lbMtheta}
			J^{\pi}(M)\geq\frac{1}{M\theta}.
		\end{align}
	\end{proposition}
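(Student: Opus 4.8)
The plan is to lower bound the destination age $h_i(k)$ by the age one would see in an idealized system where every generated packet is delivered immediately (subject only to the one‑slot delay and without collisions), and then to evaluate the resulting time average exactly. The key observation is that this idealized age equals $w_i(k-1)+1$, i.e.\ the value attained in \eqref{eq: h_i} if a delivery occurred in every slot.

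\emph{Step 1 (the domination bound).} First I would prove, by induction on $k$ and using only the recursions \eqref{eq: h_i} and \eqref{eq: w_i}, that for \emph{every} transmission policy and every sample path,
\[
h_i(k)\ \ge\ w_i(k)\quad\text{for } k\ge 1,\qquad h_i(k)\ \ge\ w_i(k-1)+1\quad\text{for } k\ge 2 .
\]
The base case follows from $w_i(1)=0$ and $h_i(1)\ge 0$. For the inductive step: if $d_i(k)=1$ then $h_i(k+1)=w_i(k)+1$; if $d_i(k)=0$ then $h_i(k+1)=h_i(k)+1\ge w_i(k)+1$ by the hypothesis $h_i(k)\ge w_i(k)$; either way $h_i(k+1)\ge w_i(k)+1$, which is the second inequality at $k+1$. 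The first inequality at $k+1$ then follows by cases: if $A_i(k+1)=1$ then $w_i(k+1)=0\le h_i(k+1)$, and if $A_i(k+1)=0$ then $w_i(k+1)=w_i(k)+1\le h_i(k+1)$. Conceptually this just says a delivered packet can be neither newer than the last arrival nor able to skip the unit delay, so the instantaneous‑delivery system is a genuine lower bound.

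\emph{Step 2 (time average of the source age).} Since arrivals are i.i.d.\ $\mathrm{Bernoulli}(\theta)$ and $w_i(1)=0$, from \eqref{eq: w_i} we get $\Pr(w_i(k)\ge t)=(1-\theta)^t$ for $1\le t\le k-1$ and $0$ for $t\ge k$, hence $\mathbb{E}[w_i(k)]=\sum_{t=1}^{k-1}(1-\theta)^t\to \frac{1-\theta}{\theta}$, and so $\frac1K\sum_{k=1}^{K}\mathbb{E}[w_i(k)]\to\frac{1-\theta}{\theta}$ by Cesàro. \emph{Step 3 (assembling).} Taking expectations in Step 1 and averaging,
\[
\mathbb{E}[J_K^{\pi}]=\frac{1}{M^2K}\sum_{i=1}^{M}\sum_{k=1}^{K}\mathbb{E}[h_i(k)]\ \ge\ \frac{1}{M^2K}\sum_{i=1}^{M}\Big((K-1)+\sum_{k=2}^{K}\mathbb{E}[w_i(k-1)]\Big),
\]
and letting $K\to\infty$ gives $J^{\pi}(M)\ge \frac{1}{M^2}\,M\,\big(1+\tfrac{1-\theta}{\theta}\big)=\frac{1}{M\theta}$.

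The only real obstacle is Step 1 — establishing $h_i(k)\ge w_i(k-1)+1$ under an arbitrary (possibly adversarial) policy; note that the superficially appealing bound $h_i(k)\ge w_i(k)+1$ is actually \emph{false} (a packet generated and delivered in slot $k-1$ gives $h_i(k)=1$ even when $w_i(k)=1$), which is why the induction must carry the pair of inequalities simultaneously. After that, everything reduces to the textbook average‑age computation for geometric inter‑arrival times; the only minor care needed is that exchanging the limit with the finite sums is harmless (all summands are nonnegative), and that if the per‑source time averages fail to converge the argument still bounds their $\liminf$, which suffices for the statement.
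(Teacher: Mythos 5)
Your proof is correct, and it reaches the same benchmark as the paper — the age of an idealized system in which every arrival is delivered with exactly one slot of delay — but it gets there by a genuinely different and, in one respect, more careful route. The paper simply \emph{asserts} that instantaneous delivery lower-bounds the true age, and then evaluates that benchmark via the sawtooth-area decomposition $\Gamma_i(m)$ of \eqref{eq:gamma} together with the renewal-reward identity $\mathbb{E}[H_i]=\frac{\mathbb{E}[X^2]}{2\mathbb{E}[X]}+\frac12$ for geometric inter-arrival times $X$, which evaluates to $\frac{1}{\theta}$ per source. You instead make the domination rigorous as a pathwise statement, proving by induction from \eqref{eq: h_i}--\eqref{eq: w_i} that $h_i(k)\ge w_i(k-1)+1$ under \emph{any} policy (correctly noting that the naive $h_i(k)\ge w_i(k)+1$ fails), and then compute the time average directly from the stationary distribution of the backward recurrence time $w_i(k)$, giving $1+\frac{1-\theta}{\theta}=\frac{1}{\theta}$. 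What your version buys is a self-contained justification of the step the paper leaves implicit, and it avoids invoking renewal-reward convergence; what the paper's version buys is reuse of the $\Gamma_i(m)$ machinery already set up for Proposition~\ref{pro: lowerbound1}, so the two lower bounds come out of one framework. Both arguments are sound and yield $J^{\pi}(M)\ge\frac{1}{M\theta}$.
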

	\begin{proof}
		The proof is given in Appendix~\ref{App: lowerbound2}.
	\end{proof}
	Let us give an example of how Proposition \ref{pro: lowerbound1} can be utilized. Note that $C_{RA}$   is not known in general. Nevertheless, any upper bound on $C_{RA}$ gives a  lower bound on the normalized age. Based on \cite{Tsybakov}, the capacity of the random access channel with collision feedback, in the limit of large $M$, is upper bounded by
	$\lim_{M\to\infty}{C}_{RA}\leq 0.568$
	and hence 
	\begin{align}
		\label{eq:lowerboundaloha}
		\lim_{M\rightarrow\infty}J^{\pi}(M)\geq .88.
	\end{align}
	\begin{remark}
		The lower bound in \eqref{eq:lowerboundaloha}  does not assume CSMA capabilities. For CSMA, we have $C_{CSMA}\leq 1$ and hence
		\begin{align}
			\label{eq:lowerboundcsma}
			J^{\pi}(M)\geq \frac{1}{2}+\frac{1}{2M}.
		\end{align}
		We show the asymptotic optimality of this bound  in Section~\ref{sec:extension} as $M\to\infty$.
	\end{remark}

	\section{Centralized Scheduling}\label{sec: Centralized scheduling}
	The first class of schemes that we consider are centralized  schemes that avoid collision by scheduling transmitters one by one. In particular, Max-Weight policies are shown to perform close to optimal in various works such as \cite{I.Kadota-2018,I.Kadota-2019-1,IKEM2021}. Although such schemes are not practical (due to the scale of required coordination), it turns out that they provide useful intuitions and they also serve as a benchmark for comparison in Section \ref{sec: Distributed Age-Based Policies}.
	We assume a central scheduler that can observe all arrival processes and coordinate/control all senders' actions in order to avoid collision.
	
	Denote by $\lambda_i(k)=1$ the event that transmitter $i$ sends a packet and recall that $d_i(k)$ indicates delivery of packets. Note that if $\lambda_j(k)=1$ for another source $j\neq i$, then the packets collide and no packets will be delivered. One can thus write
	\begin{align}\label{eq: d_i}
		d_i(k)=\lambda_i(k)\prod_{j\neq i}\big(1-\lambda_{j}(k)\big).
	\end{align}
	The goal of a central scheduler is to select one source for transmission at each time. Denote $\underline{h}(k)=(h_1(k),h_2(k),\cdots,h_M(k))$. Following the works in \cite{I.Kadota-2019-1, I.Kadota-2018, IKEM2021}, an age-based max-weight  policy can be designed by considering the following  Lyapunov function:
	\begin{align}
		\mathcal{L}(\underline{h}(k))=\sum_{i=1}^{M} h_i(k)
	\end{align}
	and minimizing its corresponding one-step Lyapunov Drift:
	\begin{equation}\label{eq:drift}
		\begin{aligned}
			\Delta(\underline{h}(k))=&\mathcal{L}(\underline{h}(k+1))-\mathcal{L}(\underline{h}(k)).
		\end{aligned}
	\end{equation}
	
	It turns out that the max-weight policy selects, in each time slot $k$, the transmitter that offers the highest \emph{age-gain} $\delta_i(k)$, defined below:
	\begin{align}\label{eq: delta}
		\delta_i(k):=h_i(k)-w_i(k).
	\end{align} 
	$\delta_i(k)$ quantifies how much the instantaneous receiver's age of information reduces  upon successful delivery from transmitter~$i$.
	Proposition \ref{thm: MW} states the above max-weight policy more formally (see Appendix~\ref{APP: Proof of MW} for the proof).
	\begin{proposition}\label{thm: MW}
		For every time slot $k$, define 
		\begin{align}\label{def: MW}
			\ell(k)=\arg\max_i \delta_i(k).
		\end{align} 
		An optimal policy to minimize the one-step drift in \eqref{eq:drift} is to choose $\lambda_{\ell(k)}(k)=1$ and $\lambda_{j}(k)=0$ for all $j\neq \ell(k)$.
	\end{proposition}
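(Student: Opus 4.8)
The plan is to evaluate the one-step drift $\Delta(\underline{h}(k))$ explicitly as a function of the centralized scheduler's decision in slot $k$, and then read off the minimizer. First I would note that, by \eqref{eq: d_i}, activating two or more transmitters in the same slot causes a collision and yields $d_i(k)=0$ for every $i$, so a centralized scheduler that wishes to deliver a packet should activate exactly one transmitter; hence it suffices to compare the decisions ``activate only transmitter $\ell$'' (for each $\ell$) against the decision ``activate nobody'', the latter being outcome-equivalent to activating two or more. Also, all quantities $\{h_i(k)\}_i$ and $\{w_i(k)\}_i$ are known at the start of slot $k$ from the collision feedback $\{c(j)\}_{j\leq k-1}$ and the local arrivals $\{A_i(j)\}_{j\leq k}$, so $\Delta(\underline{h}(k))$ is a deterministic function of the decision and no conditional expectation is needed; moreover $\mathcal{L}$ depends only on $\underline{h}$, so the future arrivals that will shape $\underline{w}(k+1)$ do not enter the computation.

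Next I would substitute the recursion \eqref{eq: h_i} into $\mathcal{L}(\underline{h}(k+1))=\sum_{i=1}^M h_i(k+1)$. If only transmitter $\ell$ is activated, then $d_\ell(k)=1$ and $d_j(k)=0$ for $j\neq\ell$, so $h_\ell(k+1)=w_\ell(k)+1$ while $h_j(k+1)=h_j(k)+1$, giving
\begin{align*}
	\mathcal{L}(\underline{h}(k+1))&=\sum_{i=1}^M h_i(k)+M-\big(h_\ell(k)-w_\ell(k)\big)=\mathcal{L}(\underline{h}(k))+M-\delta_\ell(k),
\end{align*}
hence $\Delta(\underline{h}(k))=M-\delta_\ell(k)$. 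If nobody is activated (or a collision occurs), every $h_i$ increments by one, so $\Delta(\underline{h}(k))=M$.

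Finally I would invoke the elementary inequality $\delta_i(k)=h_i(k)-w_i(k)\geq 0$ for all $i$ — the last delivered packet of source $i$ cannot have been generated later than the last generated packet of source $i$, so $h_i(k)\geq w_i(k)$. Consequently $M-\delta_\ell(k)\leq M$, so activating a single transmitter is never worse than activating none, and among single-transmitter choices the drift $M-\delta_\ell(k)$ is minimized precisely by $\ell(k)=\arg\max_i \delta_i(k)$, which is the asserted policy; ties may be broken arbitrarily.

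The only place any care is required is the bookkeeping of the timing conventions: tracking that the drift at slot $k$ is governed by $\underline{h}(k+1)$, which depends on the decision through $d_i(k)$ and on the already-determined $w_i(k)$, and confirming that the nonnegativity $\delta_i(k)\ge 0$ is maintained by the recursions \eqref{eq: h_i}--\eqref{eq: w_i}. Beyond that the argument is a single substitution, so I do not expect a substantive obstacle.
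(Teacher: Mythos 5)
Your proposal is correct and follows essentially the same route as the paper's proof: observe that at most one delivery can occur per slot, compute that the one-step drift equals $M-\delta_\ell(k)$ when only transmitter $\ell$ successfully delivers and $M$ otherwise, and use $\delta_i(k)\geq 0$ to conclude that maximizing $\delta_\ell(k)$ is optimal. Your write-up is in fact more explicit than the paper's terse argument (which additionally notes that empty-queue nodes have $\delta_i(k)=0$ and are therefore irrelevant, a case your computation handles implicitly since $M-\delta_\ell(k)=M$ for such nodes).
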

	\begin{remark}
		We will show in Section \ref{sec: Distributed Age-Based Policies} how the notion of age-gain plays a central role also in the design of  distributed age-based policies.
	\end{remark}

	\section{Decentralized Age-Based Policies}\label{sec: Distributed Age-Based Policies}

	In this section, we propose a new class of decentralized policies designed to \emph{prioritize} transmissions for the purpose of minimizing age of information. In each time slot $k$, transmitter~$i$ decides whether or not to send its packet depending on its local AoI, and in particular, based on $\delta_i(k)$ \big(defined in \eqref{eq: delta}\big).

	To develop a deeper understanding of our proposed algorithm, let us focus on two regimes of operation assuming large~$M$:
	\begin{itemize}
		\item The regime of infrequent arrivals, where $\theta\leq\frac{1}{eM}$,
		\item The regime of frequent arrivals, where $\theta>\frac{1}{eM}$.
	\end{itemize}
	The choice of these two regimes is made based on the well-established performance of slotted ALOHA with respect to rate (throughput) \cite[Chapter~4]{B-D.Bertsekas}. As explained earlier in Section \ref{sec:intro}, we will first develop our framework for the slotted-ALOHA random access technology and then generalize to other random access technologies in Section \ref{sec:extension}.
	
	The basic idea of slotted ALOHA is  as follows: At every time slot $k$, transmitters send their packets immediately upon arrival unless they are ``backlogged" after a collision in which case they transmit with a backoff probability. In this section, we focus on Rivest's  stabilized slotted ALOHA \cite[Chapter~4]{B-D.Bertsekas}. In this algorithm, all  arrivals are regarded as backlogged nodes that transmit with the backoff probability $p_b(k)$. Let $c(k)=1$ denote the event that collision occurred at time $k$ and $c(k)=0$ denote the complementary event. The backoff probability is calculated through a  pseudo-Bayesian algorithm  based on an {\it estimate} of the number of backlogged nodes $n(k)$ (see \cite[Chapter~4.2.3]{B-D.Bertsekas}):
	\begin{equation}\label{eq: aloha}
		\begin{aligned}
			p_b(k)=&\min\big(1,\frac{1}{n(k)}\big)\\
			n(k)=&\left\{\begin{aligned}
				&n(k-1)+M\theta+(e-2)^{-1} &  & \text{if }c(k)=1\\
				&\max\big(M\theta, n(k-1)+M\theta-1\big) &  & \text{if } c(k)=0.
			\end{aligned}\right.
		\end{aligned}
	\end{equation}
	
	It is well known that this algorithm attains stability of queues for $\theta<\frac{1}{eM}$. In other words,  transmitters can reliably send packets with a sum-rate up to $\frac{1}{e}$ in a decentralized manner \cite[Chapter~4.2.3]{B-D.Bertsekas}.  Asymptotically, when $M\to \infty$,   the probability of delivering a packet in each time slot is $1/e$, the probability of collisions is  $1-2/e$, and the probability of having an idle channel is  $1/e$ (see Appendix~\ref{App: {lem: slotted aloha probabilities}}). Note that when $M\theta<\frac{1}{e}$, the expected total number of delivered packets in every time slot is $M\theta$.

	We  find the asymptotic NAAoI (in the limit of large $M$) in Theorem~\ref{thm: slotted aloha age e} below.  
	
	\begin{theorem}\label{thm: slotted aloha age e}
		Suppose $\theta<\frac{1}{eM}$ and define $$\eta=\lim_{M\rightarrow\infty}M\theta.$$
		Any stabilized slotted ALOHA scheme achieves
		$$\lim_{M\rightarrow\infty}J^{SA}(M) = \frac{1}{\eta}.$$ 	
		Moreover, (stabilized) slotted ALOHA is order optimal in terms of NAAoI. 
	\end{theorem}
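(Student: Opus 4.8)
The plan is to prove Theorem~\ref{thm: slotted aloha age e} in two parts: first establishing the achievability $\lim_{M\to\infty}J^{SA}(M)=\tfrac{1}{\eta}$ for stabilized slotted ALOHA, and then observing that this matches the lower bound of Proposition~\ref{pro: lowerbound2} in the regime $\theta\le\tfrac{1}{eM}$, which yields asymptotic optimality.

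For the achievability part, I would start from the renewal-reward representation $J^{SA}(M)=\lim_{K\to\infty}\mathbb{E}\big[\tfrac{1}{M^2}\sum_i \tfrac1K\sum_{m=1}^{N_i(K)}\Gamma_i(m)\big]$ derived in Section~\ref{sec: Lower Bound}, together with the formula $\Gamma_i(m)=\tfrac12 I_i^2(m)-\tfrac12 I_i(m)+D_i(m-1)I_i(m)$. By symmetry across the $M$ sources it suffices to analyze a single source. The key probabilistic input is the behavior of stabilized slotted ALOHA in the regime $M\theta\le\tfrac1e$: asymptotically the channel delivers a packet with probability $1/e$, collides with probability $1-2/e$, and is idle with probability $1/e$ (cited from Appendix~\ref{App: {lem: slotted aloha probabilities}}), and crucially, when $M\theta\le\tfrac1e$ the expected number of delivered packets per slot equals $M\theta$, so each individual source is served at rate $\theta$ (no backlog builds up, essentially every arrival is eventually — in fact almost immediately — delivered). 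I would argue that in this regime the inter-delivery time $I_i(m)$ is, to leading order, just the inter-arrival time of a Bernoulli($\theta$) process, so $\mathbb{E}[I_i(m)]\to 1/\theta$ and $\mathbb{E}[I_i^2(m)]\to 2/\theta^2$ as $\theta\to 0$ (geometric moments), while $D_i(m-1)$, the in-system delay of delivered packets, stays $O(1)$ (bounded in probability, since backoff delays are light in the subcritical regime). Plugging into $\Gamma_i$: the dominant term is $\tfrac12\mathbb{E}[I_i^2(m)]\sim 1/\theta^2$, and the number of renewals satisfies $N_i(K)/K\to\theta$. Hence $\tfrac1K\sum_{m=1}^{N_i(K)}\Gamma_i(m)\to \theta\cdot\tfrac{1}{\theta^2}=\tfrac1\theta$, and multiplying by $M$ sources and dividing by $M^2$ gives $\tfrac{1}{M\theta}\to\tfrac1\eta$. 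One must be careful that $\theta=\theta(M)\to 0$ while $M\theta\to\eta$, so all the asymptotic statements should be made uniformly enough that the limits interchange; a clean way is to show $M^2 J^{SA}(M)=\tfrac{1}{\theta}(1+o(1))$ directly by bounding the contributions of the $D_i$ and the $-\tfrac12 I_i$ terms as lower-order.

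For the optimality part, Proposition~\ref{pro: lowerbound2} gives $J^{\pi}(M)\ge\tfrac{1}{M\theta}$ for every policy $\pi$, hence $\liminf_{M\to\infty}J^{\pi}(M)\ge\tfrac1\eta$; since stabilized slotted ALOHA attains $\tfrac1\eta$, it is asymptotically optimal. This part is essentially immediate once achievability is in hand.

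The main obstacle I anticipate is making the single-source inter-delivery analysis rigorous \emph{uniformly in the coupled limit} $\theta(M)\to0$, $M\theta(M)\to\eta$: one has to control the tail of $D_i(m)$ (the queueing/backoff delay experienced by a tagged packet) and show it contributes only $o(1/\theta^2)$ to $\Gamma_i$, and show that occasional collisions do not inflate $\mathbb{E}[I_i^2]$ beyond $\tfrac{2}{\theta^2}(1+o(1))$. Intuitively this holds because below the critical throughput the stabilized ALOHA chain is positive recurrent with a bounded backlog estimate, so delays have exponential tails with $M$-independent rate; formalizing this likely requires a drift/Lyapunov argument on $n(k)$ or a direct coupling with an M/M/1-type queue, and citing the stability results for Rivest's scheme. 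Everything else — the renewal-reward bookkeeping and the geometric moment computations — is routine.
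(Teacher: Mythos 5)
Your proposal is correct in its overall logic and its optimality step is exactly the paper's (invoke Proposition~\ref{pro: lowerbound2} and match it with the achievability), but the achievability argument takes a genuinely different route. You work with the renewal--reward decomposition $\Gamma_i(m)=\tfrac12 I_i^2(m)-\tfrac12 I_i(m)+D_i(m-1)I_i(m)$ and argue that in the subcritical regime $I_i(m)$ is asymptotically the geometric inter-arrival time, so the dominant contribution is $\tfrac{\mathbb{E}[I_i^2]}{2\mathbb{E}[I_i]}\sim\tfrac1\theta$ per source. The paper instead decomposes pointwise $h_i(k)=w_i(k)+\delta_i(k)$: the $w_i$ part is exactly geometric and contributes $\tfrac{1}{M\theta}$, and the $\delta_i$ part is killed by a counting argument --- a recursion on the expected number of zero-age-gain nodes shows $n_0^*/M\to1$, i.e.\ the fraction of sources with $\delta_i>0$ vanishes, while $\mathbb{E}[\delta_i\mid\delta_i>0]=O(M)$, so that term's contribution to the normalized age is $o(1)$. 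The practical difference is in what must be controlled: the paper's route needs only a first-moment bound on the conditional age-gain (obtained from $\mathbb{E}[X_i]+\mathbb{E}[D_i]$ with $\mathbb{E}[D_i]=O(1)$ cited from the queueing literature) plus the elementary recursion \eqref{eq: proof n0}; your route needs the sharper statement $\mathbb{E}[I_i^2]=\tfrac{2}{\theta^2}(1+o(1))$, which requires ruling out both delay inflation and the possibility that buffer-1 packet replacement makes $I_i$ a sum of several inter-arrival times with non-negligible probability --- you correctly flag this as the main obstacle, and it is resolvable precisely because the per-source throughput equals $\theta$ in this regime (so the drop fraction vanishes), but it is a heavier second-moment burden than the paper's counting argument. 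Both approaches ultimately lean on the same external input (stability of Rivest's pseudo-Bayesian scheme and an $M$-independent delay bound), so your plan is sound; just be aware that the cross term $\mathbb{E}[D_i(m-1)I_i(m)]$ and the dropped-packet correction to $\mathbb{E}[I_i^2]$ each need an explicit $o(1/\theta^2)$ bound rather than a pointwise $D_i=O(1)$ claim.
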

	\begin{proof}
		The proof is presented in Appendix~\ref{App: {thm: slotted aloha age e}}. The idea is to divide the sources  into two groups in every time slot $k$:  sources with $\delta_i(k)=0$ and sources with $\delta_i(k)>0$. We show that (i) the contribution of the first group of sources  to  NAAoI is equal to $\frac{1}{M\theta}$, and (ii) the second group constitutes only a vanishing fraction of the nodes and therefore, even though the sources in this group have larger $\delta_i(k)$'s, their total contribution vanishes as $M\to\infty$.
	\end{proof}

	\subsection{Age-Based Thinning}\label{sec: Age-Based Thinning}
	When the arrival rate $\theta$ of each transmitter  approaches $\frac{1}{eM}$, the NAAoI of slotted ALOHA approaches $e$ (see Theorem~\ref{thm: slotted aloha age e}). As $\theta$ increases beyond $\frac{1}{eM}$, the  arrival rate gets larger than the maximum channel throughput ($=e^{-1}$), $n(k)$ overestimates the number of active transmitters and $p_b(k)$ underestimates the optimal probability of transmission, causing the throughput to decrease and the  NAAoI to sharply increase.
	
	Noting that the maximum  channel rate/throughput is $\frac{1}{e}$ when  (stabilized) slotted ALOHA algorithms are applied, a natural question rises: What should the transmitters do in order to ensure a small age of information at the destination when $\theta\geq\frac{1}{eM}$? 
	A naive solution to the above question would be to have each transmitter randomly drop packets and perform at the effective rate $\frac{1}{eM}$. But Theorem~\ref{thm: slotted aloha age e} shows that this only leads to NAAoI $\approx e$ which implies that we will not be able to benefit from the frequency of fresh packets to reduce age.
	
	To benefit from the availability of fresh packets, we devise a decentralized age-based transmission policy in which transmitters prioritize packets that have larger age-gains. In particular, in  each time slot $k$,  transmitters find a common threshold ${\tt T}(k)$ in order to distinguish and keep packets that offer high age-gains. The core idea is to still fully use the channel (depending on the available technology) but to carefully select, in a decentralized manner, what packets to send to minimize age. Recall that $\delta_i(k)$  denotes the age-gain of scheduling  transmitter~$i$. In our proposed algorithms, transmitters that have large age-gains become active and those with small age-gains stay inactive.  More formally, Transmitter $i$ is  called {\it active}  in time slot $k$ if $\delta_i(k)\geq {\tt T}(k)$. Only active transmitters participate in the transmission policy. Alternatively viewed, at time $k$, we propose to discard a fresh packet at transmitter $i$ if $0\leq\delta_i(k)<{\tt T}(k)$ and to keep it otherwise. We refer to this process as  thinning  and   this is done locally at the transmitters based on the  AoI at the source/destination.
	
	Note that no matter how the  transmission policy is designed, since it is decentralized, it may happen that multiple transmitters try to access the channel at the same time, leading to collisions. For simplicity and clarity of ideas, we will restrict attention to slotted ALOHA techniques to resolve such collisions, and in particular the Rivest's stabilized slotted ALOHA\footnote{In classical slotted ALOHA,   ``backlogged nodes'' represent the nodes who have experienced collision and transmit with the backoff probability $p_b(k)$. In our version of Rivest's algorithm, since we have unit buffer sizes, we don't use the term ``backlogged''. We instead work with active nodes. In each time slot $k$, nodes decide based on their local age-gains whether they should be active. Active nodes transmit with probability $p_b(k)$, see \eqref{eq: aloha} where $n(k-1)$ is the number of active nodes in time $k-1$.} described in \eqref{eq: aloha}.
	
	The main underlying challenge is in the design of ${\tt T}(k)$. We propose two algorithms: an adaptive method of calculating ${\tt T}(k)$ for each time slot based on the local collision feedback and a fixed threshold value ${\tt T}^*$ that is found in advance and remains fixed for all time slots $k$. 
	
	In the remainder of this section, we assume that $M$ is large, and $\theta>\frac{1}{eM}$. 
	The following definition comes in handy in presenting our results.
	\begin{definition}
		Consider transmitter $i$ at time slot $k$. If $\delta_i(k)=m$,  we say that  transmitter $i$ is an {\it $m$-order} node. 
		Now let $\ell_m(k)$ be the expected fraction of $m$-order nodes in time slot $k$, i.e.,
		\begin{align}\label{eq: def ell}
			\ell_m(k)=\mathbb{E}\left[\frac{1}{M}\sum_{i=1}^{M}1_{\{\delta_i(k)=m\}}\right].
		\end{align}
		We define $\{\ell_m(k)\}_{m=0}^{\infty}$ as the average {\it node distribution (of the age-gain)} at time $k$.	
	\end{definition}
	
	\subsection{Adaptive Threshold}\label{sec: Updated Threshold-Slotted-ALOHA Schemes}
	
	Let  ${\tt T}(k)$ denote the threshold for decision making in slot $k$. We propose to choose ${\tt T}(k)$ such that it imposes an effective arrival rate equal to $\frac{1}{eM}$ per transmitter. If the effective arrival rate per transmitter is less than $\frac{1}{eM}$, we are not utilizing the channel efficiently. If it is larger, then we are not prioritizing efficiently. This is because we would get a larger pool of packets than slotted ALOHA can support, leading to a reduced throughput and a larger age.  More specifically, we design ${\tt T}(k)$ in three steps:
	\begin{itemize}
		\item[(i)] Compute an estimate of the node distribution of the age-gain;
		\item[(ii)] Find ${\tt T}(k)$ based on the estimated distribution;
		\item[(iii)] Update the estimate of the node distribution based on the chosen ${\tt T}(k)$ and the collision feedback.
	\end{itemize}
	Note that $\{\ell_m(k)\}_{m=0}^{\infty}$ is unknown in decentralized systems. We hence find an estimate of it $\{\hat{\ell}_m(k)\}_{m=0}^{\infty}$ in every time slot. We summarize the process as follows
	\begin{align}\label{eq: recursion-F}
		\{\hat{\ell}_m(k)\}_{m=0}^{\infty}=F(c(k), \{\hat{\ell}_m(k-1)\}_{m=0}^{\infty})
	\end{align}
	where $F(\cdot)$ is a function which will be determined later.
	
	Suppose the estimated node distribution $\{\hat{\ell}_m(k-1)\}_m$ is known at (the end of) time slot $k-1$. We now describe how threshold ${\tt T}(k)$ is designed and how $\{\hat{\ell}_m(k)\}_m$ is updated. For clarity of ideas, let us  view the  time slot $k$ in three stages: 
	The first stage corresponds to the beginning of the time slot when new packets may arrive  and replace the old packets. We denote the time just before the arrival of new packets by $k^-$ and the time just after the arrival of packets by $k^+$.
	After the arrival of new packets, at time $k^+$,  the source's AoI changes from $w_i(k^-)$ to $w_i(k^+)$ and the destination's AoI $h_i(k^+)$ remains the same as $h_i(k^-)$. So the age-gain values  and their node distributions change. We denote the resulting node distribution in this stage by $\{\hat{\ell}_m(k^+)\}_m$.
	In the second stage, transmitters  determine the threshold ${\tt T}(k)$ based on $\{\hat{\ell}_m(k^+)\}_m$. Transmissions happen according to the designed threshold ${\tt T}(k)$. In the third phase, at the end of time slot $k$ when collision feedback is also available, the node distribution is once again estimated. We slightly abuse notation and denote the final estimate of the node distribution at the end of time slot $k$ with $\{\hat{\ell}_m(k)\}_m$.
	The aforementioned three stages of calculating ${\tt T}(k)$ is described next. 
	
	\noindent{\bf Stage 1:} Suppose the estimated node distribution $\{\hat{\ell}_m(k-1)\}_m$ is known at the beginning of slot $k$ before the arrival of new packets. The expected fraction of $m$-order nodes that receive new packets is $\theta \hat{\ell}_m(k-1)$. The order of these nodes increase and this
	changes the expected node distribution to $\{\hat{\ell}_m(k^{+})\}_m$ as a function of $\{\hat{\ell}_m(k-1)\}_m$. Let $a_m(k)$ denote the expected fraction of nodes that have just become $m$-order nodes at time~$k^+$ for $m\geq1$. 
	\begin{lemma}\label{lem: ell_plus}
		The expected fraction of nodes that have just become $m$-order nodes at time $k^+$  is 
		\begin{align}\label{eq: arrival nodes order m}
			a_m(k)=\theta^2\sum_{j=0}^{m-1}\ell_j(k-1) (1-\theta)^{m-j-1}
		\end{align}
		and the expected node distribution of age-gain at time $k^+$ is 
		\begin{equation}\label{eq: recursion of N'}
			\ell_m(k^+)=\left\{\begin{array}{ll}
				(1-\theta)\ell_m(k-1)& m=0\\
				(1-\theta)\ell_m(k-1)+a_m(k)& m\geq 1.
			\end{array}\right.
		\end{equation}
	\end{lemma}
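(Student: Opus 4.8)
The plan is to follow a single transmitter through the arrival sub-slot $k^-\to k^+$ and then average over the $M$ transmitters. First I would observe that the order $\delta_i$ can change from $k^-$ to $k^+$ only if transmitter $i$ receives a fresh packet in slot $k$: if $A_i(k)=0$ then, by \eqref{eq: h_i}--\eqref{eq: w_i}, both $h_i$ and $w_i$ are unchanged from $k^-$ to $k^+$, hence so is $\delta_i$. Consequently a transmitter is $m$-order at $k^+$ exactly when either (a) it was $m$-order at $k^-$ and received no packet, or (b) it received a packet and $h_i(k^-)=m$; these two events are disjoint, and by definition the expected fraction of nodes satisfying (b) is $a_m(k)$. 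Since $A_i(k)$ is Bernoulli($\theta$) and independent of the whole past --- in particular of $\delta_i(k^-)$ --- the expected fraction of nodes satisfying (a) is $(1-\theta)\,\ell_m(k-1)$, which is the first term of \eqref{eq: recursion of N'}. For $m=0$ this already settles the claim: a node satisfying (b) would need $h_i(k^-)=0$, impossible for $k\ge 2$ by \eqref{eq: h_i}; hence $a_0(k)=0$, and a node that receives a packet has $\delta_i(k^+)=h_i(k^-)\ge 1$ and leaves the $0$-order class, so $\ell_0(k^+)=(1-\theta)\ell_0(k-1)$, matching \eqref{eq: arrival nodes order m}.

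Next I would evaluate $a_m(k)$ for $m\ge 1$. By the model, an arrival resets the source age, $w_i(k^+)=0$, while the destination age is unchanged, $h_i(k^+)=h_i(k^-)$; hence the new order of a node that receives a packet is $\delta_i(k^+)=h_i(k^-)=\delta_i(k^-)+w_i(k^-)$ --- its old order plus the source age accumulated since its last arrival. So a node that was $j$-order at $k^-$, receives a packet, and has $w_i(k^-)=w$ becomes $(j+w)$-order. Summing the disjoint contributions over $j+w=m$ with $j\ge 0$ and $w\ge 1$, and using that the slot-$k$ arrival is independent of $w_i(k^-)$,
\[
a_m(k)=\theta\sum_{j=0}^{m-1}\ell_j(k-1)\;\Pr\big(w_i(k^-)=m-j\mid \delta_i(k^-)=j\big),
\]
and combining $a_m(k)$ with the $(1-\theta)\ell_m(k-1)$ term reproduces \eqref{eq: recursion of N'}; so the only thing left is the conditional law of $w_i(k^-)$.

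Establishing that $\Pr\big(w_i(k^-)=w\mid \delta_i(k^-)=j\big)=\theta(1-\theta)^{w-1}$ for every $j$ is the step I expect to be the main obstacle; granting it, the substitution $w=m-j$ turns the display into the asserted $a_m(k)=\theta^2\sum_{j=0}^{m-1}\ell_j(k-1)(1-\theta)^{m-j-1}$. The marginal statement is immediate: $w_i(k^-)$ is the backward recurrence time of the i.i.d.\ Bernoulli($\theta$) arrival stream at slot $k$, hence geometric on $\{1,2,\ldots\}$ with mass $\theta(1-\theta)^{w-1}$. The delicate part is its independence from the current order, which I would approach by writing $\delta_i(k^-)=a-g$ where $a$ is transmitter $i$'s last arrival epoch before slot $k$ and $g$ is the generation epoch of its most recently delivered packet ($g\le a$): between consecutive arrivals, $\delta_i$ and the transmission decisions of $i$ are governed by the age-gain value right after the last arrival together with the shared collision feedback, so $\delta_i(k^-)$ is a function of transmitter $i$'s arrivals up to $a$ and of the channel history up to $k-1$, whereas $w_i(k^-)=k-a$ records only that transmitter $i$ has no arrival in $(a,k)$; those two sets of arrival indicators are disjoint, so by memorylessness of the Bernoulli stream the gap $k-a$ is independent of $\delta_i(k^-)$. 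I do expect a fully rigorous version of this last point to need some care --- the number of slots between $a$ and $k$ can in principle correlate with whether the pending packet got delivered --- and the clean fallback, which is all that \eqref{eq: arrival nodes order m}--\eqref{eq: recursion of N'} require, is to adopt it as the standard mean-field assumption that source-side aging is independent of age-gain, consistent with memoryless arrivals.
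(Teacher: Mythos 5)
Your proof follows essentially the same route as the paper's: split the $m$-order population at $k^+$ into survivors (no arrival, contributing $(1-\theta)\ell_m(k-1)$) and newcomers from lower orders via $\delta_i(k^+)=\delta_i(k^-)+w_i(k^-)$, then integrate over the geometric law of $w_i(k^-)$ to get $a_m(k)$. The one step you flag as delicate --- independence of $w_i(k^-)$ from $\delta_i(k^-)$ --- is precisely the step the paper elides by inserting the marginal $\Pr\big(w_i(k^-)=m-j\big)=\theta(1-\theta)^{m-j-1}$ directly into the sum over $j$-order nodes, so your explicit acknowledgment of that implicit factorization is, if anything, more careful than the published argument.
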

	\begin{proof} The proof is straightforward and delegated to Appendix~\ref{App: {lem: ell_plus}}.\end{proof}
	We define $\hat{a}_m(k)$ as an estimate of $a_m(k)$, which can be obtained by \eqref{eq: arrival nodes order m} and \eqref{eq: recursion of N'} by replacing $\ell_m(k)$, $\ell_m(k^+)$ with $\hat{\ell}_m(k)$, $\hat{\ell}_m(k^+)$, respectively.
	
	\noindent{\bf Stage 2:} The threshold ${\tt T}(k)$ is determined based on $\{\hat{\ell}_m(k^+)\}_m$.  
	We design ${\tt T}(k)$ such that the \emph{effective arrival rate} of packets that have an age-gain above ${\tt T}(k)$  is close to $\frac{1}{e}$. In other words, we \emph{thin} the arrival process using  local age information. The critical point $\frac{1}{e}$ is the maximum sum arrival rate that ALOHA can support. So if the effective sum arrival rate falls below $\frac{1}{e}$, we do not use the full channel capacity\footnote{Here, capacity refers to the maximum achievable sum rate under ALOHA.} and if we operate above $\frac{1}{e}$, then we incur additional collisions and delay.

	We define the effective arrival rate as the fraction of sources with new arrivals whose age-gain is larger than or equal to ${\tt T}(k)$. 
	Recall that $\hat{a}_m(k)$  is the estimation of the expected fraction of nodes that have just become $m$-order nodes at time $k^+$ (coming from lower order nodes). So the total (estimated) fraction of nodes whose age-gain would, for the first time, pass the threshold ${\tt T}(k)$ is
	\begin{align*}
		\sum_{m\geq {\tt T}(k)}\hat{a}_m(k).
	\end{align*}

	We propose to choose ${\tt T}(k)$ according to the following rule:
	\begin{equation}\label{eq: threshold}
		\footnotesize
		\begin{aligned}
			{\tt T}(k)=\max\left\{t|\sum_{m\geq t}\hat{a}_m(k)\geq\frac{1}{eM}\right\}.
		\end{aligned}
	\end{equation}
	\begin{remark}
		We chose ${\tt T}(k)$ to be the maximum threshold value that does not bring effective sum arrival rate below $\frac{1}{e}$. This is due to the integer nature of age and hence $k$. One can also time share between ${\tt T}(k)-1$ and ${\tt T}(k)$ to operate at an effective sum arrival rate (almost) equal to $\frac{1}{e}$. Thus, to simplify \eqref{eq: aloha}, we can replace $M\theta$ by the effective arrival rate $\frac{1}{e}$ in \eqref{eq: aloha}.
	\end{remark}
	
	\begin{remark}
		The threshold ${\tt T}(k)$ in \eqref{eq: threshold} can also be applied to the regime $0<\theta<\frac{1}{eM}$. In this regime, $\sum_{m=1}^{\infty}a_m(k)<\frac{1}{eM}$. Therefore, assuming that the estimates $\hat{a}_m(k)$ are accurate, the threshold is ${\tt T}(k)\leq 1$, reducing the proposed algorithm to the slotted ALOHA.
	\end{remark}

	\noindent{\bf Stage 3:} Once the threshold ${\tt T}(k)$ is determined, each transmitter verifies locally if its age-gain is above the specified threshold. If so, it transmits its packet with probability $p_b(k)$ defined in \eqref{eq: aloha} mimicking slotted ALOHA.  If collision happens or if all nodes abstain from transmitting, then AoI at the destination increases by $1$ for all sources. If only one node transmits, its packet will be delivered successfully and the corresponding age at the destination drops to the source's AoI.

	\subsection{Estimating the node distribution}\label{sec: Estimating the node distribution}
	It remains to estimate $\hat{\ell}_m(k)$ at the end of time slot $k$, which will serve in computing ${\tt T}(k+1)$ in the next time slot.
	We assume that at the end of time slot $k$, all transmitters are provided with collision feedback from the channel and we hence consider  two cases separately: $c(k)=0$ and $c(k)=1$.
	
	If collision has occurred, i.e., $c(k)=1$, then the order of nodes will not change:
	\begin{equation}\label{eq: updated-recursion N-1}
		\begin{aligned}
			\hat{\ell}_m(k)=\hat{\ell}_m(k^+),\quad m\geq0.
		\end{aligned}
	\end{equation}

	If there was no collision, i.e., $c(k)=0$, then either a packet was delivered or no packet was delivered.  Recall that we design ${\tt T}(k)$ to impose (in the limit of large $M$) an effective sum arrival rate almost equal to $\frac{1}{e}$. Following Lemma~\ref{lem: slotted aloha probabilities} in Appendix \ref{App: {lem: slotted aloha probabilities}},  the two events of idle and successful delivery are almost equiprobable for large $M$:
	\begin{align*}
		&\lim_{k\to\infty}\Pr\big(\sum_{i=1}^{M}d_i(k)=1, c(k)=0\big)\approx\frac{1}{e}\\
		&\lim_{k\to\infty}\Pr\big(\sum_{i=1}^{M}d_i(k)=0, c(k)=0\big)\approx\frac{1}{e}.
	\end{align*}
	Thus, condition on $c(k)=0$, a packet is delivered with probability $1/2$, i.e., the  expected number of delivered packet is $1/2$ and by the inherent symmetry of the system, each active node has the same chance to deliver a new packet. For any $m\geq {\tt T}(k)$, a packet is delivered by $m$-order nodes with probability 
	\begin{align}\label{eq: r_m(k)}
		r_m(k)=\frac{\ell_m(k^+)}{\sum_{t\geq {\tt T}(k)}\ell_t(k^+)}.	
	\end{align}
	The expected number of $m$-order nodes is $M\ell_m(k^+)$ and the expected number of delivered packets by $m$-order nodes (condition on $c(k)=0$) is $\frac{r_m(k)}{2}$. Note that $m$-order nodes can not deliver more than $M\ell_m(k^+)$ packets since the total number of $m$-order nodes is $M\ell_m(k^+)$ and the buffer size is $1$, then
	\begin{align}\label{eq: rm less than ell}
		\frac{r_m(k)}{2}<M\ell_m(k^+).	
	\end{align}
	In order to estimate the expected fraction of $m$-order nodes that have a successful delivery, we simply plug in $\hat{\ell}_m(k^+)$ as an estimate for $\ell_m(k^+)$. Since \eqref{eq: rm less than ell} does not necessarily hold anymore using the estimates, we estimate the expected fraction of $m$-order nodes with a successful delivery as follows: 
	\begin{align*}
		\frac{1}{M}\min\big(\frac{r_m(k)}{2},M\hat{\ell}_m(k^+)\big),
	\end{align*}
	where $r_m(k)$ is computed by \eqref{eq: r_m(k)} and replacing $\ell_m(k)$ with $\hat{\ell}_m(k)$. Consequently, the update rule of the node distribution of age, $\{\hat{\ell}_m(k)\}_m$, is given as follows:
	\begin{equation}\label{eq: update-recursion N-2}
		\begin{aligned}
			\hat{\ell}_0(k)=&\hat{\ell}_0(k^+)+\sum_{m={\tt T}(k)}^{\infty}\min\big(\frac{r_m(k)}{2M},\hat{\ell}_m(k^+)\big)\\
			\hat{\ell}_m(k)=&\hat{\ell}_m(k^+),\ \qquad\qquad\qquad\qquad 1\leq m\leq {\tt T}(k)-1\\
			\hat{\ell}_m(k)=&\left(\hat{\ell}_m(k^+)-\frac{r_m(k)}{2M}\right)^+,\quad\quad m\geq {\tt T}(k).
		\end{aligned}
	\end{equation}
	Collecting Stages~1 - 3, from \eqref{eq: recursion of N'}, \eqref{eq: recursion-F} can be re-written as
	\begin{equation}\label{eq: recursion-F1}
		\begin{aligned}
			\hat{\ell}_0(k)=&(1-\theta)\hat{\ell}_0(k-1)+1_{\{c(k)=0\}}\\
			&\times\sum_{m={\tt T}(k)}^{\infty}\min\big(\frac{r_m(k)}{2M},(1-\theta)\hat{\ell}_m(k-1)+\hat{a}_m(k)\big)\\
			\hat{\ell}_m(k)=&(1-\theta)\hat{\ell}_m(k-1)+\hat{a}_m(k),\ \quad1\leq m\leq {\tt T}(k)-1\\
			\hat{\ell}_m(k)=&\left((1-\theta)\hat{\ell}_m(k-1)+\hat{a}_m(k)-1_{\{c(k)=0\}}\frac{r_m(k)}{2M}\right)^+,\\
			&m\geq {\tt T}(k).	
		\end{aligned}
	\end{equation}
	where $a_m(k)$ and $r_m(k)$ are defined in \eqref{eq: arrival nodes order m} and \eqref{eq: r_m(k)}, respectively. Finally, in this case, the probability of transmission is calculated by \eqref{eq: aloha}, where $M\theta$ is replaced by the effective arrival rate $\frac{1}{e}$.

	Algorithm~\ref{alg: Updated Threshold-slotted ALOHA}
	describes the proposed distributed age-based transmission policy. We numerically evaluate its age performance in Section \ref{sec: Numerical results and discussions} and analyze a stationary version of it when the threshold is fixed in Section \ref{sec: limiting threshold}.   Comparing with the slotted ALOHA in \eqref{eq: aloha}, Algorithm~\ref{alg: Updated Threshold-slotted ALOHA} significantly reduces the NAAoI  when the sum arrival rate is beyond $\frac{1}{e}$ (see Fig.~\ref{fig-age2_sep1} and Fig.~\ref{fig-age6_sep1}). It achieves this by carefully selecting and delivering packets with a large age-gain. The NAAoI under Algorithm~\ref{alg: Updated Threshold-slotted ALOHA} decreases sharply when the arrival rate $\theta$ approaches  $1$ (see Figure~\ref{fig-age1}). In particular, the NAAoI it achieves at $\theta=1$ is almost $1$.  Contrasting that with the lower bound in Proposition \ref{pro: lowerbound1}, one comes to the conclusion that the throughput achieved by Algorithm~\ref{alg: Updated Threshold-slotted ALOHA} is larger than that of a standard slotted ALOHA. This is because of the implicit coordination that is facilitated by estimating and utilizing the age gain distributions for decision making.
	
	\begin{algorithm}
		\caption{Adaptive Age-based Thinning (AAT)}\label{alg: Updated Threshold-slotted ALOHA}
		\begin{algorithmic}
			
			\STATE {Set a large integer $N$ and the time horizon $K$.}
			
			\STATE {Set initial points:  $h_i(0)=1$, $w_i(0)=0$ for $i=1,2,\cdots,M$; $c(0)=0$; ${\tt T}(0)=1$; $p_b(0)=1$; $n(0)=0$; $k=1$. }
			
			\REPEAT 
			
			\STATE {\bf Step 1:} Calculate $\{\hat{\ell}_m(k^{+})\}_{m=1}^{N}$ by \eqref{eq: recursion of N'}.

			\STATE {\bf Step 2:} Calculate ${\tt T}(k)$ by \eqref{eq: threshold}.
			
			\STATE {\bf Step 3:} For transmitter $i$, $i=1,\ldots,M$: compute $\delta_i(k^+)=h_i(k^+)-w_i(k^+)$;  if $\delta_i(k^+)<{\tt T}(k)$, then it does not transmit packets; if $\delta_i(k^+)\geq {\tt T}(k)$, then it transmits a packet with probability $p_b(k)$\footnotemark.
			
			\STATE {\bf Step 4:} If $c(k)=0$,  calculate $\{\hat{\ell}_m(k)\}_{m=1}^{N}$ by \eqref{eq: updated-recursion N-1}, and  if $c(k)=1$, 
			calculate $\{\hat{\ell}_m(k)\}_{m=1}^{N}$ by \eqref{eq: update-recursion N-2}. Calculate $p_b(k+1)$ by \eqref{eq: aloha} in which  $M\theta$ is replaced by $\min(M\theta, e^{-1})$.

			\UNTIL{$k=K$}
			
			\STATE Calculate $J_K^{AAT}$ by \eqref{eq: EWSAoI-1}.
		\end{algorithmic}
	\end{algorithm}

	\begin{remark}
		From \eqref{eq: aloha}, to estimate the number active nodes in each time slot, the number of nodes in the network is needed. We set $M$ to be a pre-determined parameter, which is known to all nodes.  
	\end{remark}
	
	\begin{remark}
		\label{remarkinaccuracies}
		The estimates $\{\hat{\ell}_m(k)\}_m$ and $\{\hat{a}_m(k)\}_m$ in Algorithm~\ref{alg: Updated Threshold-slotted ALOHA} are not exactly accurate and this is due to the integer nature of the threshold. Assume that $\{\hat{\ell}_m(k_0)\}_m$ and $\{\hat{a}_m(k_0)\}_m$ are exactly accurate in time slot $k_0$. We may have $\sum_{m\geq{\tt T}(k_0)+1}\hat{a}_m(k_0)<\frac{1}{eM}$ but $\sum_{m\geq{\tt T}(k_0)}\hat{a}_m(k_0)>\frac{1}{eM}$ in which case the effective arrival rate \big($=\sum_{m\geq{\tt T}(k_0)}\hat{a}_m(k_0)$\big) would be larger than $e^{-1}$. But the steps in \eqref{eq: update-recursion N-2} are derived by assuming an effective arrival rate $\frac{1}{e}$ and this leads to  inaccuracies in our estimates $\{\hat{\ell}_m(k)\}_m$ and $\{\hat{a}_m(k)\}_m$ as computed in Algorithm~\ref{alg: Updated Threshold-slotted ALOHA}. 
	\end{remark}

	\begin{remark}
		We updated $\{\hat{\ell}_m(k)\}_m$ as a function of $\{\hat{\ell}_m(k-1)\}_m$ and the collision feedback $c(k)$, hence the name adaptive. ${\tt T}(k)$ and $\{\hat{\ell}_m(k)\}_m$ are known at all sources and every source finds the same ${\tt T}(k)$. If we update $\{\ell_m(k)\}_m$ (not $\{\hat{\ell}_m(k)\}_m$) by the conditional expectation of $\{\ell_m(k)\}_m$, condition on $\{\ell_m(k^+)\}_m$ but not on $c(k)$, we will find a fixed limiting threshold ${\tt T}^*$ discussed next. 
	\end{remark}
	
	\footnotetext{A packet with a large age-gain must have a packet ready to transmit.}

	\subsection{Fixed Threshold}\label{sec: limiting threshold}
	A simple variant of the age-based thinning method is found when the threshold ${\tt T}(k)={\tt T}^*$ is fixed throughout the transmission phase.
	In particular, we design ${\tt T}^*$ ahead of time based on the node distribution in the stationary regime. By doing so, we cannot benefit from the collision feedback to adaptively choose ${\tt T}(k)$. However, this framework is preferable for deriving analytical results. 
	
	We  use the  framework and derivation we developed for adaptive thinning in order to find a fixed ``optimal" ${\tt T}^*$ that imposes an effective arrival rate approximately\footnote{This approximation is due to the integer nature of the age threshold.} equal to $1/e$. Note that a larger arrival rate implies further random thinning of the packets to meet the fundamental rate $1/e$ (as opposed to the selective nature of thinning by imposing an age thresholding) and a smaller arrival rate corresponds to inefficient utilization of the channel. 
	
	The  major difference between an adaptive threshold and a fixed threshold is  in the update rules \eqref{eq: updated-recursion N-1}-\eqref{eq: update-recursion N-2} because $c(k)$ is not known when ${\tt T}^*$ is designed. In particular, the update rule \eqref{eq: updated-recursion N-1}-\eqref{eq: update-recursion N-2} is replaced by an average rule that weighs $c(k)=1$ with probability $1-\frac{2}{e}$ and $c(k)=0$  with probability $\frac{2}{e}$ (following Lemma~\ref{lem: slotted aloha probabilities}).

	By the stationarity of the scheme,  the limit of $\{\ell_m(k)\}_{m=0}^{\infty}$ and $\{\ell_m(k^+)\}_{m=0}^{\infty}$ exist as $k\rightarrow\infty$. Denote the two limits by  $\{\ell_m^*\}_{m=0}^{\infty}$ and $\{\ell_m^{+*}\}_{m=0}^{\infty}$, respectively. 
	Similar with \eqref{eq: recursion of N'}, the update rule of Stage~1 implies
	\begin{equation}\label{eq: limit-threshold1}
		\begin{aligned}
			&\ell_0^{+*}=(1-\theta)\ell_0^*\\
			&\ell_m^{+*}=(1-\theta)\ell_m^*+a_m^*\quad\qquad m\geq 1
		\end{aligned}
	\end{equation}
	where 
	\begin{align}a_m^*=\theta^2\sum_{j=0}^{m-1}\ell_j^* (1-\theta)^{m-j-1}\quad\qquad m\geq1.\label{def-astar}\end{align}
	Since we let ${\tt T}(k)={\tt T}(k-1)={\tt T}^*$, the threshold proposed in Stage~2 is
	\begin{align}\label{eq: limit-threshold2}
		{\tt T}^*=\max\Big\{t|\sum_{m\geq t}a_m^*\geq\frac{1}{eM}\Big\}.
	\end{align}
	
	Next, consider Stage~3. In contrast to Section \ref{sec: Estimating the node distribution}, we do not utilize collision feedback in finding ${\tt T}(k)$. So estimating the fraction of $m$-order nodes at the end of time slot $k$ will account for $c(k)=1$ with probability $1-\frac{2}{e}$ and $c(k)=0$ with probability $\frac{2}{e}$ (see Lemma~\ref{lem: slotted aloha probabilities}). We hence obtain
	\begin{equation}\label{eq: limit-threshold3}
		\begin{aligned}
			&\ell_0^*=\ell_0^{+*}+ \frac{1}{eM}\\
			&\ell_m^*=\ell_m^{+*},\qquad\qquad\qquad 1\leq m\leq {\tt T}^*-1\\
			&\ell_m^*=\ell_m^{+*}- \frac{r_m^*}{eM},\quad\qquad m\geq {\tt T}^*
		\end{aligned}
	\end{equation}
	where
	\begin{align*}
		r_m^*=\ell_m^{+*}/\sum_{i={\tt T}}^{\infty}\ell_i^{+*}.
	\end{align*}
	Putting together \eqref{eq: limit-threshold1} - \eqref{eq: limit-threshold3},  we obtain
	\begin{equation}\label{eq: limit collection}
		\begin{aligned}
			&\ell_0^*=(1-\theta)\ell_0^*+\frac{1}{eM}\\
			&\ell_m^*=(1-\theta)\ell_m^*+a_m^*,\quad 1\leq m\leq {\tt T}^*-1\\
			&\ell_m^*=(1-\theta)\ell_m^*+a_m^*-\frac{r_m^*}{eM},\quad m\geq {\tt T}^*
		\end{aligned}
	\end{equation}
	and  conclude the following lemma (see  Appendix~\ref{App: {thm: N', E_m}} for the proof).
	\begin{lemma}\label{thm: N', E_m}
		As $k\to\infty$, the stationary distributions $\{\ell_m^*\}_{m}$, $\{\ell_m^{+*}\}_{m}$ and $\{a_m^*\}_{m}$ satisfy the following properties:
		\begin{align}
			&\ell_m^*=\left\{\begin{array}{ll}\frac{1}{eM\theta}&m=0\\
				\frac{1}{eM}& 1\leq m\leq{\tt T}^*-1\end{array}\right.\\
			&\ell_m^{+*}=\frac{1}{eM}\quad\ \  1\leq m\leq{\tt T}^*-1\\
			&a_m^*=\frac{\theta}{eM}\quad\,\  1\leq m\leq {\tt T}^*\label{eq: a_mstar}.
		\end{align}
	\end{lemma}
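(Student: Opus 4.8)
The plan is to read off the three families of quantities in the order $\ell_0^*$, then $(a_m^*,\ell_m^*)$ jointly, then $\ell_m^{+*}$, working purely algebraically from the stationary balance equations \eqref{eq: limit collection}, the arrival identity \eqref{def-astar}, and the Stage-1 relations \eqref{eq: limit-threshold1}. First I would handle $m=0$: the top line of \eqref{eq: limit collection} reads $\ell_0^*=(1-\theta)\ell_0^*+\frac{1}{eM}$, which rearranges to $\theta\ell_0^*=\frac{1}{eM}$, i.e. $\ell_0^*=\frac{1}{eM\theta}$.

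Next I would run an induction on $m$, for $1\le m\le {\tt T}^*-1$, proving simultaneously that $a_m^*=\frac{\theta}{eM}$ and $\ell_m^*=\frac{1}{eM}$. For the base case $m=1$, \eqref{def-astar} gives $a_1^*=\theta^2\ell_0^*=\frac{\theta}{eM}$, and the middle line of \eqref{eq: limit collection} (valid since $1\le {\tt T}^*-1$) gives $\theta\ell_1^*=a_1^*$, hence $\ell_1^*=\frac{1}{eM}$. For the inductive step, assume $\ell_j^*=\frac{1}{eM}$ for all $1\le j\le m-1$. Splitting the $j=0$ term out of \eqref{def-astar} and using the finite geometric sum $\sum_{j=1}^{m-1}(1-\theta)^{m-1-j}=\sum_{i=0}^{m-2}(1-\theta)^i=\frac{1-(1-\theta)^{m-1}}{\theta}$,
\begin{align*}
a_m^*&=\theta^2\Big(\ell_0^*(1-\theta)^{m-1}+\frac{1}{eM}\sum_{j=1}^{m-1}(1-\theta)^{m-1-j}\Big)\\
&=\theta^2\cdot\frac{(1-\theta)^{m-1}+\big(1-(1-\theta)^{m-1}\big)}{eM\theta}=\frac{\theta}{eM},
\end{align*}
the key point being that the $(1-\theta)^{m-1}$ contributions cancel exactly. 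Feeding this into the middle line of \eqref{eq: limit collection} once more yields $\ell_m^*=\frac{1}{eM}$, completing the induction for $m\le {\tt T}^*-1$. Since the displayed evaluation of $a_m^*$ only references $\ell_j^*$ for $j\le m-1$, it is also valid at $m={\tt T}^*$, which delivers the last claimed identity $a_{{\tt T}^*}^*=\frac{\theta}{eM}$ in \eqref{eq: a_mstar}.

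Finally, for $\ell_m^{+*}$ with $1\le m\le {\tt T}^*-1$, I would substitute the values just obtained into \eqref{eq: limit-threshold1}: $\ell_m^{+*}=(1-\theta)\ell_m^*+a_m^*=\frac{1-\theta}{eM}+\frac{\theta}{eM}=\frac{1}{eM}$ (equivalently, comparing \eqref{eq: limit-threshold1} with the middle line of \eqref{eq: limit collection} already shows $\ell_m^{+*}=\ell_m^*$ on this range).

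There is no deep obstacle here; the argument is a bookkeeping induction and the only steps requiring care are: (i) the geometric-sum cancellation above, which is precisely what forces the sequence $(a_m^*)$ to be exactly flat rather than merely bounded; (ii) keeping the index ranges straight --- the balance equations pin down $\ell_m^*$ only for $m\le {\tt T}^*-1$, since beyond ${\tt T}^*$ the depletion term $-\frac{r_m^*}{eM}$ in \eqref{eq: limit collection} enters and the distribution is no longer constant, whereas $a_m^*$ is determined one index further (up to $m={\tt T}^*$); and (iii) the degenerate case ${\tt T}^*=1$, where the induction range is empty and one asserts only $\ell_0^*=\frac{1}{eM\theta}$ and $a_1^*=\theta^2\ell_0^*=\frac{\theta}{eM}$, both of which are immediate.
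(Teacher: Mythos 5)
Your proposal is correct and follows essentially the same route as the paper's own proof: extract $\ell_0^*=\frac{1}{eM\theta}$ and the relation $\theta\ell_m^*=a_m^*$ from \eqref{eq: limit collection}, then run the joint induction on $(a_m^*,\ell_m^*)$ using the geometric-sum cancellation, extend to $a_{{\tt T}^*}^*$, and read off $\ell_m^{+*}$ from \eqref{eq: limit-threshold1}. Your added remarks on the index ranges and the degenerate case ${\tt T}^*=1$ are sound but do not change the argument.
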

	The closed form expression of the fixed threshold $\tt T^*$ is given below (see  Appendix~\ref{App: {thm: limit threshold}} for the proof) and Algorithm \ref{alg: Limit Threshold-slotted ALOHA}  describes our stationary ge-based transmission policy.
	\begin{theorem}\label{thm: limit threshold}
		The fixed threshold ${\tt T}^*$ in \eqref{eq: limit-threshold2} has the following closed form expression: $${\tt T}^*=\lfloor eM-\frac{1}{\theta}+1 \rfloor.$$
	\end{theorem}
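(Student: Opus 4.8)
The plan is to convert the defining condition \eqref{eq: limit-threshold2} for ${\tt T}^*$ into an explicit inequality on the integer $t$ by evaluating the tail sums $\sum_{m\ge t}a_m^*$ in closed form, and then to read off the largest admissible $t$. Two ingredients will do the job: the total mass $\sum_{m\ge1}a_m^*$, and the fact (from Lemma~\ref{thm: N', E_m}) that $a_m^*$ is flat and equal to $\frac{\theta}{eM}$ for $1\le m\le{\tt T}^*$.

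First I would compute $S:=\sum_{m\ge1}a_m^*$. Starting from \eqref{def-astar}, I would interchange the (nonnegative) double summation over $m$ and $j$, evaluate the inner geometric series $\sum_{m\ge j+1}(1-\theta)^{m-j-1}=\tfrac1\theta$, and obtain $S=\theta\sum_{j\ge0}\ell_j^*$. Since $\delta_i(k)=h_i(k)-w_i(k)\ge0$ is always a nonnegative integer, the $\ell_m(k)$'s are the expected fractions of nodes of each order and so $\sum_{j\ge0}\ell_j(k)=1$ for every $k$; passing to the limit (all terms nonnegative and bounded) gives $\sum_{j\ge0}\ell_j^*=1$, hence $S=\theta$. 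Then, invoking Lemma~\ref{thm: N', E_m}, for every integer $t$ with $1\le t\le{\tt T}^*+1$ (so that all indices below are $\le{\tt T}^*$, where the lemma applies) I would write
\begin{align*}
\sum_{m\ge t}a_m^*=S-\sum_{m=1}^{t-1}a_m^*=\theta-(t-1)\frac{\theta}{eM}.
\end{align*}
Because $\theta>\frac1{eM}$ in the regime under consideration, $t=1$ satisfies the condition in \eqref{eq: limit-threshold2} and the tail sums vanish as $t\to\infty$, so ${\tt T}^*$ is a well-defined finite integer $\ge1$. Multiplying the condition $\theta-(t-1)\frac{\theta}{eM}\ge\frac1{eM}$ by $\frac{eM}{\theta}>0$ turns it into $t\le eM-\frac1\theta+1$.

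Applying the last inequality to $t={\tt T}^*$ (admissible, hence $\le eM-\frac1\theta+1$) and to $t={\tt T}^*+1$ (not admissible, hence $>eM-\frac1\theta+1$) sandwiches the threshold:
\begin{align*}
eM-\tfrac1\theta<{\tt T}^*\le eM-\tfrac1\theta+1 .
\end{align*}
The half-open interval $\big(eM-\tfrac1\theta,\ eM-\tfrac1\theta+1\big]$ contains exactly one integer, namely $\lfloor eM-\tfrac1\theta+1\rfloor$ (this equals $a+1$ in the degenerate case $a:=eM-\tfrac1\theta\in\mathbb Z$, which should be checked but still matches the formula), so ${\tt T}^*=\lfloor eM-\tfrac1\theta+1\rfloor$. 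In the regime $\theta>\tfrac1{eM}$ we have $eM-\tfrac1\theta>0$, so this floor is automatically $\ge1$ and the outer $\max(1,\cdot)$ only keeps the expression meaningful outside this regime.

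The part requiring the most care is the mild self-reference: Lemma~\ref{thm: N', E_m} pins down $a_m^*$ only for $m\le{\tt T}^*$, so one cannot simply substitute a trial value of ${\tt T}^*$. The resolution is precisely the sandwich above — both facts used (admissibility of ${\tt T}^*$, non-admissibility of ${\tt T}^*+1$) involve $a_m^*$ only for $m\le{\tt T}^*+1$, i.e.\ only values furnished by the lemma together with the global identity $\sum_{m\ge1}a_m^*=\theta$, and these already trap ${\tt T}^*$ in an interval of length one; integrality of the age threshold then closes the argument. The remaining points (justifying the interchange of summation, the limit $\sum\ell_j(k)\to\sum\ell_j^*$, and the degenerate integer case) are routine and involve no genuine estimate.
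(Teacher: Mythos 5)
Your proposal is correct and follows essentially the same route as the paper's proof: establish $\sum_{m\ge1}a_m^*=\theta$, use Lemma~\ref{thm: N', E_m} to evaluate the partial sums via $a_m^*=\frac{\theta}{eM}$, turn the defining condition into the linear inequality $t\le eM-\frac1\theta+1$, and invoke integrality of ${\tt T}^*$. Your explicit two-sided sandwich (admissibility of ${\tt T}^*$ versus non-admissibility of ${\tt T}^*+1$) is a slightly more careful rendering of the step the paper compresses into ``since ${\tt T}^*$ is an integer,'' but it is the same argument.
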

	\begin{remark}
		The threshold in Theorem~\ref{thm: limit threshold} can be applied to the regime $0<\theta<\frac{1}{eM}$ as well. In particular, in this regime, the threshold is  ${\tt T}^*\leq 0$  and the proposed algorithm  reduces to the slotted ALOHA.
	\end{remark}
	
	\begin{algorithm}
		\caption{Stationary Age-based Thinning (SAT)}\label{alg: Limit Threshold-slotted ALOHA}
		\begin{algorithmic}
			\STATE {Set the time horizon $K$.}
			
			\STATE {Set initial points:  $h_i(0)=1$, $w_i(0)=0$ for $i=1,2,\cdots,M$; $c(0)=0$; $p_b(0)=1$; $n(0)=0$; $k=1$. }

			\STATE {Calculate ${\tt T}^*= \lfloor eM-\frac{1}{\theta}+1 \rfloor$.} 
			
			\REPEAT
			\STATE {\bf Step 1:} For transmitter $i$, compute $\delta_i(k)=h_i(k)-w_i(k)$,  if $\delta_i(k)<{\tt T}^*$, then it does not transmit packets; if $\delta_i(k)\geq {\tt T}^*$, then it transmits a packet with probability $p_b(k)$.
			
			\STATE {\bf Step 2:} Calculate $p_b(k+1)$ by \eqref{eq: aloha} in which $M\theta$ is replaced by $\min(M\theta,e^{-1})$.

			\UNTIL{$k=K$}
			
			\STATE Calculate $J_K^{SAT}$ by \eqref{eq: EWSAoI-1}.
		\end{algorithmic}
	\end{algorithm}

	We finally prove asymptotically (as $M\to\infty$) that the Stationary Age-based Thinning (SAT) policy described in Algorithm~\ref{alg: Limit Threshold-slotted ALOHA} significantly reduces age when $1/\theta=o(M)$. Recall that at $\theta=\frac{1}{eM}$, we have $\lim_{M\to\infty}J^{SA}(M)=e$. For larger arrival rates $\theta$ where $1/\theta=o(M)$, we prove that Algorithm~\ref{alg: Limit Threshold-slotted ALOHA} sharply reduces AoI from $e$ to $\frac{e}{2}$. 
	
	\begin{theorem}\label{thm: theta=1, e/2}
		For any $\theta = \frac{1}{o(M)}$, 
		\begin{align*}
			\lim_{M\rightarrow\infty}J^{SAT}(M)=\frac{e}{2}.
		\end{align*}
	\end{theorem}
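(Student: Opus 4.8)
The plan is to read off the stationary time-average age directly from the node distribution $\{\ell_m^{*}\}_m$ of Lemma~\ref{thm: N', E_m} and the closed form of ${\tt T}^{*}$ in Theorem~\ref{thm: limit threshold}, and to show that under SAT essentially the entire age budget is carried by the \emph{inactive} sources, whose age-gains are spread (almost) uniformly over $\{1,\dots,{\tt T}^{*}-1\}$. First I would reduce to the one-source marginal: by symmetry of the $M$ sources and ergodicity of the SAT chain,
\begin{align*}
J^{SAT}(M)=\lim_{K\to\infty}\frac{1}{M^{2}K}\sum_{i=1}^{M}\sum_{k=1}^{K}h_i(k)=\frac{1}{M}\,\mathbb{E}\big[h^{\mathrm{stat}}\big],
\end{align*}
and, writing $h_i(k)=\delta_i(k)+w_i(k)$ and using that $\{w_i(k)\}_k$ evolves autonomously (it equals the time since the last arrival, hence is a geometric renewal with $\mathbb{E}[w^{\mathrm{stat}}]=\tfrac{1-\theta}{\theta}$ regardless of the transmission policy), this becomes $J^{SAT}(M)=\tfrac{1}{M}\big(\sum_{m\ge0}m\,\ell_m^{*}+\tfrac{1-\theta}{\theta}\big)$. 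Because $\theta=\tfrac{1}{o(M)}$, the source-age contribution is $\tfrac{1}{M}\cdot\tfrac{1-\theta}{\theta}=o(1)$, so it suffices to estimate $\tfrac{1}{M}\sum_{m\ge0}m\,\ell_m^{*}$. (Equivalently one can work with the renewal-reward identity $J^{SAT}(M)=\tfrac{1}{M}\,\mathbb{E}[\Gamma_i]/\mathbb{E}[I_i]$ from \eqref{eq:gamma}, with $\mathbb{E}[I_i]=eM(1+o(1))$ coming from the effective sum rate $1/e$ and $\mathbb{E}[\Gamma_i]=\tfrac12\mathbb{E}[I_i^{2}]+o(M^{2})$; both routes come down to the inter-delivery time concentrating around $eM$.)

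\textbf{The dominant term.} Split $\sum_{m\ge0}m\,\ell_m^{*}=\sum_{m=1}^{{\tt T}^{*}-1}m\,\ell_m^{*}+\sum_{m\ge{\tt T}^{*}}m\,\ell_m^{*}$. By Lemma~\ref{thm: N', E_m}, $\ell_0^{*}=\tfrac{1}{eM\theta}$ and $\ell_m^{*}=\tfrac{1}{eM}$ (up to a $1+o(1)$ factor, since the fixed-point relations \eqref{eq: limit collection} were derived from the large-$M$ probabilities of Lemma~\ref{lem: slotted aloha probabilities}) for $1\le m\le{\tt T}^{*}-1$, hence
\begin{align*}
\frac{1}{M}\sum_{m=1}^{{\tt T}^{*}-1}m\,\ell_m^{*}=\frac{({\tt T}^{*}-1){\tt T}^{*}}{2eM^{2}}\,\big(1+o(1)\big).
\end{align*}
By Theorem~\ref{thm: limit threshold}, ${\tt T}^{*}=\lfloor eM-\tfrac{1}{\theta}+1\rfloor$, and $\tfrac{1}{\theta}=o(M)$ forces ${\tt T}^{*}=eM(1+o(1))$; therefore this term equals $\tfrac{e}{2}(1+o(1))$.

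\textbf{The tail is negligible.} It remains to prove $\tfrac{1}{M}\sum_{m\ge{\tt T}^{*}}m\,\ell_m^{*}=o(1)$. Since $\sum_m\ell_m^{*}=1$, Lemma~\ref{thm: N', E_m} gives the stationary \emph{active} mass $q_M:=\sum_{m\ge{\tt T}^{*}}\ell_m^{*}=1-\ell_0^{*}-\sum_{m=1}^{{\tt T}^{*}-1}\ell_m^{*}\le\tfrac{1}{eM}$ (the inequality is the floor bound ${\tt T}^{*}-1\ge eM-\tfrac{1}{\theta}-1$ from Theorem~\ref{thm: limit threshold}), so the expected number of active sources $Mq_M$ is at most the constant $\tfrac{1}{e}$. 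Writing $\sum_{m\ge{\tt T}^{*}}m\,\ell_m^{*}={\tt T}^{*}q_M+\mathbb{E}\big[(\delta^{\mathrm{stat}}-{\tt T}^{*})^{+}\big]$, the first term is ${\tt T}^{*}q_M\le{\tt T}^{*}/(eM)=O(1)$. For the second, a source is active only in the window between the slot its age-gain first reaches ${\tt T}^{*}$ and its next successful delivery — call that length $L_2$ — and $j$ slots into such a window its excess $\delta^{\mathrm{stat}}-{\tt T}^{*}$ is of order $j$ (each slot raises $\delta$ by one on average); an ergodic accounting of $\sum_k\sum_i(\delta_i(k)-{\tt T}^{*})^{+}$ over a long horizon (active windows start system-wide at the delivery rate $1/e$ per slot) then gives $\mathbb{E}\big[(\delta^{\mathrm{stat}}-{\tt T}^{*})^{+}\big]=\Theta\big(\mathbb{E}[L_2^{2}]/M\big)$. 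Hence it suffices to show $\mathbb{E}[L_2^{2}]=o(M^{2})$. For that I use that under SAT the active transmitters form a stabilized slotted-ALOHA subsystem whose ``arrivals'' are the threshold-crossing events: these are sums of weakly dependent indicators of total mean $1/e$, so their per-slot batch sizes have exponential tails; together with Little's law ($\mathbb{E}[L_2]=e\,Mq_M\le1$), the facts that the backlog never exceeds $M$ and that Rivest's pseudo-Bayesian rule clears any backlog of size $n$ in $O(n)$ slots, one obtains $\Pr(L_2>t)\le \mathrm{poly}(t)\,e^{-ct}$ and thus $\mathbb{E}[L_2^{2}]=O(1)=o(M^{2})$.

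\textbf{The main obstacle, and conclusion.} Combining the three displays, $J^{SAT}(M)=\tfrac{e}{2}(1+o(1))+o(1)\to\tfrac{e}{2}$ as $M\to\infty$. The delicate step is the second-moment bound on the active phase $L_2$: the thinning threshold is calibrated precisely so that the active subsystem operates \emph{at} the ALOHA capacity $1/e$ — the zero-drift borderline regime in which generic stability estimates for stabilized slotted ALOHA degenerate. What rescues the argument is that Lemma~\ref{thm: N', E_m} together with the exact value of ${\tt T}^{*}$ already pin the active population at an $O(1/M)$ fraction (expected size below $1/e$), so the active phase is in fact $O(1)$; only a routine batch-resolution/geometric-tail estimate is then needed to upgrade this to the $\mathbb{E}[L_2^{2}]=o(M^{2})$ used above.
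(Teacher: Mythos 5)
Your proposal follows essentially the same route as the paper's proof: decompose $h_i=w_i+\delta_i$, show the source-age term contributes $\frac{1-\theta}{M\theta}=o(1)$, extract the dominant term $\frac{1}{M}\sum_{m=1}^{{\tt T}^*-1}m\,\ell_m^{*}=\frac{{\tt T}^*({\tt T}^*-1)}{2eM^2}\to\frac{e}{2}$ from Lemma~\ref{thm: N', E_m} and Theorem~\ref{thm: limit threshold}, and kill the above-threshold contribution by combining a vanishing active fraction with an $O(M)$ bound on the conditional age-gain. The only real divergence is in that last tail step, where the paper bounds $\mathbb{E}[\delta_i(k^+)\mid\delta_i(k^+)\geq{\tt T}^*]\leq{\tt T}^*+\mathbb{E}[X_i]+\mathbb{E}[D_i]=O(M)$ directly via a peak-age argument and multiplies by $1-s_{{\tt T}^*}\to 0$, rather than routing through your second-moment estimate on the active-window length $L_2$ (whose claimed $O(1)$ bound at the critical load $1/e$ is the shakiest point of your write-up, though it is not needed at that strength); both treatments sit at the same level of rigor and reach the same conclusion, with the paper's being simpler.
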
 
	\begin{remark}
		The minimum NAAoI attained by a stabilized slotted ALOHA is (asymptotically) $e$ and it is achieved at $\theta=\frac{1}{eM}$ (See Theorem \ref{thm: slotted aloha age e} and Fig. \ref{fig-age6_sep1}). Theorem~\ref{thm: theta=1, e/2} shows that our proposed SAT policy attains the NAAoI $\frac{e}{2}$ (asymptotically) for $\theta=\frac{1}{o(M)}$. This provides a multiplicative factor of $2$ compared to the minimum NAAoI under slotted ALOHA. Moreover, simulation results show that the AAT policy outperforms the SAT policy for $\theta=\frac{1}{o(M)}$ (see Fig. \ref{fig-age6_sep1}). \end{remark}
	
	\begin{proof}The proof is given in Appendix~\ref{App: proof of Theorem thm: theta=1, e/2}. Here, we provide the road-map of the proof. In every time slot $k$, the sources can be divided into two groups: 1) sources with $\delta_i(k)<{\tt T^*}$; 2) sources with $\delta_i(k)\geq{\tt T^*}$. The first group of sources have the main contribution to $J^{SAT}(M)$ (which is equal to $\frac{e}{2}$) when $M\rightarrow\infty$. The contribution of the second group of sources to $J^{SAT}(M)$ vanishes when $M\rightarrow\infty$.
	\end{proof}

	\subsection{Extensions to Other Random Access Technologies}\label{sec:extension}
	So far, we restricted attention to slotted ALOHA as the main random access technology. However, in the past decade, novel technologies such as Carrier Sensing Multiple Access (CSMA) technologies have emerged and led to significant improvements in terms of throughput. It is interesting to know how they perform with regard to age, especially since they are known to have large delays \cite{PKHXL2014,LJMLJNRSJW2012,YYTY2003, AMMAAE2020, sleepwakeconference, MWYD2019, AMBYSRSNBS2021, LK1975}.  In this regard, \cite{AMBYSRSNBS2021, sleepwakeconference} have proposed an efficient sleep-wake mechanism for wireless networks that attains the optimal trade-off between minimizing the AoI and energy consumption. In \cite{AMMAAE2020}, a network with $M$ sources (links) under CSMA scheme was considered and the closed form of average age of information was derived as a function of the back-off time and generation rate. In \cite{MWYD2019}, the notion of broadcast age of information  was investigated in wireless networks with CSMA/CA technologies.

	In this section, we outline how the age-based thinning method described in Section~\ref{sec: limiting threshold} (with a fixed threshold) can be applied to other random access technologies. 
	For this purpose, we consider any transmission policy $\pi$ that does not employ coding across packets. All existing collision avoidance and resolution techniques such as  ALOHA and CSMA \cite{Abr70, KLLS1975, Gallager1978, Mosely1979} fall into this class. Now develop a variant of the transmission policy $\pi$ in which only the most recent packets of each transmitter are preserved and all older packets are discarded.  Denote this policy by $\pi^{(1)}$. Define $C^{\pi^{(1)}}(M)$ as the maximum sum throughput when applying the transmission policy $\pi^{(1)}$ in a system with $M$ sources,  and denote the limit, when $M\rightarrow\infty$, by $C^{\pi^{(1)}}$.
	Consider the age-based thinning process  in two steps: (i)  the threshold ${\tt T}^*$ is calculated, (ii) all nodes with  age-gains larger than or equal to ${\tt T}^*$ become active and transmit using the prescribed random access technology\footnote{Here, ``prescribed random access technology'' refers to the specific transmission scheme which is applied to the random access channel.}.  
	
	Consider $M$ to be large, and suppose the expected number of delivered packets per time slot is around $\min(M\theta,C^{\pi^{(1)}})$. Therefore, \eqref{eq: limit-threshold1} remains the same and \eqref{eq: limit-threshold2}  takes the following form:
	\begin{align}\label{eq: general limit-threshold2}
		{\tt T}^*=\max\left\{t|\sum_{m\geq t}a_m^*\geq\frac{C^{\pi^{(1)}}}{M}\right\}.
	\end{align}
	Following a similar argument as in Section \ref{sec: limiting threshold}, the equations in \eqref{eq: limit-threshold3} can be written more generally as follows:
	\begin{equation}\label{eq: general limit-threshold3}
		\begin{aligned}
			&\ell_0^*=\ell_0^{+*}+\min(\theta, \frac{C^{\pi^{(1)}}}{M})\\
			&\ell_m^*=\ell_m^{+*}\qquad\qquad\qquad \qquad\,\,\,\,\,\,\, 1\leq m\leq {\tt T}^*-1\\
			&\ell_m^*=\ell_m^{+*}-r_m^*\min(\theta, \frac{C^{\pi^{(1)}}}{M})\quad m\geq {\tt T}^*
		\end{aligned}
	\end{equation}
	where
	\begin{align*}
		r_m^*=\ell_m^{+*}/\sum_{i={\tt T}^*}^{\infty}\ell_i^{+*}.
	\end{align*}
	Combining \eqref{eq: limit-threshold1}, \eqref{eq: general limit-threshold2}, \eqref{eq: general limit-threshold3}, we thus find
	\begin{align}
		&\ell_0^*=\left\{\begin{array}{ll}\min(1, \frac{C^{\pi^{(1)}}}{M\theta})&m=0\\
			\min(\theta,\frac{C^{\pi^{(1)}}}{M})& 1\leq m\leq{\tt T}^*-1\end{array}\right.\label{eq: l_m+C1}\\
		&\ell_m^{*}=\min(\theta,\frac{C^{\pi^{(1)}}}{M}) \quad\ \ \,1\leq m \leq {\tt T}^*-1\label{eq: l_m+C2}\\
		&a_m^*=\min(\theta^2,\frac{\theta C^{\pi^{(1)}}}{M})\quad\ 1\leq m\leq {\tt T}^*.\label{eq: l_m+C3}
	\end{align}
	Moreover, the  threshold $T^*$ takes a simple closed-form expression as stated below (and proved in Appendix~\ref{App: {thm: general limit threshold}}).
	\begin{theorem}\label{thm: general limit threshold}
		The fixed threshold ${\tt T}^*$ in \eqref{eq: general limit-threshold2} has the following closed form expression: $${\tt T}^*=\left\lfloor \frac{M}{C^{\pi^{(1)}}}-\frac{1}{\theta}+1 \right\rfloor.$$
	\end{theorem}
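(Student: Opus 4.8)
The plan is to mirror the proof of Theorem~\ref{thm: limit threshold}, replacing the slotted-ALOHA sum-throughput $\frac1e$ everywhere by the generic quantity $C^{\pi^{(1)}}$, and then to evaluate the defining maximum in \eqref{eq: general limit-threshold2} in closed form. The only ingredients needed are the stationary arrival-induced fractions $\{a_m^*\}_m$. For these, \eqref{eq: l_m+C3} — a consequence of the generalized Lemma~\ref{thm: N', E_m} — already gives $a_m^*=\min\big(\theta^2,\tfrac{\theta C^{\pi^{(1)}}}{M}\big)$ for $1\le m\le {\tt T}^*$.

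The first step is the ``conservation'' identity $\sum_{m=1}^{\infty}a_m^*=\theta$. This follows from the definition \eqref{eq: arrival nodes order m}, which is purely arrival-driven and hence technology independent: summing $a_m^*=\theta^2\sum_{j=0}^{m-1}\ell_j^*(1-\theta)^{m-j-1}$ over $m$, interchanging the order of summation, and using $\sum_{m\ge j+1}(1-\theta)^{m-j-1}=\tfrac1\theta$ together with $\sum_{j\ge 0}\ell_j^*=1$ (the node distribution is a probability distribution over orders) yields $\sum_m a_m^*=\theta$.

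Next, restrict to the relevant regime $M\theta> C^{\pi^{(1)}}$; in the complementary regime no thinning is needed, the set in \eqref{eq: general limit-threshold2} is (essentially) empty, the outer $\max(1,\cdot)$ returns $1$, and $\tfrac{M}{C^{\pi^{(1)}}}-\tfrac1\theta+1\le 1$, so the claimed formula still holds. When $M\theta> C^{\pi^{(1)}}$ we have $\min\big(\theta^2,\tfrac{\theta C^{\pi^{(1)}}}{M}\big)=\tfrac{\theta C^{\pi^{(1)}}}{M}$, hence for any $1\le t\le {\tt T}^*$,
\begin{align*}
\sum_{m\ge t}a_m^*=\theta-\sum_{m=1}^{t-1}a_m^*=\theta-(t-1)\frac{\theta C^{\pi^{(1)}}}{M}.
\end{align*}
Imposing the threshold condition $\sum_{m\ge t}a_m^*\ge \tfrac{C^{\pi^{(1)}}}{M}$ and dividing through by $\tfrac{\theta C^{\pi^{(1)}}}{M}$ rearranges to $t\le \tfrac{M}{C^{\pi^{(1)}}}-\tfrac1\theta+1$, so the largest admissible integer is $\big\lfloor \tfrac{M}{C^{\pi^{(1)}}}-\tfrac1\theta+1\big\rfloor$; taking $\max$ with $1$ to absorb the trivial regime gives the stated expression.

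The main point requiring care is the self-referential character of the argument: the closed form $a_m^*=\tfrac{\theta C^{\pi^{(1)}}}{M}$ is asserted only for $m\le {\tt T}^*$, while ${\tt T}^*$ is precisely what is being solved for. As in the proof of Theorem~\ref{thm: limit threshold}, this is handled by a consistency check — one verifies that the candidate ${\tt T}^*=\max\big(1,\lfloor \tfrac{M}{C^{\pi^{(1)}}}-\tfrac1\theta+1\rfloor\big)$, together with \eqref{eq: l_m+C1}--\eqref{eq: l_m+C3}, satisfies \eqref{eq: limit-threshold1} and \eqref{eq: general limit-threshold3} and is indeed the maximizer in \eqref{eq: general limit-threshold2}. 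The argument closes because, in computing $\sum_{m\ge t}a_m^*$ for $t\le {\tt T}^*$, the only indices entered with the closed form are $m<t\le {\tt T}^*$, where it is valid. A secondary, routine point is to confirm that the approximation ``expected number of deliveries per slot $\approx\min(M\theta,C^{\pi^{(1)}})$'' is exactly what makes \eqref{eq: general limit-threshold3} hold, so that no additional error terms propagate into the threshold computation.
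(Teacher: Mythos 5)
Your proposal is correct and follows essentially the same route as the paper's proof: establish $\sum_{m\ge 1}a_m^*=\theta$, split into the regimes $M\theta\le C^{\pi^{(1)}}$ and $M\theta>C^{\pi^{(1)}}$, compute $\sum_{m\ge t}a_m^*=\theta-(t-1)\tfrac{\theta C^{\pi^{(1)}}}{M}$ from \eqref{eq: l_m+C3}, and solve the threshold inequality with integrality. Your explicit remarks on the self-referential use of $a_m^*$ for $m\le{\tt T}^*$ and on the conservation identity are sound refinements of details the paper leaves implicit, but they do not change the argument.
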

	Using this result, Algorithm~\ref{alg: General Threshold-slotted ALOHA}  proposes a decentralized age-based thinning method that can be applied to any given stationary random access technology. 
	\begin{algorithm}
		\caption{Generalized Stationary Age-based Thinning (GSAT)}\label{alg: General Threshold-slotted ALOHA}
		\begin{algorithmic}
			\STATE {Set the time horizon $K$.}
			
			\STATE {Set initial points:  $h_i(0)=1$, $w_i(0)=0$ for $i=1,2,\cdots,M$; $c(0)=0$; $p_b(0)=1$; $n(0)=0$; $k=1$. }
			
			\STATE {Calculate the threshold ${\tt T}(C^{\pi^{(1)}})= \lfloor \frac{M}{C^{\pi^{(1)}}}-\frac{1}{\theta}+1 \rfloor$.}
			\REPEAT
			\STATE For the source node $i$, compute $\delta_i(k)=h_i(k)-w_i(k)$. If $\delta_i(k)<{\tt T}(C^{\pi^{(1)}})$ remain silent; If $\delta_i(k)\geq {\tt T}(C^{\pi^{(1)}})$,  transmits according to the random access technology $\pi^{(1)}$.
			\UNTIL{$k=K$}
			
			\STATE Calculate $J_K^{GSAT}$ by \eqref{eq: EWSAoI-1}.
		\end{algorithmic}
	\end{algorithm}
	
	We prove an analogue to Theorem \ref{thm: theta=1, e/2}, showing that the Generalized Stationary Age-based Thinning policy (GSAT) proposed in Algorithm~\ref{alg: General Threshold-slotted ALOHA} reduces age  to $\frac{1}{2C^{\pi^{(1)}}}$  as $\theta$ increases.
	\begin{theorem}\label{thm: theta=1, xi/2}
		For any $\theta=\frac{1}{o(M)}$, $$\lim_{M\rightarrow\infty}J^{GSAT}(M)=\frac{1}{2C^{\pi^{(1)}}}.$$
	\end{theorem}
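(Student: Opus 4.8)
\textbf{Proof proposal for Theorem~\ref{thm: theta=1, xi/2}.}

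The plan is to mimic the road-map already described for Theorem~\ref{thm: theta=1, e/2}, replacing the slotted-ALOHA throughput $\frac{1}{e}$ by the generic throughput $C^{\pi^{(1)}}$ throughout. First I would fix the regime $\theta=\frac{1}{o(M)}$, so that $\frac{1}{\theta}=o(M)$ and hence, by Theorem~\ref{thm: general limit threshold}, the fixed threshold is ${\tt T}^*=\lfloor \frac{M}{C^{\pi^{(1)}}}-\frac{1}{\theta}+1\rfloor$, which grows linearly in $M$. The key structural input is that in steady state the age-gain distribution is (asymptotically) flat at height $\min(\theta,\frac{C^{\pi^{(1)}}}{M})=\frac{C^{\pi^{(1)}}}{M}$ on the block $1\le m\le {\tt T}^*-1$, as recorded in \eqref{eq: l_m+C1}--\eqref{eq: l_m+C3}; I would first confirm that the derivation of those identities (the generalized analogue of Lemma~\ref{thm: N', E_m}) goes through verbatim once $\frac{1}{eM}$ is replaced by $\frac{C^{\pi^{(1)}}}{M}$ and the collision-feedback averaging is replaced by the corresponding stationary success probability for technology $\pi^{(1)}$.

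Next I would split the sources in each time slot into the \emph{silent} group $\{i:\delta_i(k)<{\tt T}^*\}$ and the \emph{active} group $\{i:\delta_i(k)\ge {\tt T}^*\}$, and compute each group's contribution to $J^{GSAT}(M)=\lim_{K\to\infty}\frac{1}{M^2K}\sum_i\sum_k h_i(k)$. For the silent group, the point is that a source becomes active (its age-gain first reaches ${\tt T}^*$) only after roughly ${\tt T}^*\approx \frac{M}{C^{\pi^{(1)}}}$ time slots of aging since its last delivery, and the $h_i(k)$ values accumulated over such an inter-delivery interval of length $\approx \frac{M}{C^{\pi^{(1)}}}$ contribute, via the usual triangular-area computation as in \eqref{eq:gamma}, a term of order $\frac{1}{2}({\tt T}^*)^2$ per delivery; dividing by $M^2$ and by the per-slot delivery throughput $\frac{C^{\pi^{(1)}}}{M}$ and by the number of sources $M$ gives the constant $\frac{1}{2C^{\pi^{(1)}}}$ in the limit. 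Concretely, $\frac{1}{M^2}\cdot\frac{C^{\pi^{(1)}}}{M}\cdot\frac{1}{2}\left(\frac{M}{C^{\pi^{(1)}}}\right)^2\to\frac{1}{2C^{\pi^{(1)}}}$, and the lower-order terms (the $-\frac12 I_i(m)$ and the $D_i(m-1)I_i(m)$ cross term, where $D_i\le {\tt T}^*$ as well but the delivery rate controls its average) contribute vanishing amounts after normalization.

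For the active group I would show its total contribution to $J^{GSAT}(M)$ vanishes as $M\to\infty$: by \eqref{eq: l_m+C2} the expected fraction of nodes with age-gain exactly $m$ for $m\ge 1$ is $O(1/M)$, so the expected fraction of active nodes is $\sum_{m\ge {\tt T}^*}\ell_m^* $, which one bounds using that the aggregate arrival rate into the active region equals the aggregate departure (delivery) rate $\frac{C^{\pi^{(1)}}}{M}$ in stationarity; combined with the fact that an active node's age $h_i(k)$ is at most of order ${\tt T}^*+(\text{time spent active})$ and that the expected time spent active before delivery is $O(1)$ under a technology of throughput $C^{\pi^{(1)}}$, the active group's normalized contribution is $O(1/M)\to 0$. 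The main obstacle I anticipate is making the second step rigorous at the generic level: for slotted ALOHA one has the explicit Bernoulli/Poisson structure and Lemma~\ref{lem: slotted aloha probabilities} to control the number of active nodes and the time-to-delivery, whereas here one only has the \emph{definition} of $C^{\pi^{(1)}}$ as a maximum sum throughput, so one must argue that running $\pi^{(1)}$ restricted to the (random, slowly varying) set of active nodes still delivers at rate $\approx C^{\pi^{(1)}}$ and that the backlog of active nodes stays $O(1)$ in expectation — this is essentially a stability/ergodicity claim about $\pi^{(1)}$ operating in the thinned system, and I would handle it by invoking the stationarity assumption on the scheme together with the throughput-matching choice of ${\tt T}^*$ in \eqref{eq: general limit-threshold2}, paralleling the argument in Appendix~\ref{App: proof of Theorem thm: theta=1, e/2} and noting that the only property of ALOHA used there is its guaranteed throughput and the resulting flat node distribution, both of which have been abstracted into $C^{\pi^{(1)}}$ and \eqref{eq: l_m+C1}--\eqref{eq: l_m+C3}.
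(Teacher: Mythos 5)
Your proposal follows essentially the same route as the paper's Appendix~\ref{App: {thm: theta=1, xi/2}}: reduce to the argument for Theorem~\ref{thm: theta=1, e/2} with $\frac{1}{e}$ replaced by $C^{\pi^{(1)}}$, use the flat stationary distribution \eqref{eq: l_m+C1}--\eqref{eq: l_m+C3} together with the threshold of Theorem~\ref{thm: general limit threshold} to extract $\frac{1}{2C^{\pi^{(1)}}}$ from the below-threshold nodes, and show the above-threshold nodes contribute nothing in the limit. Two remarks. First, a bookkeeping slip: the displayed quantity $\frac{1}{M^2}\cdot\frac{C^{\pi^{(1)}}}{M}\cdot\frac{1}{2}\bigl(\frac{M}{C^{\pi^{(1)}}}\bigr)^2$ equals $\frac{1}{2MC^{\pi^{(1)}}}$, not $\frac{1}{2C^{\pi^{(1)}}}$ --- you dropped the factor $M$ coming from the sum over the $M$ sources; the paper's version of this computation is $\lim_{M}\frac{1}{M}\sum_{j=1}^{{\tt T}^*-1} j\,\ell_j^{*+}=\lim_{M}\frac{1}{M}\cdot\frac{{\tt T}^*({\tt T}^*-1)}{2}\cdot\frac{C^{\pi^{(1)}}}{M}=\frac{1}{2C^{\pi^{(1)}}}$. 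Second, and more substantively, the obstacle you flag at the end --- establishing an $O(1)$ expected sojourn time and backlog for the active set under a generic $\pi^{(1)}$ --- is not actually needed. Since the expected fraction of active nodes, $1-s_{{\tt T}^*}$, already vanishes, it suffices to bound $\mathbb{E}[\delta_i(k^+)\mid \delta_i(k^+)\ge{\tt T}^*]$ by $O(M)$ rather than by ${\tt T}^*+O(1)$, and the $O(M)$ bound follows for free from the definition of throughput: the paper's Lemma~\ref{lem: D_i O(M)} observes $\mathbb{E}[D_i]\le\mathbb{E}[I_i]=M/C^{\pi^{(1)}}(M)$, so the conditional age-gain is at most ${\tt T}^*+o(M)+O(M)=O(M)$ and the active-group term is bounded by $c\,(1-s_{{\tt T}^*})\to 0$. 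The stability/ergodicity claim you were worried about therefore dissolves into a one-line consequence of the definition of $C^{\pi^{(1)}}$.
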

	\begin{proof}The proof of Theorem~\ref{thm: theta=1, xi/2} is given in Appendix~\ref{App: {thm: theta=1, xi/2}}.\end{proof}
	
	\begin{remark}
		The results in this section are stronger than  \cite{Z.Jiang-18-1} in three aspects: (i) we gave a simple and explicit expression for the threshold ${\tt T}^*$, while the threshold has to be computed numerically in  \cite{Z.Jiang-18-1}; (ii) we found the asymptotical NAAoI ($\lim_{M\rightarrow\infty} \mathbb{E}[J^{GSAT}(M)]$) analytically; (iii) the threshold in this section can be applied not only to CSMA, but also to any  other transmission policy.
	\end{remark}
	\begin{remark}
		The framework we have built, particularly Algorithm~\ref{alg: General Threshold-slotted ALOHA} and  Theorem~\ref{thm: theta=1, xi/2},  can be directly applied to other settings and multi-access technologies such as MAC with common information in \cite{YODT2015} and a queue-length-based MAC in \cite{JNBTRS2012}. These technologies can achieve sum capacity $1$, like CSMA, and their corresponding normalized age tends to $\frac{1}{2}$ as $M$ gets large.
	\end{remark}

	\section{Numerical Results}\label{sec: Numerical results and discussions}

	\begin{figure*}[ht!]
		\centering
		\begin{subfigure}[b]{0.45\textwidth}
			\begin{tikzpicture}[scale=0.7]
	\begin{axis}
		[axis lines=left,
		width=2.9in,
		height=2.5in,
		scale only axis,
		xlabel=$\theta$,
		ylabel=NAAoI,
		xmin=0,xmax=1,
		ymin=1,ymax=2,
		xtick={0.1,0.2,0.3,0.4,0.5,0.6,0.7,0.8,0.9,1},
		ytick={1,1.2,1.4,1.6,1.8,2},
		ymajorgrids=true,
		grid style=dashed,
		scatter/classes={
			a={mark=+, draw=black},
			b={mark=star, draw=black}
		}
		]

		\addplot[color=black, smooth,thick]
		coordinates{(0.002,1.6643)(0.05,1.4013)(0.1,1.3875)(0.15,1.3898)(0.2,1.3887)(0.25,1.3813)(0.3,1.3832)(0.35,1.3847)(0.4,1.3866)(0.45,1.3804)(0.5,1.3800)(0.55,1.3827)(0.6, 1.3768)(0.65,1.3764)(0.7, 1.3776)(0.75,1.3638)(0.8,1.3616)(0.85,1.3621)(0.9,1.3397)(0.95,1.2345)(1,1.0578)
		};
		
		\addplot[color=black, dashed,thick]
		coordinates{(0.002,1.6983)(0.05,1.4410)(0.1,1.4293)(0.15,1.4301)(0.2,1.4279)(0.25,1.4261)(0.3,1.4255)(0.35,1.4247)(0.4,1.4248)(0.45,1.4211)(0.5,1.4256)(0.55,1.4219)(0.6, 1.4232)(0.65,1.4212)(0.7, 1.4221)(0.75,1.4216)(0.8,1.4158)(0.85,1.4176)(0.9,1.4058)(0.95,1.3847)(1,1.3590)
		};
		
		\addplot[color=red, smooth,thick]
		coordinates{(0.01,1.8157)(0.05,1.4716)(0.1,1.4553)(0.15,1.4370)(0.2,1.4159)(0.25,1.4110)(0.3,1.4035)(0.35,1.4066)(0.4,1.4076)(0.45,1.4015)(0.5,1.4044)(0.55,1.4008)(0.6, 1.3889)(0.65,1.3780)(0.7, 1.3782)(0.75,1.3672)(0.8,1.3633)(0.85,1.3657)(0.9,1.3439)(0.95,1.2770)(1,1.0633)
		};
		
		\addplot[color=red, dashed,thick]
		coordinates{(0.01,1.7277)(0.05,1.4947)(0.1,1.4706)(0.15,1.4637)(0.2,1.4522)(0.25,1.4514)(0.3,1.4474)(0.35,1.4430)(0.4,1.4414)(0.45,1.4403)(0.5,1.4347)(0.55,1.4375)(0.6, 1.4420)(0.65,1.4317)(0.7, 1.4259)(0.75,1.4324)(0.8,1.4237)(0.85,1.4208)(0.9,1.4095)(0.95,1.3922)(1,1.3675)
		};

		\addplot[
		color=blue, smooth,thick]
		coordinates{(0.02,1.8459)(0.05,1.5713)(0.1,1.5012)(0.15,1.4866)(0.2,1.4533)(0.25,1.4425)(0.3,1.4368)(0.35,1.4347)(0.4,1.4345)(0.45,1.4339)(0.5,1.4276)(0.55,1.4168)(0.6,1.4141)(0.65,1.4172)(0.7,1.4180)(0.75,1.3953)(0.8,1.3844)(0.85,1.3738)(0.9,1.3463)(0.95,1.2945)(1,1.0721)
		};

		\addplot[
		color=blue, dashed,thick]
		coordinates{(0.02,1.8286)(0.05,1.5842)(0.1,1.5069)(0.15,1.4967)(0.2,1.4870)(0.25,1.4806)(0.3,1.4730)(0.35,1.4824)(0.4,1.4711)(0.45,1.4622)(0.5,1.4599)(0.55,1.4598)(0.6, 1.4512)(0.65,1.4500)(0.7, 1.4442)(0.75,1.4444)(0.8,1.4350)(0.85,1.4267)(0.9,1.4087)(0.95,1.3919)(1,1.3584)
		};

		\legend{Adaptive Thinning $M=500$, Stationary Thinning $M=500$, Adaptive Thinning $M=100$, Stationary Thinning $M=100$, Adaptive Thinning $M=50$, Stationary Thinning $M=50$}[legend pos= north east]
	\end{axis}
	\draw[shift={(0,1)},color=black] (2pt,-26.5pt) -- (2pt,-30pt) node[below] {$\frac{1}{M}$};
	
\end{tikzpicture}
			\caption{NAAoI for stationary and adaptive age-based policies.}
			\label{fig-age1}
		\end{subfigure}
		~   \begin{subfigure}[b]{0.45\textwidth}
\begin{tikzpicture}[scale=0.7]
	\begin{axis}
		[axis lines=left,
		width=2.9in,
		height=2.5in,
		scale only axis,
		xlabel=$\theta$,
		ylabel=Probabilities,
		xmin=0.95, xmax=1,
		ymin=0, ymax=.5,
		xtick={},
		ytick={},
		ymajorgrids=true,
		legend style={at={(0.25,0.6)},anchor=west},
		grid style=dashed,
		scatter/classes={
			a={mark=+, draw=black},
			b={mark=star, draw=black}
		}
		]

		\addplot[color=black, mark=star,thick]
		coordinates{(0.95,0.3700)(0.955,0.3804)(0.96,0.3829)(0.965,0.3863)(0.97,0.3863)(0.975,0.3900)(0.98,0.4045)(0.985,0.4127)(0.99,0.4229)(0.995,0.4505)(1,0.4778)
		};
		
		\addplot[color=red, mark=triangle,thick]
		coordinates{(0.95,0.0222)(0.995,0.0215)(0.96,0.0219) (0.965,0.0216) (0.97,0.0226)(0.975,0.0218)(0.98,0.0215)(0.985,0.0217)(0.99,0.0216)(0.995,0.0223)(1,0.0221)
		};

		\legend{Prob. of successful transmission, Prob. of nodes being active}
	\end{axis}
\end{tikzpicture}
			\caption{Probabilities of successful transmission and  being active in adaptive thinning policies.}
			\label{fig-age5}
		\end{subfigure}
		\caption{NAAoI and success transmission probabilities}
	\end{figure*}

	\begin{figure*}[ht!]
		\centering
		\begin{subfigure}[b]{0.45\textwidth}
\begin{tikzpicture}[scale=0.7]
	\begin{axis}
		[axis lines=left,
		width=2.9in,
		height=2.5in,
		scale only axis,
		xlabel=$\theta$,
		ylabel=NAAoI,
		xmin=0,xmax=1,
		ymin=0,ymax=5,
		xtick={0.1,0.2,0.3,0.4,0.5,0.6,0.7,0.8,0.9,1},
		ytick={0,1,2,3,4,5}, 
		ymajorgrids=true,
		legend style={at={(0.05, 0.78)},anchor=west},
		grid style=dashed,
		scatter/classes={
			a={mark=+, draw=black},
			b={mark=star, draw=black}
		}
		]

		
		\addplot[color=purple,  mark=star, thick]
		coordinates{(0.002,2.684991)(0.05,2.7149)(0.1,2.7149)(0.15,2.7103)(0.2,2.7129)(0.25,2.7083)(0.3,2.7164)(0.35,2.7149)(0.4,2.7158)(0.45,2.7156)(0.5,2.7122)(0.55,2.7143)(0.6,2.7145)(0.65,2.7151)(0.7,2.7103)(0.75,2.7157)(0.8,2.7141)(0.85,2.7139)(0.9,2.7136)(0.95,2.7122)(1,2.7163)
		};
		
		\addplot[color = purple, mark = square, thick]
		coordinates{(1, 1.5908)};
		
		\addplot[color=black,mark=triangle,thick]
		coordinates{(0.002,1.6983)(0.05,1.4410)(0.1,1.4293)(0.15,1.4301)(0.2,1.4279)(0.25,1.4261)(0.3,1.4255)(0.35,1.4247)(0.4,1.4248)(0.45,1.4211)(0.5,1.4256)(0.55,1.4219)(0.6, 1.4232)(0.65,1.4212)(0.7, 1.4221)(0.75,1.4216)(0.8,1.4158)(0.85,1.4176)(0.9,1.4058)(0.95,1.3847)(1,1.3590)
		};

		\addplot[color=red,
		mark=star,thick]
		coordinates{(0.002,1.6643)(0.05,1.4013)(0.1,1.3875)(0.15,1.3898)(0.2,1.3887)(0.25,1.3813)(0.3,1.3832)(0.35,1.3847)(0.4,1.3866)(0.45,1.3804)(0.5,1.3800)(0.55,1.3827)(0.6, 1.3768)(0.65,1.3784)(0.7, 1.3776)(0.75,1.3738)(0.8,1.3616)(0.85,1.3621)(0.9,1.3397)(0.95,1.2345)(1,1.0778)
		};

		\addplot[color=brown,thick]
		coordinates{(0.002,1.00)(0.05,0.88)(1,0.880)
		};

		\addplot[color=black, mark=o,thick]
		coordinates{(0.002,1.0064)(0.05,0.5023)(0.1,0.5013)(0.15,0.5008)(0.2,0.5005)(0.25,0.5002)(0.3,0.5000)(0.35,0.4999)(0.4,0.4998)(0.45,0.4996)(0.5,0.4995)(0.55,0.4995)(0.6, 0.4994)(0.65,0.4993)(0.7,0.4992)(0.75,0.4992)(0.8,0.4991)(0.85,0.4991)(0.9,0.4991)(0.95,0.4990)(1,0.4989)
		};

		\legend{Stationary Randomized \cite{S.K.Kaul-2017}, The ``LAZY'' policy in \cite{DCAEUOK2020}, Stationary Thinning with slotted ALOHA,  Adaptive Thinning with slotted ALOHA,   Lower Bounds \eqref{eq:lbMtheta} and \eqref{eq:lowerboundaloha} (distributed),Centralized Max-Weight \cite{IKEM2021}
		}
	\end{axis}
	\draw[shift={(0,1)},color=black] (0pt,-26.5pt) -- (0pt,-30pt) node[below] {$\frac{1}{M}$};
	
	\draw[shift={(-0.3,2.25)},color=black]  node[below] {$\frac{e}{2}$};
	
	\draw[shift={(-0.3,3.7)},color=black]  node[below] {$e$};

\end{tikzpicture}
			\caption{NAAoI of slotted ALOHA and its comparisons}
			\label{fig-age2_sep1}
		\end{subfigure}
		\begin{subfigure}[b]{0.45\textwidth}
\begin{tikzpicture}[scale=0.7]
	\begin{axis}
		[axis lines=left,
		width=2.9in,
		height=2.5in,
		scale only axis,
		xlabel=$\theta$,
		ylabel=NAAoI,
		xmin=0,xmax=1,
		ymin=0,ymax=3,
		xtick={0.1,0.2,0.3,0.4,0.5,0.6,0.7,0.8,0.9,1},
		ytick={0,1,2,3,4,5}, 
		ymajorgrids=true,
		legend style={at={(0.01, 0.8)},anchor=west},
		grid style=dashed,
		scatter/classes={
			a={mark=+, draw=black},
			b={mark=star, draw=black}
		}
		]

		\addplot[color=black, mark = square, thick]
		coordinates{
			(0.1,1.0200)(0.15,1.0134)(0.2,1.0100)(0.25,1.0080)(0.3,1.0067)(0.35,1.0057)(0.4,1.0050)(0.45,1.0045)(0.5,1.0040)(0.55,1.0037)(0.6, 1.0034)(0.65,1.0031)(0.7,1.0029)(0.75,1.0027)(0.8,1.0025)(0.85,1.0024)(0.9,1.0022)(0.95,1.0021)(1,1.0020)
		};

		\addplot[color=green, mark=square,thick]
		coordinates{(0.002,1.0115)(0.05,0.5396)(0.1,0.5302)(0.15,0.5263)(0.2,0.5244)(0.25,0.5231)(0.3,0.5219)(0.35,0.5214)(0.4,0.5207)(0.45,0.5202)(0.5,0.5201)(0.55,0.5198)(0.6, 0.5195)(0.65,0.5193)(0.7,0.5191)(0.75,0.5189)(0.8,0.5187)(0.85,0.5185)(0.9,0.5184)(0.95,0.5183)(1,0.5183)
		};

		\addplot[color=brown, mark = square, thick]
		coordinates{(0.1,0.5110)(0.15,0.5110)(0.2,0.5110)(0.25,0.5110)(0.3,0.5110)(0.35,0.5110)(0.4,0.5110)(0.45,0.5110)(0.5,0.5110)(0.55,0.5110)(0.6,0.5110)(0.65,0.5110)(0.7,0.5110)(0.75,0.5110)(0.8,0.5110)(0.85,0.5110)(0.9,0.5110)(0.95,0.5110)(1,0.5110)
		};
		
		\addplot[color=purple,thick]
		coordinates{(0.002,1.00)(0.05,0.5010)(0.1,0.5010)(0.15,0.5010)(0.2,0.5010)(0.25,0.5010)(0.3,0.5010)(0.35,0.5010)(0.4,0.5010)(0.45,0.5010)(0.5,0.5010)(0.55,0.5010)(0.6, 0.5010)(0.65,0.5010)(0.7,0.5010)(0.75,0.5010)(0.8,0.5010)(0.85,0.5010)(0.9,0.5010)(0.95,0.5010)(1,0.5010)
		};
		
		\addplot[color=black, mark=o,thick]
		coordinates{(0.002,1.0064)(0.05,0.5023)(0.1,0.5013)(0.15,0.5008)(0.2,0.5005)(0.25,0.5002)(0.3,0.5000)(0.35,0.4999)(0.4,0.4998)(0.45,0.4996)(0.5,0.4995)(0.55,0.4995)(0.6, 0.4994)(0.65,0.4993)(0.7,0.4992)(0.75,0.4992)(0.8,0.4991)(0.85,0.4991)(0.9,0.4991)(0.95,0.4990)(1,0.4989)
		};

		\legend{The CSMA in \cite{AMMAAE2020}, Stationary Thinning with CSMA, Asymptotical Sleep-Wake scheme in \cite{AMBYSRSNBS2021}, Lower Bounds \eqref{eq:lowerboundaloha} and \eqref{eq:lowerboundcsma},
		Centralized Max-Weight \cite{IKEM2021}
		}
	\end{axis}
	\draw[shift={(0,1)},color=black] (0pt,-26.5pt) -- (0pt,-30pt) node[below] {$\frac{1}{M}$};

\end{tikzpicture}
			\caption{NAAoI of CSMA and its comparisons}
			\label{fig-age2_sep2}
		\end{subfigure}
		\caption{NAAoI when $M=500$ v.s $\theta\in(\frac{1}{M},1]$}
		\label{fig-age2}
	\end{figure*}

	\begin{figure*}[ht!]
		\centering
		\begin{subfigure}[b]{0.45\textwidth}
\begin{tikzpicture}[scale=0.7]
	\draw[shift={(7.1,-2pt)},color=black]  node[below] {$\frac{1}{M}$};
	\draw[shift={(-2pt,-2pt)},color=black]  node[below] {$0$};
	\draw[shift={(2.65,-2pt)},color=black]  node[below] {$\frac{1}{eM}$};
	\draw[shift={(2.3,-2pt)},color=gray,dashed] (0.3,0) -- (0.3,5.5) node[below] {}; hub9gbh657r

	\begin{axis}
		[axis lines=left,
		width=2.9in,
		height=2.5in,
		scale only axis,
		xlabel=$\theta$,
		ylabel=NAAoI,
		xmin=0, xmax=0.002,
		ymin=0, ymax=8,
		xticklabels=\empty,
		ymajorgrids=true,
		grid style=dashed,
		legend style={at={(0.01,.95)},anchor=west},
		scaled ticks=false, tick label style={/pgf/number format/fixed},
		scatter/classes={
			a={mark=+, draw=black},
			b={mark=star, draw=black}
		}
		]

		\addplot[
		color=blue,  mark=diamond,thick]
		coordinates{(0.00024525,8.1214)(0.00049051,4.0774)(0.00073576,2.7455)(0.00098101,2.7449)(0.0012,2.828455)(0.0015,3.121583)(0.0017,3.654413)(0.0018,4.205794)(0.0019,5.229598)(0.0020,7.144848)
		};
		
		\addplot[color=purple,  mark=star,thick]
		coordinates{(0.00024525,8.2473)(0.00049051,4.4665)(0.00073576,3.296686)(0.00098101,3.097862)(0.0012,2.910726)(0.0015,2.719564)(0.0017,2.695324)(0.0020,2.684991)
		};
		\addplot[color=black,
		mark=triangle,thick]
		coordinates{(0.00024525,8.1549)(0.00049051,4.0774)(0.00073576,2.7268)(0.00098101,2.2161)(0.0012,1.9514)(0.0015,1.8038)(0.0017,1.7406)(0.0020,1.6983)
		};

		\addplot[color=red, mark=star,thick]
		coordinates{(0.00024525,8.1549)(0.00049051,4.0774)(0.00073576,2.7263)(0.00098101,2.5984)(0.0012,2.2805)(0.0015,1.9658)(0.0017,1.7867)(0.0020,1.6643)
		};

		\addplot[color=black,thick,mark=o]
		coordinates{(0.00024525,8.1549)(0.00049051,4.0774)(0.00073576,2.7199)(0.00098101,2.0741)(0.0012,1.6681)(0.0015,1.3419)(0.0017,1.1152)(0.0020,1.0174)
		};

		\addplot[color=purple,thick]
		coordinates{(0.00024525,8.1549)(0.00049051,4.0774)(0.00073576,2.7183)(0.00098101,2.0387)(0.0012,1.6667)(0.0015,1.3333)(0.0017,1.1765)(0.0020,1.0000)
		};
		\addplot[color=brown,thick]
		coordinates{(0.00024525,    8.1549)
			(0.00049051 ,   4.0774)
			(0.00073576 ,   2.7183)
			(0.00098101  ,  2.0387)
			(0.0012  ,  1.6667)
			(0.0015  ,  1.3333)
			(0.0017 ,   1.1765)
			(0.0020  ,  1)
		};

		\legend{Slotted ALOHA (with unit buffer size), Stationary Randomized \cite{S.K.Kaul-2017}, Stationary Thinning with slotted ALOHA, Adaptive Thinning with slotted ALOHA, Centralized Max-Weight \cite{IKEM2021}, Lower Bound \eqref{eq:lbMtheta}}
	\end{axis}

\end{tikzpicture}
			\caption{NAAoI of slotted ALOHA and its comparison}
			\label{fig-age6_sep1}
		\end{subfigure}
		\begin{subfigure}[b]{0.45\textwidth}
\begin{tikzpicture}[scale=0.7]
	\draw[shift={(7.1,-2pt)},color=black]  node[below] {$\frac{1}{M}$};
	\draw[shift={(-2pt,-2pt)},color=black]  node[below] {$0$};
	\draw[shift={(2.65,-2pt)},color=black]  node[below] {$\frac{1}{eM}$};
	\draw[shift={(2.3,-2pt)},color=gray,dashed] (0.3,0) -- (0.3,5.5) node[below] {}; hub9gbh657r

	\begin{axis}
		[axis lines=left,
		width=2.9in,
		height=2.5in,
		scale only axis,
		xlabel=$\theta$,
		ylabel=NAAoI,
		xmin=0, xmax=0.002,
		ymin=0, ymax=8,
		xticklabels=\empty,
		ymajorgrids=true,
		grid style=dashed,
		legend style={at={(0.23,.85)},anchor=west},
		scaled ticks=false, tick label style={/pgf/number format/fixed},
		scatter/classes={
			a={mark=+, draw=black},
			b={mark=star, draw=black}
		}
		]

		\addplot[color=black,thick,mark=o]
		coordinates{(0.00024525,8.1549)(0.00049051,4.0774)(0.00073576,2.7199)(0.00098101,2.0741)(0.0012,1.6681)(0.0015,1.3419)(0.0017,1.1152)(0.0020,1.0174)
		};

		\addplot[color=green, mark=square,thick]
		coordinates{(0.00024525,8.0899)(0.00049051,4.0490)(0.00073576,2.7305)(0.00098101,2.0512)(0.0012,1.6822)(0.0015,1.3552)(0.0017,1.2051)(0.0020,1.0615)
		};
		
		\addplot[color=purple,thick]
		coordinates{(0.00024525,8.1549)(0.00049051,4.0774)(0.00073576,2.7183)(0.00098101,2.0387)(0.0012,1.6667)(0.0015,1.3333)(0.0017,1.1765)(0.0020,1.0000)
		};
		\addplot[color=brown,thick]
		coordinates{(0.00024525,    8.1549)
			(0.00049051 ,   4.0774)
			(0.00073576 ,   2.7183)
			(0.00098101  ,  2.0387)
			(0.0012  ,  1.6667)
			(0.0015  ,  1.3333)
			(0.0017 ,   1.1765)
			(0.0020  ,  1)
		};

		\legend{Centralized Max-Weight \cite{IKEM2021}, Stationary Thinning with CSMA, Lower Bound \eqref{eq:lbMtheta}}
	\end{axis}

\end{tikzpicture}
			\caption{NAAoI of Stationary thinning with CSMA}
			\label{fig-age6_sep2}
		\end{subfigure}
		\caption{NAAoI when $M=500$ v.s $\theta\in(0,\frac{1}{M}]$}
		\label{fig-age6}
	\end{figure*}

	In this section, we verify our findings through simulations. 
	Figure~\ref{fig-age1} shows the normalized age under adaptive  and stationary age-based transmission policies for $M=50, 100, 500$. For stationary age-based policies, the normalized age converges to $\frac{e}{2}$ when $M$ is large, validating our findings in Theorem~\ref{thm: theta=1, e/2}. 
	
	The performance of the adaptive policy is better than that of the stationary age-based policy for $\theta>\frac{1}{M}$  and the efficacy (the gap between the two curves) increases with $\theta$. 
	Since the maximum sum throughput of slotted ALOHA is $\frac{1}{e}$, one may ask if this contradicts the lower bound of Proposition~\ref{pro: lowerbound2}. To answer this question, we  remark that the adaptive age-based transmission policy is \emph{not} a slotted ALOHA scheme and therefore the maximum throughput of slotted ALOHA would not apply. As a matter of fact, Fig.~\ref{fig-age5} shows that the throughput of the scheme increases  beyond $\frac{1}{e}$ with $\theta$, supporting Proposition~\ref{pro: lowerbound1}. This is because the AAT policy implicitly facilitates coordination among the transmitters as they utilize the (estimated) age-age distribution for decision making. The throughput at $\theta=1$, as seen in Fig.~\ref{fig-age5}, is close to $.48$ which is consistent with the known lower bound $0.4878$ and upper bound $0.568$ on the (information theoretic) channel capacity of random access channels with feedback \cite{Rga2, Tsybakov}.  It is interesting that the AAT policy can both increase the throughput and decrease the AoI simultaneously when $\theta$ approaches $1$. 
	
	One can also observe  that  the adaptive policy performs worse than the  stationary policy for   $\frac{1}{eM}\leq \theta\leq \frac{1}{M}$. The SAT policy is designed as the stationary version of the AAT policy, thus the SAT policy should not in principle outperform the AAT policy (assuming that our approximations of the estimates of the age-gain distribution are accurate). However, it is worthwhile to discuss this counter-intuitive phenomenon and we expand on the underlying reason: We consistently underestimate the threshold ${\tt T}(k)$ due to the integer nature of it, especially when $\theta$ is small. For example, consider $\frac{1}{eM-1}\leq\theta<\frac{1}{eM-2}$. Note that $\sum_{m\geq1}a_m(k)=\theta$, which is close to $\frac{1}{eM}$. From \eqref{eq: threshold}, in some slots, the event $\{\sum_{m\geq2}\hat{a}_m(k) < \frac{1}{eM}\}$ may occur (even if the estimate $\{\hat{a}_m(k)\}_m$ is exactly accurate), so the threshold ${\tt T}(k)$ under the AAT policy would be $1$ because ${\tt T}(k)$ is always an integer. In these time slots, the AAT policy is reduced to the slotted ALOHA. Therefore, the fraction of active nodes becomes large, the throughput decreases, and the age increases. On the other hand, in the regime when $\frac{1}{eM}\leq\theta\leq\frac{1}{M}$, the threshold ${\tt T}^*=2$ under the SAT policy in every time slot (see Theorem~\ref{thm: limit threshold}). Subsequently, the estimate of the age-gain distribution $\{\hat{\ell}_m(k)\}_m$ is imprecise (see Remark \ref{remarkinaccuracies}). Moreover, the imprecise estimates in \eqref{eq: recursion-F1}  may aggravate the underestimation of the threshold, and the closed-loop  worsens the performance of the AAT policy. The effect is more pronounced in the regime $\frac{1}{eM}\leq\theta\leq\frac{1}{M}$ where old packets are less frequently replaced with new packets.
	
	Finally the age-performance of our proposed distributed age-based policies are compared with the lower bounds of Section~\ref{sec: Lower Bound}, state-of-the-art distributed schemes such as \cite{S.K.Kaul-2017},  as well as centralized Max-Weight policies such as \cite{IKEM2021}. For clarity of exposition, we consider two regimes of $\theta$: $\theta \in (\frac{1}{M}, 1]$ (see Fig.~\ref{fig-age2}), and  $\theta \in (0, \frac{1}{M}]$ (see Fig.~\ref{fig-age6}). Fig.~\ref{fig-age6}, in particular, shows that when $\theta\leq\frac{1}{eM}$,  the normalized age of slotted ALOHA coincides with centralized Max-Weight policies and the lower bound of Proposition~\ref{pro: lowerbound2}. When $\theta$ increases beyond $\frac{1}{eM}$, our proposed age-based thinning methods provide significant gains compared to randomized stationary and slotted ALOHA schemes. When $\theta=\frac{1}{o(M)}$, the NAAoI of slotted ALOHA explodes, and we omit the curve of slotted ALOHA in Fig.~\ref{fig-age2_sep1}. Finally, we  numerically observe that the normalized age of the centralized Max-Weight policy is approximately attained by  stationary age-based thinning in perfect CSMA\footnote{We consider perfect CSMA in simulations. In other words, no errors occur in carrier sensing.} (see the green square curve in Fig.~\ref{fig-age2_sep2} and Fig.~\ref{fig-age6_sep2}),  where the
	length of one contention slot is set to be 1/100 \cite{Z.Jiang-18-1, YGSGBK2011}. 
	
	Next, we compare our proposed algorithms with policies in related works, such as a lazy version of slotted aloha in \cite{DCAEUOK2020},  and variants of CSMA in \cite{AMMAAE2020, sleepwakeconference, AMBYSRSNBS2021}. Different from \cite{DCAEUOK2020}, we considered a random access channel with re-transmission attempts for packets and used a slotted ALOHA with time-variant transmission probability, while in \cite{DCAEUOK2020}, a channel without re-transmission and a slotted ALOHA with a time-invariant transmission probability is investigated. More importantly, we proposed a policy where the best threshold is found in every time slot, while a predetermined threshold is given in \cite{DCAEUOK2020}. Furthermore, we showed the performance analysis for arbitrary $M$ sources under arbitrary generation/arrival rate (in $[0, 1]$), while \cite{DCAEUOK2020} only provided the closed form of average AoI for the case when $M=2$ and $\theta=1$. 
	Compared to the performance of the policy in \cite{DCAEUOK2020}, our proposed AAT and SAT policies outperform the lazy version of slotted aloha (see Fig.~\ref{fig-age2_sep1}, the purple square curve).
	To apply the policies in \cite{AMMAAE2020} on our model, we consider the generation/arrival rate is relatively large ($\theta\geq0.1$) because under  \cite[Assumption~1]{AMMAAE2020}, a transmitter sends a ``fake'' update if its buffer is empty. From Figure~\ref{fig-age2_sep2}, it is easy to see that the stationary thinning with CSMA outperforms the policy in \cite{AMMAAE2020} (Fig~\ref{fig-age2_sep2}, square black curve). 
	References \cite{sleepwakeconference, AMBYSRSNBS2021} are concurrent works on optimizing peak AoI over random access channels with per-source battery lifetime constraints. Translating the introduced energy constraints to arrival rate $\theta$, one can apply the sleep-waking schemes of \cite{sleepwakeconference, AMBYSRSNBS2021} to our problem when CSMA capabilities are available. Using the symmetry of our model along with \cite[Eqns.~(13), (16)]{ AMBYSRSNBS2021}, when $M$ is sufficiently large, the fraction of time in which every source is in transmission mode is around $r/(Mr+1)$ where $r$ is the sleep period parameter as proposed in \cite{sleepwakeconference, AMBYSRSNBS2021}. Using \cite[Eqns.~(13), (16)]{ AMBYSRSNBS2021}, one can argue that we need to consider the so called energy-adequate regime introduced in \cite{sleepwakeconference, AMBYSRSNBS2021} which translates to relatively large $\theta$, i.e., $\theta\geq0.1$. The performance of the policy in \cite{sleepwakeconference, AMBYSRSNBS2021} (Fig~\ref{fig-age2_sep2}, red square point) is similar to that of the stationary thinning with CSMA, which is consistent with the optimality results presented in \cite{sleepwakeconference, AMBYSRSNBS2021}.

	It is worthwhile to mention  that the proposed algorithms not only utilize fully channel capacity, but minimize NAAoI. If we only consider policies with maximum throughput (e.g. standard slotted ALOHA and its variants), the NAAoI explodes up with time for arrival rates above $\frac{1}{e}$. This is also observed in works such as \cite{S.K.Kaul-2017} that adapt slotted ALOHA without packet management for age minimization.

	\section{Conclusion and Future Research}\label{sec: future serach}

	In this work, we investigated the AoI performance of a {\it decentralized} system consisting $M$ source nodes communicating with a common receiver. We first derived a general lower bound on AoI. Then, we derived the analytical (normalized) age performance of (stabilized) slotted ALOHA in the limit of $M\rightarrow\infty$. As the sum arrival rate increases beyond $\frac{1}{e}$, slotted ALOHA becomes unstable. We show that by prioritizing transmissions that offer significant reduction in AoI, we can increase the arrival rate and simultaneously decrease AoI. In particular, we proposed two age-based thinning policies: (i) Adaptive Age-based Thinning (AAT) and (ii) Stationary Age-based Thinning (SAT) and analyzed the age performance in the limit of $M\rightarrow\infty$.  Finally, we demonstrated how our proposed thinning mechanism (SAT) is useful for other random access technologies.  Numerical results showed that the proposed age-based thinning mechanisms make a significantly contribution to the performance of age even for moderate values of $M$.
	Our framework can not be extended to generalized settings (such as \cite{GPAP2003, MFDMDJL2011,  DSJS2012, LJDSJSJW2010}) blindly. Appropriate adaptations related to different settings are necessary. For example, after applying Algorithm~\ref{alg: General Threshold-slotted ALOHA} to transmission schemes in \cite{DSJS2012, MFDMDJL2011}, they are reduced to stationary randomized policies. This is because we do not fully utilize the additional knowledge of the queue length that is provided in these settings. In the setting of \cite{GPAP2003}, we should further assume that the set of nodes that transmit packets are known, and a continuous-time version of the framework is needed in the setting of \cite{LJDSJSJW2010}.

	Future research includes generalization to accommodate 1) dynamic channels, i.e., the number of nodes $M$, or the arrival rates $\theta$ are time-variant 2) asymmetric channels, i.e., the arrival rates $\theta_i$ is different. In the first case, the method we proposed above can be applied directly. Suppose that the expressions of the number of nodes, $M(k)$, and the arrival rates, $\theta(k)$, are known. We cane replace $M$ and $\theta$ by $M(k)$ and $\theta(k)$, respectively, in every time slot. Subsequently, the fixed threshold hold ${\tt T}^*$ is also a time-variant variable, ${\tt T}^*(k)$. In the second case, the method we proposed above can not be applied directly.  This is because we use the profile of all sources as an estimate on any individual source. A more general estimation method should be proposed in the second case. In addition, note that slotted ALOHA algorithms are order optimal when $M$ is sufficiently large and generation/arrival rate is small ($\theta\leq\frac{1}{eM}$). An interesting extension is to consider a smarter decentralized age-based algorithm that can achieve a constant additive age gap from the optimum average age when $\theta$ is small.

	\appendices
	
	\section{Sufficiency of Unit Buffer Size}\label{App: Proof of first lemma}
	
	Consider two types of policies: policies with buffer size $1$, denoted by $\pi_1$, and policies with larger buffer sizes, denoted by $\pi_2$.  To differentiate the two policies and their corresponding queues, we label the packets inside the queues by {\it new} and {\it old}. A new packet in a queue refers to the latest arrival. A packet in a queue is considered old if there is a newer packet in the same queue or if the packet (or a fresher packet) from that source is already delivered at the receiver. In the following, we refer to the freshest old packet as the old packet. At a given time slot, denote the new packet and the old packet of source $i$ by $p^{(i)}_{new}$ and $p^{(i)}_{old}$, respectively. 
	Denote the arrival times of the new and old packets as $t^{(\text{n})}_i$ and $t^{(\text{o})}_i$. It is clear that $t^{(\text{n})}_i>t^{(\text{o})}_i$. 
	We will show that no matter what policy $\pi_2$ does, there is always a policy of type $\pi_1$ whose resulting age is at least as low as $\pi_2$ with respect to every source node.

	At time slot $t'$, suppose policy $\pi_2$ chooses certain action, then we design policy $\pi_1$ to follow the same action with the new packet. In this time slot, under $\pi_2$  a subset of sources  transmit packets. Denote the index of these sources by $\mathcal{I}$. 
	For the sources which do not transmit packets, the AoI under both policies will increase by $1$. For the sources in $\mathcal{I}$, we have the following two cases:
	
	\noindent{\bf Case 1.} Suppose collision happens in time slot $t'$. Then, no packet is delivered, and the AoI of these sources under both policies will increase by $1$.

	\noindent{\bf Case 2.} If a packet is delivered, which implies the cardinal of $\mathcal{I}$, $|\mathcal{I}|=1$. Denote the index of this source by $i$. Then at the next time slot, the AoI under $\pi_1$ drops to $h_i^{\pi_1}(t'+1)=t'-t^{(\text{n})}_i+1$, and the AoI under $\pi_2$ drops to $h_i^{\pi_2}(t'+1)=t'-t^{(\text{o})}_i+1>h_i^{\pi_1}(t'+1)$. This means that from $t'$ onward $h_i^{\pi_2}(t)$ will be point-wise larger or equal to $h_i^{\pi_1}(t)$, $t>t'$.

	\section{Proof of Proposition~\ref{pro: lowerbound1}.}\label{App: lowerbound1} 
	Consider any transmission policy and a large time-horizon $K$. Let $L_i$ be the number of remaining time slots after the last packet delivery in source $i$. The NAAoI defined in \eqref{eq: EWSAoI-1} can be re-written as follows:
	\begin{align}
		J_K^{\pi}&=\frac{1}{M^2}\sum_{i=1}^{M}\frac{1}{K}\sum_{k=1}^{K}h_i(k)\nonumber\\
		&=\frac{1}{M^2}\sum_{i=1}^{M}\frac{1}{K}\Big(\sum_{m=1}^{N_i(K)}\Gamma_i(m)+\frac{1}{2}L_i^2\nonumber\\
		&+D_i(N_i(K))L_i-\frac{1}{2}L_i\Big),
	\end{align} 
	where $\Gamma_i(m)$ was expressed in \eqref{eq:gamma}.
	Since $D_i(m)\geq1$ for all $1\leq m\leq N_i(K)$, we can lower bound $\Gamma_i(m)$ by substituting $D_i(m-1)=1$ in \eqref{eq:gamma}. Using similar steps as \cite[Eqns. (9) - (14)]{I.Kadota-2018}, we find
	\begin{align}
		\label{eq:intermediatestep}
		J^{\pi}(M)\geq \lim_{K\to\infty}\mathbb{E}[\frac{1}{2M^2}\sum_{i=1}^{M}\frac{K}{N_i(K)}+\frac{1}{2M}].
	\end{align}
	Recall that $N_i(K)$ is the total number of packets delivered by source $i$. In the limit of $K\rightarrow\infty$, $\frac{N_i(K)}{K}$ is the throughput of source $i$. By the model assumption, in every time slot, at most one packet is delivered in the system. Therefore, 
	\begin{align}
		\lim_{K\rightarrow\infty}\mathbb{E}[\sum_{i=1}^{M}\frac{N_i(K)}{K}]\leq C_{RA}.\label{eq:CRA}
	\end{align}
	Now note that by the Cauchy-Schwarz inequality, we have
	\begin{align}
		\lim_{K\rightarrow\infty}\mathbb{E}[\sum_{i=1}^{M}\frac{N_i(K)}{K}]\mathbb{E}[\sum_{i=1}^{M}\frac{K}{N_i(K)}]\geq M^2.\label{eq:M2}
	\end{align}
	Thus using \eqref{eq:CRA} and \eqref{eq:M2}, we find
	\begin{align}
		\lim_{K\rightarrow\infty}\mathbb{E}[\sum_{i=1}^{M}\frac{K}{N_i(K)}]\geq\frac{M^2}{C_{RA}}.
		\label{eq:CS}
	\end{align}
	Inserting \eqref{eq:CS} back into \eqref{eq:intermediatestep}, we obtain
	\begin{align}
		J^{\pi}_M\geq \frac{1}{2C_{RA}}+\frac{1}{2M}.
	\end{align}

	\section{Proof of Proposition~\ref{pro: lowerbound2}.}\label{App: lowerbound2} 
	Suppose all packets are delivered instantaneously with one time-unit delay and without experiencing collisions. A lower bound to NAAoI in this scenario constitutes a lower bound to NAAoI in our setup. Let $X_i(m)$ denote the inter arrival time between the $m$th and $(m+1)$st packets. $\{X_i(m)\}_m$ is  a geometric iid sequence. Under the assumption of instantaneous delivery, $I_i(m)=X_i(m)$. It hence follows from \eqref{eq:gamma}  that
	\begin{align}
		\Gamma_i(m)=\sum_{k=T_i(m)}^{T_i(m)+X_i(m)-1}h_i(k)=\frac{1}{2}X_i(m)^2+\frac{1}{2}X_i(m).
	\end{align}
	Thus, similar with \cite{I.Kadota-2018}, the time-average AoI of source $i$, denoted by $H_i$, is
	\begin{align}
		\mathbb{E}[H_i]=\lim_{K\rightarrow\infty}\frac{1}{K}\sum_{k=1}^{K}h_i(k)=\frac{\mathbb{E}[X^2]}{2\mathbb{E}[X]}+\frac{1}{2}.\label{eq:GRV}
	\end{align}
	Since $X$ in \eqref{eq:GRV} has a geometric distribution with parameter $\theta$, we find 
	\begin{align}
		\mathbb{E}[H_i]=\frac{2-\theta}{2\theta}+\frac{1}{2}.
	\end{align}
	Note that NAAoI can be captured by 
	\begin{align*}
		\frac{1}{M^2}\sum_{i=1}^{M}H_i
	\end{align*}
	and one can hence conclude that
	\begin{align}
		J^{\pi}(M)\geq\frac{1}{M\theta}.
	\end{align}

	\section{Proof of Proposition~\ref{thm: MW}.}\label{APP: Proof of MW}
	
	First consider a source node $i$ whose queue is empty. This means that no new packet has arrived at that transmitter since the last delivery (from that source node) at the receiver; i.e., $h_i(k)=w_i(k)$ and hence $\delta_i(k)=0$. Such nodes $i$ are thus irrelevant because $\delta_j(k)\geq 0$ for all source nodes $j$. Now consider nodes with non-empty queues. Among these nodes, $d_i(k)$ is non-zero if and only if $(\lambda_1,\ldots,\lambda_M)$ is a vector consisting of $0$'s except for $\lambda_i=1$. Hence at most one $d_i(k)$ can be equal to $1$. Call the corresponding source node $\ell(k)$.
	Expression \eqref{eq:drift} is minimized when $d_{\ell(k)}(k)$ picks the largest $\delta_j(k)$.

	\section{Probabilities of idle, deliveries and collisions under Slotted ALOHA}\label{App: {lem: slotted aloha probabilities}} 
	
	\begin{lemma}\label{lem: slotted aloha probabilities}
		Consider any stabilized slotted ALOHA scheme. Define $G$ as the expected number of attempted transmissions in a slot. Then, for $M$ large, the probability of delivering a packet is (asymptotically) $G e^{-G}$, the probability of idle system is (asymptotically) $e^{-G}$, and the probability of collisions is (asymptotically) $1-e^{-G}-Ge^{-G}$. In particular, when $G=1$, the maximum probability of delivery is $1/e$, the corresponding probabilities of collisions and idle system are $1-2/e$ and $1/e$, respectively. 
	\end{lemma}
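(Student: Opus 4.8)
The plan is to compute the probability of each channel event (idle, successful delivery, collision) for finite $M$ and then take the limit $M \to \infty$, treating the $M$ attempts as (asymptotically) independent Poisson-type trials. First I would fix a generic slot and let $p_1, p_2, \ldots$ denote the transmission probabilities that the $M$ backlogged-or-active nodes use in that slot. In Rivest's stabilized slotted ALOHA the common backoff probability is $p_b = \min(1, 1/n)$ where $n$ is the current estimate of the number of contending nodes, so each contending node attempts independently with probability $p_b$. Let $G = \sum_i p_i$ be the expected number of attempted transmissions in the slot; under the stabilized scheme $G$ is driven toward $1$. The probability that exactly $j$ nodes transmit is then the probability-generating-function coefficient of a sum of independent Bernoulli's.

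The key computation is the limit of the binomial-type distribution. Writing each of the (at most $M$) contending nodes as attempting with probability $p_b = G/m$ where $m$ is the number of contenders (so $mp_b = G$), the number of attempts is $\mathrm{Binomial}(m, G/m)$, which converges to $\mathrm{Poisson}(G)$ as $m \to \infty$ by the classical Poisson limit theorem (this is exactly the regime of interest since the estimate $n(k)$ grows unboundedly with $M\theta$ or after repeated collisions, but stays such that $mp_b \to G$). Hence in the limit:
\begin{align*}
\Pr(\text{idle}) &= \Pr(0 \text{ attempts}) = e^{-G}, \\
\Pr(\text{success}) &= \Pr(1 \text{ attempt}) = G e^{-G}, \\
\Pr(\text{collision}) &= \Pr(\geq 2 \text{ attempts}) = 1 - e^{-G} - G e^{-G}.
\end{align*}
I would then note that $G e^{-G}$ is maximized over $G > 0$ at $G = 1$ (differentiate: $(1-G)e^{-G} = 0$), giving maximum success probability $1/e$, and at $G=1$ the idle probability is $e^{-1} = 1/e$ and the collision probability is $1 - 2/e$. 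Since the stabilized algorithm in \eqref{eq: aloha} is precisely designed to keep the estimate $n(k)$ (and hence $G$) near $1$, the asymptotic operating point is $G = 1$, which yields the claimed numbers.

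The main obstacle is rigorously justifying the Poisson approximation in the genuine dynamics of the stabilized scheme rather than in an idealized i.i.d. snapshot: the number of contending nodes, the estimate $n(k)$, and the true backlog are all correlated random processes, and $p_b$ depends on the history through $n(k)$. The cleanest route is to invoke the standard drift/stability analysis of Rivest's pseudo-Bayesian algorithm (as in \cite[Chapter~4.2.3]{B-D.Bertsekas}), which establishes that the backlog estimate tracks the true backlog and that the chain is stable for sum-arrival-rate below $1/e$, so that in steady state the per-slot attempt count is effectively $\mathrm{Binomial}(n, 1/n) \to \mathrm{Poisson}(1)$; conditioning on the backlog and applying the Poisson limit theorem slot-by-slot then gives the stated asymptotic probabilities. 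I would cite the Bertsekas analysis for the stability/tracking facts and only supply the short Poisson-limit computation and the one-line optimization $\max_G Ge^{-G}$ in detail.
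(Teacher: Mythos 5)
Your proposal is correct and follows essentially the same route as the paper: approximate the per-slot attempt count by a Poisson($G$) limit of independent Bernoulli attempts, read off the idle/success/collision probabilities as $e^{-G}$, $Ge^{-G}$, $1-e^{-G}-Ge^{-G}$, and maximize $Ge^{-G}$ at $G=1$. The only (immaterial) difference is that the paper's appendix splits the contenders into fresh nodes attempting with probability $\theta$ and backlogged nodes attempting with probability $p_b$, so $G=(M-n(k))\theta+n(k)p_b(k)$ and the event probabilities are products of two binomial factors, whereas you lump all contenders into a single binomial with common probability $p_b$ — both collapse to the same Poisson limit.
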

	\begin{proof}
		The idea of the proof is very similar to \cite[Chapter~4]{B-D.Bertsekas}. However, the settings are different: \cite[Chapter~4]{B-D.Bertsekas} considered that  packets arrive as a Poisson process (in a continuous-time system) and the buffer size is infinite, while this proof consider that packets arrive as a Bernoulli process (in a discrete-time system) and the buffer size is $1$. Define the nodes that are not backlogged as {\it fresh} nodes. Each fresh node transmits a packet directly in a slot if it is not empty, and it generates/receives a packet with probability $\theta$, thus a fresh node transmits a packet with probability $\theta$.  
		Let $P_a\big(i, n(k)\big)$ be the probability that $i$ fresh nodes transmit a packet in a time slot and let $P_{s}\big(j, n(k)\big)$ be the probability that $j$ backlogged nodes transmit. We have:
		\begin{align}
			P_a\big(i, n(k)\big)=&\binom{M-n(k)}{i}(1-\theta)^{M-n(k)-i}\theta^{i}\label{eq: P_a}\\
			P_s\big(j, n(k)\big)=&\binom{n(k)}{j}(1-p_b(k))^{n(k)-i}p_b(k)^{i}\label{eq: P_s}.
		\end{align}
		Thus, in slot $k$, when a packet is delivered, i.e., $\sum_{i=1}^{M}d_i(k)=1$, the probability is
		\begin{equation}\label{eq: P_delivery}
			\begin{aligned}
				&\Pr(\sum_{i=1}^{M}d_i(k)=1)\\
				=&P_a\big(1,n(k)\big)P_s\big(0,n(k)\big)+P_a\big(0,n(k)\big)P_s\big(1, n(k)\big).
			\end{aligned}
		\end{equation}
		If the channel does not transmit a packet in a slot, i.e., we have an idle channel, $\sum_{i=1}^{M}d_i(k)=0$, $c(k)=0$. The probability of idle system in slot $k$ is
		\begin{align}\label{eq: P_idle}
			\Pr(\sum_{i=1}^{M}d_i(k)=0,c(k)=0)=P_a\big(0,n(k)\big)P_s\big(0,n(k)\big).
		\end{align}
		Define the attempt rate $G=(M-n(k))\theta+n(k)p_b(k)$ as the expected number of attempted transmissions in a slot. 
		From \eqref{eq: P_a} and \eqref{eq: P_s}, the probability of delivery is
		\begin{align*}
			&\Pr(\sum_{i=1}^{M}d_i(k)=1)\\
			=&\big(M-n(k)\big)(1-\theta)^{M-n(k)-1}\theta (1-p_b)^{n(k)}\\
			+&(1-\theta)^{M-n(k)}n(k)(1-p_b)^{n(k)-1}p_b
		\end{align*}
		and the probability of an idle channel is
		\begin{align*}
			\Pr(\sum_{i=1}^{M}d_i(k)=0, c(k)=0)=(1-\theta)^{M-n(k)}(1-p_b)^{n(k)}.
		\end{align*}
		Note that the valid regime of $\theta$ is $\theta M<\frac{1}{e}$, and thus $\theta$, $p_b$ are small. Using the approximation $(1-x)^{-y}\approx\exp(-x y)$ for small $x$, we find 
		\begin{align*}
			&\Pr(\sum_{i=1}^{M}d_i(k)=1)\approx G e^{-G} \\
			&\Pr(\sum_{i=1}^{M}d_i(k)=0,c(k)=0)\approx e^{-G}.
		\end{align*}
		\begin{align*}
			\Pr(c(k)=1)\approx 1-G e^{-G}-e^{-G}.
		\end{align*}
		Taking the first derivative of the function $Ge^{-G}$, we can find the maximum point is $1$ for $0<G\leq1$. So the maximum probability of delivery is 
		\begin{align}
			\Pr(\sum_{i=1}^{M}d_i(k)=1)\approx1/e,
		\end{align} 
		correspondingly, we have
		\begin{align}
			&\Pr(\sum_{i=1}^{M}d_i(k)=0, c(k)=0)\approx 1/e,\\
			&\Pr(c(k)=1)=1-2/e.
		\end{align}
	\end{proof}

	\section{Proof of Theorem~\ref{thm: slotted aloha age e}.}\label{App: {thm: slotted aloha age e}}
	
	The proof is organized in three parts:
	
	\noindent{ Part 1}: Preliminaries.
	In time slot $k$,  denote the time just before the arrival of new packets by $k^-$ and the time just after the arrival of new packets by $k^+$. We hence write $\delta_i(k^-)=h_i(k^-)-w_i(k^-)$ and $\delta_i(k^+)=h_i(k^+)-w_i(k^+)$. 
	Suppose a packet is delivered  from the $i^{th}$ source at the end of time slot $k-1$. We then have $\delta_i\big(k^{-}\big)=0$. From \eqref{eq: aloha}, since all  nodes have the same arrival rate and transmission policy,  the sequences $\{h_i(k^-)\}_{k=1}^{\infty}$, $\{h_i(k^+)\}_{k=1}^{\infty}$, $\{w_i(k^-)\}_{k=1}^{\infty}$, $\{w_i(k^+)\}_{k=1}^{\infty}$, $\{\delta_i(k^-)\}_{k=1}^{\infty}$, $\{\delta_i(k^+)\}_{k=1}^{\infty}$ are identical random variables across $i=1,2,\cdots, M$, respectively. 
	Recall that source nodes with $\delta_i(k^-)=0$ are  $0$-order nodes and define $n_0(k^-)$ as the number of $0$-order nodes at time $k^-$. 
	
	In the beginning of time slot $k$, on average,  $\theta M$ new packets arrive at the sources, and $\theta n_0(k^-)$ $0$-order nodes receive new packets. 
	Suppose  source $i$ is a $0$-order node and $h_i(k^-)-w_i(k^-)=0$.  If source $i$ receives new packets, then the source's AoI changes from $w_i(k^-)$ to $w_i(k^+)=0$ and the destination's AoI $h_i(k^-)$ remains the same as $h_i(k^+)$. Thus, 
	\begin{align*}
		\delta_i(k^+)=&h_i(k^+)-w_i(k^+)\\=&h_i(h^+)\\=&h_i(k^-)\\
		>&h_i(k^-)-w_i(k^-)\\=&\delta_i(k^-)\\=&0,
	\end{align*}
	which implies that if a $0$-order source receives a new packet, then it is not a $0$-order source at $k^+$. 
	
	Fix any large $M$ and denote the maximum throughput of Slotted ALOHA with $C_{SA}(M)$. We know that
	\begin{align*}
		\lim_{M\rightarrow\infty}C_{SA}(M)=e^{-1}.
	\end{align*}
	The recursion of the expected number of $0-$order nodes is:
	\begin{equation}\label{eq: proof n0}
		\begin{aligned}
			&\mathbb{E}[n_0\big((k+1)^-\big)]\\
			&=(1-\theta)\mathbb{E}[n_0(k^-)]+\min(M\theta,C_{SA}(M))
		\end{aligned}
	\end{equation}
	where the second term on the right hand side is the average number of delivered packets per time slot. 
	Since we consider a stabilized slotted ALOHA,  $\lim_{k\rightarrow\infty}\mathbb{E}[n_0(k^-)]$ exists. Denote $$n_0^*=\lim_{k\rightarrow\infty}\mathbb{E}[n_0(k^-)].$$
	Letting $k\to\infty$ on both sides of \eqref{eq: proof n0}, we have
	\begin{align}\label{eq: proof n0_1}
		n_0^* = (1-\theta)n_0^* + \min(M\theta,C_{SA}(M)).
	\end{align}
	Note that 
	\begin{align}\label{eq: proof n0_2}
		\lim_{M\rightarrow\infty}\min(M\theta,C_{SA}(M))=\lim_{M\rightarrow\infty} M\theta=\eta.
	\end{align}
	From \eqref{eq: proof n0_1} and \eqref{eq: proof n0_2}, we have
	\begin{align}
		\lim_{M\rightarrow\infty}\frac{n_0^*}{M}=1.\label{app-goestoone}
	\end{align}

	\noindent{ Part~2:} Find the expression of NAAoI. Using \eqref{eq: EWSAoI-1},  we have
	\begin{align*}
		J^{SA}(M)=\lim_{K\rightarrow\infty}\mathbb{E}\left[\frac{1}{M^2}\sum_{i=1}^{M}\frac{1}{K}\sum_{k=1}^{K}h_i(k^-)\right]\triangleq J_1+J_2
	\end{align*}
	where
	\begin{align*}
		J_1=&\lim_{K\rightarrow\infty}\mathbb{E}\left[\frac{1}{M^2}\sum_{i=1}^{M}\frac{1}{K}\sum_{k=1}^{K}w_i(k^-)\right]\\
		J_2=&\lim_{K\rightarrow\infty}\mathbb{E}\left[\frac{1}{M^2}\sum_{i=1}^{M}\frac{1}{K}\sum_{k=1}^{K}\delta_i(k^-)\right].
	\end{align*}

	\noindent{ Part 3}: Find the limit of NAAoI. First, we consider $J_1$. $w_i(k^-)$ has a geometric distribution starting from $1$ with parameter $\theta$ for all $i$. Employing the law of large number, we find
	\begin{align}
		J_1=\frac{1}{M\theta}.
	\end{align}

	Next, we consider $J_2$ and prove that its limit in large $M$ approaches zero.
	Note from \eqref{eq: delta} that $\delta_i(k)=0$ if source $i$ is empty in time slot $k$ and $\delta_i(k)>0$ if a packet remains in source $i$ in time slot $k$. We first note that $\delta_i(k)$ is upper bounded by $h_i(k)$. Let us consider a worse case in which buffer sizes are infinite. In this case, assuming stationarity\footnote{This assumption approximately holds for infinite time horizon $T$}, denote the inter-arrival time and delay of packets with respect to source $i$ by $X_i$ and $D_i$.  Since the Bernoulli arrival process has parameter $\theta$, we have $\mathbb{E}[X_i]=\frac{1}{\theta}=\frac{M}{\eta}$. Moreover $\mathbb{E}[D_i]$ is approximately bounded by some constant independent of the number of sources $M$ \cite{ModinaE2009}. Now we observe that for each packet delivery the expected peak age at the destination is upper bounded by $\mathbb{E}[X_i]+\mathbb{E}[D_i]$. We can hence write
	\begin{align}\label{eq: Edelta}
		\mathbb{E}[\delta_i(k)|\delta_i(k)>0]\leq\mathbb{E}[X_i]+\mathbb{E}[D_i]
	\end{align}
	which implies that $\mathbb{E}[\delta_i(k)]$ is $O(M)$.

	Now expand $J_2$:
	\begin{align}
		J_2=&\lim_{K\rightarrow\infty}\mathbb{E}[\frac{1}{M^2}\sum_{i=1}^{M}\frac{1}{K}\sum_{k=1}^{K}\delta_i(k^-)]\nonumber\\
		=&\lim_{K\rightarrow\infty}\mathbb{E}[\frac{1}{M^2}\frac{1}{K}\sum_{k=1}^{K}\sum_{i=1}^M\delta_i(k^-)1_{ \delta_i(k^-)>0}]\nonumber\\
		=&\lim_{K\rightarrow\infty}\frac{1}{M^2}\frac{1}{K}\sum_{k=1}^{K}\sum_{i=1}^M\mathbb{E}[\delta_i(k^-)1_{ \delta_i(k^-)>0}]\nonumber\\
		\leq&\limsup_{k\rightarrow\infty}\frac{1}{M^2}\sum_{i=1}^M\mathbb{E}[\delta_i(k^-)1_{ \delta_i(k^-)>0}]\nonumber\\
		=&\limsup_{k\rightarrow\infty}\frac{1}{M^2}\sum_{i=1}^M\Big(\Pr(\delta_i(k^-)>0)\nonumber\\
		&\times\mathbb{E}[\delta_i(k^-)|\delta_i(k^-)>0]\Big).
	\end{align}
	Since for $k$ large enough the conditional expectation $\mathbb{E}[\delta_i(k^-)|\delta_i(k^-)>0]$ is $O(M)$, it remains to prove that in the limit of large M, $\lim_{k\to\infty}\frac{1}{M}\sum_{i=1}^M\Pr(\delta_i(k^-)>0)$ vanishes. But this holds because we can write
	\begin{align}
		&\lim_{k\to\infty}\frac{1}{M}\sum_{i=1}^M\Pr(\delta_i(k^-)>0)\nonumber\\
		&=\lim_{k\to\infty}\mathbb{E}[\frac{1}{M}\sum_{i=1}^M1_{ \delta_i(k^-)>0}]\nonumber\\
		&=\lim_{k\to\infty}\mathbb{E}[\frac{1}{M}(M-n_0(k^-))]\nonumber\\
		&=\frac{M-n_0^*}{M}\label{app-goestozero}
	\end{align}
	and \eqref{app-goestozero} goes to zero by \eqref{app-goestoone}.

	Finally, we  prove that for any scheme, $J^{SA}(M)$ is lower bounded by $1/\eta$. From Proposition~\ref{pro: lowerbound2} (and letting  $M\rightarrow\infty$), we have \begin{align*}
		\lim_{M\rightarrow\infty}J^{SA}(M)\geq\lim_{M\rightarrow\infty}\frac{1}{M\theta}=\frac{1}{\eta}.
	\end{align*}
	Therefore, slotted ALOHA can reach the lower bound when $\theta\in(0,\frac{1}{eM}]$ and is hence optimal.

	\section{ Proof of Lemma~\ref{lem: ell_plus}.}\label{App: {lem: ell_plus}}
	Before presenting the proof, we state the following straightforward lemma (whose proof is omitted).
	\begin{lemma}\label{lem: TSA w_i(k+1)>0}
		At the beginning of time slot $k$, before new packets arrive at source $i$, $w_i(k^-)>0$ and its probability distribution is
		\begin{equation}\label{eq: conditional distribution of w}
			\begin{aligned}
				\Pr\big(w_i(k^-)=j\big)=\theta(1-\theta)^{j-1},\,\, j=1,2,3,\cdots.
			\end{aligned}
		\end{equation}
	\end{lemma}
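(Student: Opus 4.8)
The plan is to read the claim directly off the defining recursion \eqref{eq: w_i} for the source's AoI together with the i.i.d.\ Bernoulli nature of the arrivals. First I would note that $w_i(k^-)$ is, by the definition of the $k^-$ notation, the value of $w_i$ at slot $k$ \emph{before} the slot-$k$ arrival has been accounted for; since only the ``$A_i(k)=0$'' branch of \eqref{eq: w_i} has been applied at that point, we have $w_i(k^-)=w_i(k-1)+1$. As $w_i(k-1)$ is a nonnegative integer for every $k$ (by \eqref{eq: w_i} it is either reset to $0$ or incremented by $1$, starting from $w_i(1)=0$), this already gives $w_i(k^-)=w_i(k-1)+1\geq 1>0$, which is the first assertion.

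For the distribution, I would unwind the recursion so as to express $w_i(k^-)$ in terms of the arrival indicators. The event $\{w_i(k^-)=j\}$ is precisely the event that the last arrival at source $i$ at or before slot $k-1$ occurred in slot $k-j$, i.e.\ $A_i(k-j)=1$ and $A_i(k-j+1)=\cdots=A_i(k-1)=0$. Since $\{A_i(\ell)\}_{\ell}$ is i.i.d.\ Bernoulli$(\theta)$, the probability of this event is $\theta(1-\theta)^{j-1}$; that is, $w_i(k^-)$ is geometric on $\{1,2,3,\ldots\}$, as claimed in \eqref{eq: conditional distribution of w}.

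The one point needing a remark is the boundary effect: for a fixed finite $k$, a run of non-arrivals can be no longer than $k-2$ given $w_i(1)=0$, so the exact law of $w_i(k^-)$ is the geometric distribution with its tail past $j=k-1$ collapsed onto $j=k-1$. I would observe that this truncation is irrelevant for our use, since the lemma is invoked in the stationary regime ($k\to\infty$, as in the proof of Lemma~\ref{lem: ell_plus} and the asymptotic theorems), where the truncated law converges to the stated geometric distribution. Beyond this bookkeeping there is no real obstacle — the statement is an immediate consequence of the Bernoulli arrival model and the definition of $w_i$, which is why the paper labels it straightforward.
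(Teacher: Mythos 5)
Your proof is correct. The paper omits the proof of this lemma entirely (it is stated in Appendix~G as ``straightforward . . . whose proof is omitted''), and your argument --- observing $w_i(k^-)=w_i(k-1)+1\geq 1$ and identifying $\{w_i(k^-)=j\}$ with the event $A_i(k-j)=1,\ A_i(k-j+1)=\cdots=A_i(k-1)=0$ of i.i.d.\ Bernoulli$(\theta)$ arrivals --- supplies exactly the intended reasoning; your remark that the law is truncated at $j=k-1$ for finite $k$ is an honest refinement that is immaterial where the lemma is invoked (the stationary/asymptotic regime).
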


	First consider $m=0$ and suppose source $i$ is a $0$-order node. From Lemma~\ref{lem: TSA w_i(k+1)>0}, we know that $w_i(k^-)>0$. Moreover, since $\delta_i(k^-)=0$, we conclude $h_i(k^-)=w_i(k^-)>0$. Once the $0$-order node has a new arrival, $w_i(k^+)=0$ and $h_i(k^+)=h_i(k^-)$, resulting in $\delta_i(k^+)=h_i(k^+)>0$; i.e., the order of the node increases. In other words, the order of a $0$-order node increases once it receives a new packet. In total, the fraction of $0$-order nodes that become of higher order is on average $\theta \ell_0(k-1)$. Thus,
	\begin{align*}
		\ell_0(k^+)=(1-\theta)\ell_0(k-1).
	\end{align*}
	Similarly, we consider $m\geq1$. The fraction of $m$-order nodes that have new arrivals is $\theta \ell_m(k-1)$. These nodes will have larger orders. 
	Suppose source $i$ is of order $m$, $m\geq1$, i.e., $\delta_i(k^-)=h_i(k^-)-w_i(k^-)=m$, once a new packet arrives, then $w_i(k^+)=0$, $h_i(k^+)=h_i(k^-)$, and $\delta_i(k^+)=h_i(k^+)=m+w_i(k^-)$. From Lemma~\ref{lem: TSA w_i(k+1)>0}, $w_i(k^-)>0$, then $\delta_i(k^+)>\delta_i(k^-)=m$. The order of a $m$-order node increases once it receives a new packet. In total, the fraction of $m$-order nodes have larger orders is $\theta\ell_m(k-1)$.

	More precisely, consider a $j$-order node, $j<m$. This node becomes an $m$-order node if it receives a new packet and $w_i(k^{-})=m-j$. Using Lemma~\ref{lem: TSA w_i(k+1)>0}, we cam write
	\begin{align}
		\ell_m(k^+)=&(1-\theta)\ell_m(k-1)\nonumber\\
		+&\sum_{j=0}^{m-1}\theta \ell_j(k-1) \Pr\Big(w_i(k^-)=m-j\Big)\nonumber\\
		=&(1-\theta)\ell_m(k-1)+\sum_{j=0}^{m-1}\theta \ell_j(k-1) \theta(1-\theta)^{m-j-1}\nonumber\\
		=&(1-\theta)\ell_m(k-1)+\theta^2\sum_{j=0}^{m-1}\ell_j(k-1) (1-\theta)^{m-j-1}\label{eq: ell_m_plus}.
	\end{align}
	where the second term in \eqref{eq: ell_m_plus} on the left hand side is the average fraction of nodes that have just become of order $m$. Denoting it by $a_m$, we have\begin{align}
		a_m(k)=\theta^2\sum_{j=0}^{m-1}\ell_j(k-1) (1-\theta)^{m-j-1}.
	\end{align}

	\section{ Proof of Lemma~\ref{thm: N', E_m}.}\label{App: {thm: N', E_m}}
	From the expression of $\ell^*_m$ in \eqref{eq: limit collection}, $0\leq m\leq T^*-1$, we obtain 
	\begin{align}
		\ell_0^*&=\frac{1}{eM\theta}\\
		\ell_m^*&=\frac{a^*_m}{\theta}\qquad 0\leq m\leq T^*-1\label{app-a-ell}.
	\end{align} 
	From \eqref{def-astar}, $a_m^*$ depends on $\{\ell_j^*\}_{j\leq m-1}$ and from \eqref{app-a-ell}, $\ell^*_m$ depends on $a_m^*$ for $1\leq m\leq T^*-1$. So they can be recursively found and in particular, it is not difficult to prove  for all $1\leq m\leq T^*-1$:
	\begin{align}
		a_m^*&=\frac{\theta}{eM}\label{app-a}\\
		\ell_m^*&=\frac{1}{eM}.\label{app-ell}
	\end{align} 
	We prove this by mathematical induction on $T^*\geq 2$. For $T^*-1=1$, the statement holds because $$a_1^*=\theta^2 \ell_0^*=\frac{\theta}{eM}$$$$\ell_1^*=\frac{a_1^*}{\theta}=\frac{1}{eM}.$$
	
	Now suppose the statements \eqref{app-a}-\eqref{app-ell} hold for $m\leq {\tt T}^*-1=k$. We prove the statement for ${\tt T}^*-1=k+1$ and in particular we find $a_{k+1}^*$ and $\ell_{k+1}^*$ below:
	\begin{align}
		&a_{k+1}^*=\theta^2\sum_{j=0}^{k}\ell_j^* (1-\theta)^{k-j}\nonumber\\
		=&\theta^2\frac{1}{eM}\sum_{j=1}^{k}(1-\theta)^{k-j}+\theta^2(1-\theta)^k\frac{1}{eM\theta}\nonumber\\
		=&\theta^2\frac{1}{eM}\frac{1-(1-\theta)^{k}}{\theta}+\theta(1-\theta)^k\frac{1}{eM}\nonumber\\
		=&\frac{\theta}{eM}.\label{app-arecursion}
	\end{align}
	Next, using  \eqref{app-a-ell}, we find \begin{align}
		\ell_{k+1}=\frac{1}{eM}.
	\end{align}
	Moreover, using the derivation in \eqref{app-arecursion}, we also find
	\begin{align}
		a_{T^*}^*=&\frac{\theta}{eM}.
	\end{align}
	Finally, from \eqref{eq: limit-threshold1}, we obtain \begin{align}\ell_m^{+*}=\frac{1}{eM}\qquad 1\leq m\leq {\tt T}^*-1.\end{align}

	\section{Proof of Theorem~\ref{thm: limit threshold}.}\label{App: {thm: limit threshold}}
	Summing \eqref{eq: limit collection} on both sides, we have
	\begin{align*}
		\sum_{m\geq1}a_m^*=\theta.
	\end{align*}
	Moreover,  ${\tt T}^*$ satisfies
	\begin{align}\label{eq: Appendix_Thresh}
		{\tt T}^*=\max\{t|\sum_{m\geq t}a_m^*\geq \frac{1}{eM}\}
	\end{align}
	by its definition in \eqref{eq: threshold}. The term $\sum_{m\geq {\tt T}^*}a_m^*$ can be re-written as follows:
	\begin{align}
		\sum_{m\geq {\tt T}^*}a_m^*=&\sum_{m\geq 1}a_m^*-\sum_{m< {\tt T}^*}a_m^*\nonumber\\\stackrel{(a)}{=}&\theta-({\tt T}^*-1)\frac{\theta}{eM}\label{eqsumem1}
	\end{align}
	where $(a)$ follows by \eqref{eq: a_mstar} in Lemma~\ref{thm: N', E_m}. On the other hand, $\sum_{m\geq {\tt T}^*}a_m^*$ satisfies the following inequality by  \eqref{eq: Appendix_Thresh}:
	\begin{align}
		\sum_{m\geq {\tt T}^*}a_m^*\geq\frac{1}{eM}. \label{eqsumem}
	\end{align}
	Putting \eqref{eqsumem1} and \eqref{eqsumem} together, we find
	\begin{align}
		{\tt T}^*=\lfloor eM-\frac{1}{\theta}+1 \rfloor
	\end{align}
	since  ${\tt T^*}$ is an integer.

	\section{Proof of Theorem~\ref{thm: theta=1, e/2}.}\label{App: proof of Theorem thm: theta=1, e/2}
	
	The proof is organized in three parts:

	\noindent{ Part 1:} Preliminaries.
	In this part, we discuss some notations and preliminaries which will be used in the proof.
	Denote the time just before arrival of new packets by $k^-$ and the time just after arrival of new packets by $k^+$. Since we have assumed that all nodes are identical,  the sequence  $\{h_i(k^+)\}_{k=1}^{\infty}$ is identical (but not  independent) across all $i=1,2,\cdots, M$. From \eqref{eq: w_i},  $\{w_i(k^+)\}_{k=1}^{\infty}$ are i.i.d with respect to $i$. Therefore, the sequence  $\{\delta_i(k^+)\}_{k=1}^{\infty}$ is identical but not independent for all $i=1,2,\cdots, M$.

	Since $\theta=\frac{1}{o(M)}$ and in particular $\theta>\frac{1}{eM}$,  from Lemma~\ref{thm: N', E_m}, $\ell_m^{+*}=\frac{1}{eM}$ for $m=1,2,\cdots, {\tt T}^*$ and $\ell_0^{+*}=\frac{o(M)}{eM}$. From Theorem~\ref{thm: limit threshold}, ${\tt T}^*=\lfloor eM-1/\theta+1\rfloor=\lfloor eM-o(M)+1\rfloor$. Denote $$s_{{\tt T}^*}=\sum_{m=0}^{{\tt T}^*-1}\ell_m^{+*}.$$ In the limit of large $M$, we have
	\begin{align}\label{eq: lim_s_star}
		\lim_{M\rightarrow\infty}s_{{\tt T}^*}=\lim_{M\rightarrow\infty}\frac{o(M)+\lfloor eM-o(M)+1\rfloor-1}{eM}=1.
	\end{align} 
	The expected number of inactive nodes is $Ms_{{\tt T}^*}$ 
	and the expected number of active nodes is $M(1-s_{{\tt T}^*})$.

	\noindent{ Part 2}: Find the expression of NAAoI.
	Let $\alpha_i=\frac{1}{M}$ for $i=1,2,\cdots, M$ in  \eqref{eq: EWSAoI-1}:
	\begin{align*}
		J^{SAT}(M)=\lim_{K\rightarrow\infty}\mathbb{E}[\frac{1}{M^2}\sum_{i=1}^{M}\frac{1}{K}\sum_{k=1}^{K}h_i(k^+)]\triangleq J_1+J_2
	\end{align*}
	where
	\begin{align*}
		J_1=&\lim_{K\rightarrow\infty}\mathbb{E}[\frac{1}{M^2}\sum_{i=1}^{M}\frac{1}{K}\sum_{k=1}^{K}w_i(k^+)]\\
		J_2=&\lim_{K\rightarrow\infty}\mathbb{E}[\frac{1}{M^2}\sum_{i=1}^{M}\frac{1}{K}\sum_{k=1}^{K}\delta_i(k^+)].
	\end{align*}
	In addition, $J_2=J_{21}+J_{22}$,
	where
	\begin{align*}
		J_{21}=&\lim_{K\rightarrow\infty}\mathbb{E}[\frac{1}{K}\sum_{k=1}^{K}\frac{1}{M^2}\sum_{i: \delta_i(k^+)<{\tt T}^*}\delta_i(k^+)]\\
		J_{22}=&\lim_{K\rightarrow\infty}\mathbb{E}[\frac{1}{K}\sum_{k=1}^{K}\frac{1}{M^2}\sum_{i: \delta_i(k^+)\geq{\tt T}^*}\delta_i(k^+)].
	\end{align*}

	\noindent{ Part 3}: Find the limit of NAAoI with respect to $M$.
	First, we consider $J_1$. From \eqref{eq: w_i}, $w_i(k^+)$ has a geometric distribution with parameter $\theta$ \big(with $w_i(k^+)=0,1,2,\cdots$\big) for all $i$. Let $w$ have the same distribution as $w_i(k^+)$. We thus have
	\begin{align}\label{eq: eqn_J_1}
		J_1=\frac{1}{M}\mathbb{E}[w]=\frac{1-\theta}{M\theta}.
	\end{align} 
	Next, we consider $J_{21}$:
	\begin{align}
		&\lim_{M\rightarrow\infty}J_{21}\nonumber\\
		=&\lim_{M\rightarrow\infty}\lim_{K\rightarrow\infty}\mathbb{E}[\frac{1}{K}\sum_{k=1}^{K}\frac{1}{M^2}\sum_{i: \delta_i(k^+)<{\tt T}^*}\delta_i(k^+)]\nonumber\\
		=&\lim_{M\rightarrow\infty}\lim_{K\rightarrow\infty}\mathbb{E}[\frac{1}{K}\sum_{k=1}^{K}\frac{1}{M^2}\sum_{i=1}^{M}\delta_i(k^+)1_{(\delta_i(k^+)<{\tt T}^*)}]\nonumber\\
		=&\lim_{M\rightarrow\infty}\lim_{K\rightarrow\infty}\mathbb{E}[\frac{1}{K}\sum_{k=1}^{K}\frac{1}{M}\sum_{j=1}^{{\tt T}^*-1}\frac{\sum_{i=1}^{M}\delta_i(k^+)1_{(\delta_i(k^+)=j)}}{M}]\nonumber\\
		=&\lim_{M\rightarrow\infty}\lim_{K\rightarrow\infty}\mathbb{E}[\frac{1}{K}\sum_{k=1}^{K}\frac{1}{M}\sum_{j=1}^{{\tt T}^*-1}\frac{j\sum_{i=1}^{M}1_{(\delta_i(k^+)=j)}}{M}]\nonumber\\
		=&\lim_{M\rightarrow\infty}\lim_{K\rightarrow\infty}\frac{1}{K}\sum_{k=1}^{K}\frac{1}{M}\sum_{j=1}^{{\tt T}^*-1}j\frac{\mathbb{E}[\sum_{i=1}^{M}1_{(\delta_i(k^+)=j)}]}{M}.
	\end{align}
	Substituting $\ell_j(k^+)$ for the term $\frac{\mathbb{E}[\sum_{i=1}^{M}1_{(\delta_i(k^+)=j)}]}{M}$, we find
	\begin{align}
		\lim_{M\rightarrow\infty}J_{21}=\lim_{M\rightarrow\infty}\lim_{K\rightarrow\infty}\frac{1}{K}\sum_{k=1}^{K}\frac{1}{M}\sum_{j=1}^{{\tt T}^*-1}j\ell_j(k^+).
	\end{align}
	By stationarity, note that
	\begin{align*}
		\ell_j^{*+}=\lim_{k\rightarrow\infty}\ell_j(k^+).
	\end{align*}
	By the Cesaro Mean Lemma, 
	\begin{align*}
		\lim_{K\rightarrow\infty}\frac{\sum_{k=1}^{K}\ell_j(k^+)}{K}=\ell_j^{*+}.
	\end{align*}
	Therefore,
	\begin{align}
		&\lim_{M\rightarrow\infty}J_{21}\nonumber\\&=\lim_{M\rightarrow\infty}\frac{1}{M}\sum_{j=1}^{{\tt T}^*-1}j \ell_j^{*+}\nonumber\\
		&=\lim_{M\rightarrow\infty}\frac{1}{M}\frac{\tt T^*(\tt T^*-1)}{2} \frac{1}{eM}=\frac{e}{2}
	\end{align}
	where in the last step we have substituted $\ell_j^{*+}=\frac{1}{eM}$ for $j=1,\ldots,{\tt T}^*-1$ (see Lemma \ref{thm: N', E_m}).
	
	Finally, we consider $J_{22}$:  
	\begin{align}
		&\lim_{M\rightarrow\infty}J_{22}\nonumber\\
		=&\lim_{M\rightarrow\infty}\lim_{K\rightarrow\infty}\frac{1}{K}\sum_{k=1}^{K}\frac{1}{M^2}\sum_{i=1}^M\mathbb{E}[\delta_i(k^+)1_{\delta_i(k^+)\geq{\tt T}^*}]\nonumber\\
		=&\lim_{M\rightarrow\infty}\lim_{K\rightarrow\infty}\frac{1}{K}\sum_{k=1}^{K}\frac{1}{M^2}\sum_{i=1}^M\Big(\mathbb{E}[\delta_i(k^+)|\delta_i(k^+)\geq{\tt T}^*]\nonumber\\
		&\hspace{5cm}\times\Pr(\delta_i(k^+)\geq{\tt T}^*)\Big)\nonumber\\
		\stackrel{(a)}{\leq}&\lim_{M\rightarrow\infty}\lim_{K\rightarrow\infty}\frac{1}{K}\sum_{k=1}^{K}\frac{1}{M^2}\sum_{i=1}^McM\Pr(\delta_i(k^+)\geq{\tt T}^*).
	\end{align}
	In the above chain of inequalities, step $(a)$ holds because $\mathbb{E}[\delta_i(k^+)|\delta_i(k^+)\geq{\tt T}^*]=O(M)$. To show this, we first observe that $\delta_i(k)$ is increasing in $k$ until a delivery occurs. Now, note that $\delta_i(k^+)$ is upper bounded by ${\tt T}^*$ plus the peak age at the first delivery after time slot $k$. The peak age is  bounded by $X_i$ (the inter arrival time), which is $o(M)$ on average, plus delay $D_i$, which is constant on average (similar to \eqref{eq: Edelta}). The threshold ${\tt T}^*$ is also $O(M)$. So overall, we have
	$$\mathbb{E}[\delta_i(k^+)|\delta_i(k^+)\geq{\tt T}^*]\leq cM$$
	for some constant $c$. Note that
	\begin{align*}
		\Pr(\delta_i(k^+)=j)=\mathbb{E}[1_{\{\delta_i(k^+)=j\}}]
	\end{align*}
	therefore
	\begin{align*}
		&\frac{1}{M}\sum_{i=1}^{M}\Pr(\delta_i(k^+)\geq{\tt T}^*)\\=&\frac{1}{M}\sum_{i=1}^{M}\sum_{j\geq{\tt T}^*}\Pr(\delta_i(k^+)=j)\\
		=&\sum_{j\geq{\tt T}^*}\frac{1}{M}\sum_{i=1}^{M}\Pr(\delta_i(k^+)=j)\\=&\sum_{j\geq{\tt T}^*}\frac{1}{M}\sum_{i=1}^{M}\mathbb{E}[1_{\{\delta_i(k^+)=j\}}]\\
		=&\sum_{j\geq{\tt T}^*}\ell_j(k^+).
	\end{align*}
	Again, by the Cesaro Mean Lemma, 
	\begin{align}
		&\lim_{M\rightarrow\infty}J_{22}\nonumber\\
		\leq&\lim_{M\rightarrow\infty}\lim_{K\rightarrow\infty}\frac{1}{K}\sum_{k=1}^{K}\frac{1}{M}cM\left(\sum_{j\geq T^*}\ell_j(k^+)\right)\nonumber\\
		=&\lim_{M\rightarrow\infty}\frac{1}{M}cM\left(\sum_{j\geq T^*}\ell^*_j\right)\nonumber\\
		=&\lim_{M\rightarrow\infty}\frac{1}{M}cM(1-s_T^*)\nonumber\\
		=&0.
	\end{align}
	The last equality follows from \eqref{eq: lim_s_star} ($\lim_{M\to\infty}s_{T^*}=1$).
	Finally, summing $J_1$, $J_{21}$ and $J_{22}$, we find
	\begin{align*}
		\lim_{M\rightarrow\infty} \mathbb{E}[J^{SAT}(M)]=\frac{e}{2}.
	\end{align*}

	\section{Proof of Theorem~\ref{thm: general limit threshold}.}\label{App: {thm: general limit threshold}}
	Summing \eqref{eq: limit collection} on both sides, we have
	\begin{align}
		\sum_{m\geq1}a_m^*=\theta.\label{eq:totalsuma}
	\end{align}
	From the definition of the threshold in \eqref{eq: threshold}, ${\tt T}^*$ satisfies
	\begin{align}\label{eq: Appendix_Thresh2}
		{\tt T}^*=\max\left\{t|\sum_{m\geq t}a_m^*\geq \min\big(\theta, \frac{C^{\pi^{(1)}}}{M}\big)\right\}.
	\end{align}
	If $\theta\leq\frac{C^{\pi^{(1)}}}{M}$,  we have ${\tt T}^*=1$ by \eqref{eq:totalsuma}.
	If $\theta>\frac{C^{\pi^{(1)}}}{M}$, however, we have 
	\begin{align}
		\frac{C^{\pi^{(1)}}}{M}\leq \sum_{m\geq {\tt T}^*}a_m^*=&\sum_{m\geq 1}a_m^*-\sum_{m< {\tt T}^*}a_m^*\nonumber\\\stackrel{(a)}{=}&\theta-({\tt T}^*-1)\frac{\theta C^{\pi^{(1)}}}{M}\label{eq:findT}
	\end{align}
	where $(a)$ follows from \eqref{eq:totalsuma} and \eqref{eq: l_m+C3}.
	Using \eqref{eq:findT} and noting that ${\tt T^*}$ is  integer, we find
	\begin{align}
		{\tt T}^*=\left\lfloor \frac{M}{C^{\pi^{(1)}}}-\frac{1}{\theta}+1 \right\rfloor.
	\end{align}

	\section{Proof of Theorem~\ref{thm: theta=1, xi/2}.}\label{App: {thm: theta=1, xi/2}}
	The proof of Theorem~\ref{thm: theta=1, xi/2} is almost exactly the same as that of Theorem~\ref{thm: theta=1, e/2}. 
	After replacing the sum arrival rate of the channel, $e^{-1}$, by $C^{\pi^{(1)}}$, from Part~$1$, Part~$2$ and Part~$3$ in the proof of Theorem~\ref{thm: theta=1, e/2}, we have
	\begin{align}
		J_1=&\frac{1}{M}\mathbb{E}[w]=\frac{1-\theta}{M\theta}\\
		\lim_{M\rightarrow\infty}J_{21}=&\lim_{M\rightarrow\infty}\frac{1}{M}\sum_{j=1}^{{\tt T}^*-1}j\ell_j^{*+}\nonumber\\
		=&\lim_{M\rightarrow\infty}\frac{1}{M}\frac{\tt T^*(\tt T^*-1)}{2} \frac{C^{\pi^{(1)}}}{M}\nonumber\\
		=&\frac{1}{2C^{\pi^{(1)}}}
	\end{align}	
	and
	\begin{align}
		\lim_{M\rightarrow\infty}J_{22}\leq\lim_{M\rightarrow\infty}\lim_{K\rightarrow\infty}\frac{1}{K}\sum_{k=1}^{K}\frac{1}{M^2}\sum_{i=1}^McM\Pr(\delta_i(k^+)>{\tt T}^*).
	\end{align}
	From Part~$3$ in the proof of Theorem~\ref{thm: theta=1, e/2}, we knew that 
	the last inequality holds because $\mathbb{E}[\delta_i(k^+)|\delta_i(k^+)>{\tt T}^*]=O(M)$. This, however,  is not as oblivious here. 
	To show this, we first observe that $\delta_i(k)$ is increasing in $k$ until a delivery occurs. Now, note that $\delta_i(k^+)$ is upper bounded by ${\tt T}^*$ plus the peak age at the first delivery after time slot $k$. The peak age is  bounded by $X_i$ (the inter arrival time), which is $o(M)$ on average, plus delay $D_i$, whose expectation is upper bounded by a constant times $M$ as formulated in Lemma \ref{lem: D_i O(M)} below.   Therefore, by the counterpart of the proof of Theorem~\ref{thm: theta=1, e/2}, we have 
	\begin{align*}
		\lim_{M\rightarrow\infty}J_{22}=0
	\end{align*}
	and
	summing $J_1$, $J_{21}$ and $J_{22}$, 
	\begin{align*}
		\lim_{M\rightarrow\infty}\mathbb{E}[J^{GSAT}(M)]=\frac{1}{2C^{\pi^{(1)}}}.
	\end{align*}

	\begin{lemma}\label{lem: D_i O(M)}
		The expectation of delay, $\mathbb{E}[D_i]$, satisfies $$\mathbb{E}[D_i]\leq c'M$$ where $c'$ is a constant that depends on the employed transmission policy. 
	\end{lemma}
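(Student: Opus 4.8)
The plan is to bound the delay $D_i$ of a \emph{delivered} packet of source $i$ by the length of the time interval that source $i$ spends in the ``active'' state (age‑gain at least ${\tt T}^*$) under GSAT, and then to bound that sojourn time by a Little's‑law argument applied to the population of active nodes. First I would establish a structural fact: any packet that is eventually delivered is active throughout its lifetime in the system. Let such a packet be generated at time $t_g$ and delivered at time $t_g'$. On $[t_g,t_g')$ there is no fresh arrival at source $i$ (otherwise the packet would be replaced) and no delivery from source $i$, so by \eqref{eq: h_i}--\eqref{eq: w_i} both $h_i$ and $w_i$ increase by one every slot and hence $\delta_i(k)=h_i(k)-w_i(k)$ is constant on $[t_g,t_g')$, equal to $h_i(t_g^-)$. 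If this value were strictly below ${\tt T}^*$, source $i$ would remain silent on the whole interval and the packet could never be transmitted, contradicting its delivery; thus $\delta_i\ge{\tt T}^*$ on $[t_g,t_g')$. Consequently the node was already active at $t_g$, so $D_i=t_g'-t_g$ is at most the length of the active sojourn of node $i$ that terminates at $t_g'$.

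Next I would describe the active sub‑system. Because buffers have size $1$, at most $M$ nodes are active in any slot. Using the same recursions together with Lemma~\ref{lem: TSA w_i(k+1)>0} ($w_i(k^-)\ge 1$), $\delta_i$ is nondecreasing between two consecutive deliveries at source $i$ --- it stays constant between arrivals and strictly increases at each arrival --- and drops to $0$ immediately after a delivery. Hence once a node becomes active it remains active until its next successful delivery. The active nodes therefore behave as a queueing system whose ``customers'' are active nodes, whose arrivals are the instants at which a node first crosses the threshold, whose departures coincide with successful deliveries, and in which the number of customers is at most $M$; by the throughput bookkeeping of Section~\ref{sec:extension}, the long‑run departure (delivery) rate of this system is $\min(\theta M,C^{\pi^{(1)}})$.

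Finally I would invoke Little's law on this sub‑system: the mean number of active nodes $\bar N\le M$ equals the delivery rate times the mean sojourn time $\bar W$, so $\bar W\le M/\min(\theta M,C^{\pi^{(1)}})$. Since $\theta=\frac{1}{o(M)}$, for $M$ large we have $\theta M> C^{\pi^{(1)}}$, whence $\bar W\le M/C^{\pi^{(1)}}$; combining with the first step, $\mathbb{E}[D_i]\le\bar W\le c' M$ with $c'=1/C^{\pi^{(1)}}$ (for instance $c'=2/C^{\pi^{(1)}}$ if one wishes to absorb the ``$M$ large'' qualifier), a constant depending only on the transmission technology $\pi^{(1)}$.

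The main obstacle is making the queueing argument rigorous for an \emph{arbitrary} random‑access technology $\pi^{(1)}$ that is characterized only through its maximum sum‑throughput $C^{\pi^{(1)}}$: one needs the existence of a stationary regime for the active sub‑system (positive recurrence, which follows from the boundedness of the number of active nodes together with the well‑behavedness of $\pi^{(1)}$) and the identification of its steady‑state departure rate with $\min(\theta M,C^{\pi^{(1)}})$, which in the body of the paper is used as an approximation. If one prefers not to rely on that identification, it suffices to prove the weaker fact that the per‑slot successful‑delivery probability of a tagged active node is bounded below by a positive constant depending on $\pi^{(1)}$ whenever at most $M$ nodes compete, which again gives $\mathbb{E}[D_i]=O(M)$ by a renewal bound on the sojourn time.
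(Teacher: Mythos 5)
Your proposal is correct in substance and lands on the same bound, but it reaches it by a different route than the paper. The paper's argument is shorter and more elementary: it never mentions the threshold or the active/inactive distinction. It simply observes that the delay of a delivered packet is at most the inter-delivery time, $\mathbb{E}[D_i]\leq\mathbb{E}[I_i]$, and then uses the symmetry of the $M$ sources together with the sum-throughput identity $C^{\pi^{(1)}}(M)=\sum_{i=1}^M 1/\mathbb{E}[I_i]=M/\mathbb{E}[I_i]$ to get $\mathbb{E}[I_i]=M/C^{\pi^{(1)}}(M)\leq M/(C^{\pi^{(1)}}-\epsilon)$ for $M$ large. Your version instead proves a structural fact about GSAT (a delivered packet's source has age-gain at least ${\tt T}^*$ throughout the packet's lifetime, so the delay is dominated by the active sojourn) and then applies Little's law to the population of active nodes, bounding it by $M$ and dividing by the delivery rate. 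The quantitative core is the same in both cases --- at most $M$ sources share a channel of sum throughput $C^{\pi^{(1)}}$, hence per-source waiting scales like $M/C^{\pi^{(1)}}$ --- and both arguments rest on the same unproved identification of the achieved delivery rate with (essentially) $C^{\pi^{(1)}}$, which you flag honestly and which the paper also assumes when it equates $C^{\pi^{(1)}}(M)$ with the realized sum throughput. What the paper's route buys is generality and brevity: it works for any symmetric policy without invoking the threshold mechanics or a stationary regime for the active sub-system. What your route buys is a finer picture of where the delay actually accrues (the active queue) and a natural fallback --- a uniform lower bound on a tagged active node's per-slot delivery probability --- if one wants to avoid the throughput identification altogether. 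Either way the lemma holds with $c'$ on the order of $1/C^{\pi^{(1)}}$.
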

	\begin{proof}
		Recall that
		\begin{align*}
			\lim_{M\rightarrow\infty}C^{\pi^{(1)}}(M)=C^{\pi^{(1)}}.
		\end{align*}
		Denote the inter-delivery time for source $i$ by $I_i$. Thus
		the expected number of received packets from source $i$ from time slot $0$ to $K$ is $\frac{K}{\mathbb{E}[I_i]}$. Since $C^{\pi^{(1)}}(M)$ is the sum throughput, we have
		\begin{align*}
			C^{\pi^{(1)}}(M)=\lim_{K\rightarrow\infty}\frac{\sum_{i=1}^{M}\frac{K}{\mathbb{E}[I_i]}}{K}.
		\end{align*}
		Moreover, all nodes are statistically identical. Therefore,  
		\begin{align*}
			C^{\pi^{(1)}}(M)=\frac{M}{\mathbb{E}[I_i]}
		\end{align*}
		and
		\begin{align*}
			\mathbb{E}[I_i]=\frac{M}{C^{\pi^{(1)}}(M)}.
		\end{align*}
		Note that 
		\begin{align*}
			\mathbb{E}[D_i]\leq\mathbb{E}[I_i]
		\end{align*}
		and for any $\epsilon>0$, there exists a $N_0>0$ such that $C^{\pi^{(1)}}(M)\geq C^{\pi^{(1)}}-\epsilon$ for all $M\geq N_0$. Therefore, 
		\begin{align*}
			\mathbb{E}[D_i]\leq \frac{M}{C^{\pi^{(1)}}-\epsilon}\triangleq c'M.
		\end{align*}
	\end{proof}

	\ifCLASSOPTIONcaptionsoff
	\newpage
	\fi

	
	
	%
	\bibliographystyle{IEEEtran}
	\bibliography{references}

	%
	
	\begin{IEEEbiographynophoto}{Xingran Chen}
		received the BS degree in Statistics from the Central South University, Changsha, China, in 2015. He received the M.A. degree in Applied Mathematics and Computational Science in 2018 from University of Pennsylvania. He is currently working toward the Ph.D. degree in electrical and system engineering at the University of Pennsylvania. His research interests are in the modeling, analysis, and decision making of networked systems.
	\end{IEEEbiographynophoto}
	
	\begin{IEEEbiographynophoto}{Konstantinos Gatsis}
		(S'10--M'16) received the Ph.D. degree in electrical and systems engineering from the University of Pennsylvania, Philadelphia, PA, USA, in 2016. Currently, he is a Departmental Lecturer in the Department of Engineering Science, University of Oxford, UK. His research interests include control and optimization applied to cyber-physical systems.
		Dr. Gatsis received the 2014 O. Hugo Schuck Best Paper Award, the Student Best Paper Award at the 2013 American Control Conference, and was a Best Paper Award Finalist at the 2014 ACM/IEEE International Conference on Cyber-Physical Systems.
	\end{IEEEbiographynophoto}
	
	\begin{IEEEbiographynophoto}{Hamed Hassani}
		Hamed Hassani is currently an assistant professor of Electrical and Systems Engineering department as well as the Computer and Information Systems department, and the Statistics department at the University of Pennsylvania. Prior to that, he was a research fellow at Simons Institute for the Theory of Computing (UC Berkeley) affiliated with the program of Foundations of Machine Learning, and a post-doctoral researcher in the Institute of Machine Learning at ETH Zurich. He received a Ph.D. degree in Computer and Communication Sciences from EPFL, Lausanne. He is the recipient of the 2014 IEEE Information Theory Society Thomas M. Cover Dissertation Award, 2015 IEEE International Symposium on Information Theory Student Paper Award, 2017 Simons-Berkeley Fellowship, 2018 NSF-CRII Research Initiative Award, 2020 Air Force Office of Scientific Research (AFOSR) Young Investigator Award, 2020 National Science Foundation (NSF) CAREER Award, and 2020 Intel Rising Star award. He has recently been selected as the distinguished lecturer of the IEEE Information Theory Society in 2022-2023. 
	\end{IEEEbiographynophoto}
	
	\begin{IEEEbiographynophoto}{Shirin Saeedi Bidokhti}
		is an assistant professor in the Electrical and Systems Engineering Department at the University of Pennsylvania (UPenn). She received her M.Sc. and Ph.D. degrees in Computer and Communication Sciences from the Swiss Federal Institute
		of Technology (EPFL). Prior to joining UPenn, she was a postdoctoral scholar at Stanford University and the Technical University of Munich.
		She has also held short-term visiting positions at ETH Zurich,
		University of California at Los Angeles, and the Pennsylvania State
		University. Her research interests broadly include the design and
		analysis of network strategies that are scalable, practical, and
		efficient for use in Internet of Things (IoT) applications,
		information transfer on networks, as well as data compression
		techniques for big data. She is a recipient of the 2022 IT society Goldsmith
		lecturer award, 2021 NSF-CAREER award, 2019 NSF-CRII Research
		Initiative award and the prospective researcher and advanced postdoctoral fellowships from the Swiss National Science Foundation.
	\end{IEEEbiographynophoto}
	
	

\end{document}